\numberwithin{equation}{section}
\newcommand{\bigslant}[2]{{\left.\raisebox{.2em}{$#1$}\middle/\raisebox{-.2em}{$#2$}\right.}}
\newcommand{\F}[1]{\mathfrak{#1}}
\newcommand{\C}[1]{\mathcal{#1}}
\newcommand{\R}[1]{\mathrm{#1}}
\newcommand{\B}[1]{\mathbb{#1}}
\newcommand{\Q}{Q_{\C{M}}}
\newcommand{\QQ}{Q_{\bar{\C{M}}}}
\newcommand{\dd}{\mathrm{d}}
\def\g{\mathfrak{g}}
\newtheorem{theoreme}{Theorem}[section]
\newtheorem{proposition}{Proposition}[section]
\newtheorem{lemme}{Lemma}[section]
\newtheorem{definition}{Definition}[section]
\theoremstyle{remark}
\newtheorem*{remarque}{Remark}
\newtheorem*{remarquelol}{Remarks}
\begin{document}

\begin{center}
{\LARGE{Hidden Q-structure and Lie 3-algebra \\[1ex]
for non-abelian superconformal models in six dimensions}}\\[8ex]

{\bf Sylvain Lavau$^a$, Henning Samtleben$^b$, Thomas Strobl$^a$}\\[3ex]

$^a$\,{\em Institut Camille Jordan,
Universit\'e Claude Bernard Lyon 1 \\
43 boulevard du 11 novembre 1918, 69622 Villeurbanne cedex,
France}

\vskip 2mm

$^b$\,{\em Universit\'e de Lyon, Laboratoire de Physique, UMR 5672, CNRS et ENS de Lyon,\\
46 all\'ee d'Italie, F-69364 Lyon CEDEX 07, France} \\

\end{center}

\vskip 14mm

\begin{abstract}
We disclose the mathematical structure underlying the gauge field sector of the recently constructed non-abelian superconformal models in six space-time dimensions. This is a coupled system of 1-form, 2-form, and 3-form gauge fields. We show that the algebraic consistency constraints governing this system permit to define a Lie 3-algebra, generalizing the structural Lie algebra of a standard Yang-Mills theory to the setting of a higher bundle. Reformulating the Lie 3-algebra in terms of a nilpotent degree 1 BRST-type operator $Q$, this higher bundle can be compactly described by means of a Q-bundle; its fiber is the shifted tangent of the Q-manifold corresponding to the Lie 3-algebra and its base the odd tangent bundle of space-time equipped with the de Rham differential. The generalized Bianchi identities can then be retrieved concisely from $Q^2=0$, which encode all the essence of the structural identities. Gauge transformations are identified as vertical inner automorphisms of such a bundle, their algebra being determined from a $Q$-derived bracket.
\end{abstract}

\vskip 4mm

\thispagestyle{empty}

\newpage
\tableofcontents

\thispagestyle{empty}
\setcounter{page}{0}
\newpage

%--------------------------------------------------------------Introduction-------------------------------------------------------------%

\section{Introduction}\label{intro}
Higher gauge theories, i.e.\ gauge theories with not just 1-form gauge fields but also gauge fields of higher form degrees, have been increasingly in the focus of research over the last years. On the one hand, they are motivated by string theory, 
where $p$-form gauge potentials are inevitably induced in the low energy effective action. While a single family of 
$p$-form gauge fields with fixed $p>1$ necessarily stays abelian (as one easily sees when trying to construct an invariant action functional),
the general tower of $p$-form gauge fields can implicitly carry a non-abelian character and even transcend standard Lie algebras in this case. Concrete examples of such systems have appeared in compactifications with fluxes~\cite{Louis:2002ny,DallAgata:2005mj,DAuria:2005er} and been identified as the generic gauge structure within the so-called gauged supergravity theories (see e.g.~\cite{deWit:2004nw,Samtleben:2005bp}) where already the bosonic sector features a nontrivial, sufficiently intricate system of structural equations that generalize the usual Jacobi identity of Lie algebras. The general structure of these bosonic $p$-form gauge field hierarchies has further been analyzed in~\cite{deWit:2005hv,deWit:2008ta,deWit:2008gc,Bergshoeff:2009ph,Riccioni:2009xr,Coomans:2010xd,Palmkvist:2011vz,Palmkvist:2013vya,Greitz:2013pua}. The mathematical understanding of such a model and its underlying system of equations is the subject of the present investigation. 

On the mathematical side, higher gauge theories attracted recent attention with the general interest in ``higher structures in mathematics and physics'' (cf., e.g., the recent series of conferences in this field), but also before in the context of gerbes and bundle gerbes. One of the possible approaches consists of a so-called ``categorification'', cf., e.g, \cite{Breen:2001ie,Baez:2002jn,Sati:2008eg}. Ordinary principal bundles with their structure groups and connections on them are generalized (in a non-unique way) to some higher versions, usually indexed by an integer $n$ such that for $n=1$ one obtains the respective original mathematical object. So from this point of view one expects a structural Lie $n$-group or its corresponding Lie $n$-algebra (or $n$-term $L_\infty$-algebra) that would correspond to the set of structural equations present in concrete models like the ones above. It is one of the purposes of the present article to confirm this expectation. We will consider the general system of $p\le3$ forms whose non-abelian gauge structure has been worked out in \cite{Samtleben:2011fj,Samtleben:2012fb} in the context of superconformal field theories in six dimensions. Such systems have been under investigation for their relation to the yet elusive dynamics of multiple M5 branes, see e.g.~\cite{Lambert:2010wm,Singh:2011id,Ho:2011ni,Chu:2012um,Akyol:2012cq,Bonetti:2012st,Saemann:2012uq}.
A similar analysis of the general tensor hierarchy developed in \cite{deWit:2008ta} will be presented in a forthcoming paper \cite{Kotov:xxx}.

An alternative to the categorification perspective is the consideration of so-called Q-bundles (cf., e.g., \cite{Kotov:2007nr}), i.e.\ bundles in the category of Q-manifolds. The terminology arose in the context of the BRST-BV formalism, but turns out to be of interest in the above-mentioned context even before this formulation. An example is the following one: Consider a Lie algebra action on a manifold $M$, $\rho \colon \g \to \Gamma(TM)$. Then the standard BRST charge has the form
\begin{equation}
Q = q^a \rho_a(x) \frac{\partial}{\partial x^i} - \frac{1}{2} q^a q^b C_{ab}^c \frac{\partial}{\partial q^c} \, , \label{eq:Q}
\end{equation}
where $q^a$ are degree +1 odd (i.e.~anti-commuting) coordinates on $\g$ and $C_{ab}^c$ are the corresponding structure constants. It is one of the defining properties of a BRST charge that it is an odd vector field and squares to zero. Here it is a vector field on the graded manifold $M \times \g[1]$, where the additional 1 in the brackets indicates that the coordinates on $\g$ are considered as carrying degree +1. On the other hand, a vector field of the type (\ref{eq:Q}) can also be used to \emph{define} what is called a Lie algebroid, and this even in the most compact form (cf.~\cite{Vaintrob}). In this case, the $C^c_{ab}$ are even permitted to depend on the coordinates $x$ on $M$; the odd coordinates $q^a$ are fiber coordinates of a vector bundle $E \to M$ underlying the Lie algebroid. The case of coordinates such that $C^c_{ab}$ are constants corresponds precisely to the situation of a Lie algebra acting on the base $M$ of the algebroid, which then is called an action Lie algebroid (cf., e.g., \cite{daSilva-Weinstein}). 

In the special case of no $x$ coordinates in \eqref{eq:Q} above, a Lie algebra can be \emph{defined} by $Q=\frac{1}{2} q^a q^b C_{ab}^c \frac{\partial}{\partial q^c}$ squaring to zero, which is easily seen to be equivalent to the structure constants satisfying the Jacobi identity. Correspondingly, higher analogues, so-called $L_\infty$-algebras (or, if there are also $x$-coordinates $L_\infty$-algebroids) can be defined by a similar $Q$ vector field squaring to zero, just that in this more general case there are coordinates of different integer-valued degrees on the graded manifold. 

This formulation usually has advantages over the categorification picture. First of all, in the latter language one usually is confronted with several identities or diagrams to be verified one by one, the more of them, the higher is the $n$. In the Q-language, all the structural equations are concisely encoded in the fact that $Q$ squares to zero. Correspondingly, whenever the structural equations are needed, one will just use $Q^2=0$ in that language. This turns out to be a considerable advantage on the level of the gauge transformations, for example, as we will show in detail. Also a definition of Q-bundles for higher $n$ is easy to give: there are just graded coordinates up to degree $n$ in the fibers. Even the generalization of a connection, an $n$-connection, which corresponds to the whole tower of the gauge fields of different form degrees, is easy to formulate: it is a section of the graded bundle (forgetting about the Q-structures, cf, e.g., \cite{Kotov:2007nr}). 

Secondly, the Q-formulation is usually more general, since in contrast to Lie algebras, a Lie $n$-algebroid may not permit an integration to a Lie $n$-groupoid. This is already the case for $n=1$ (cf.~\cite{Crainic-Fernandes} for the necessary and sufficient conditions for the integration of Lie algebroids). Still, as already the example of the two-dimensional Poisson sigma model \cite{Schaller:1994uj,Schaller:1995xk,Ikeda:2001fq,Ikeda:2002wh} shows, the model is meaningful and interesting \cite{Kontsevich:1997vb,Cattaneo:2000iw,Cattaneo:2001bp} also in the case of a non-integrable target Poisson Lie algebroid. The Poisson sigma model is a Chern-Simons type of theory for a Poisson Lie algebroid (cf.~\cite{Strobl:2004im,Kotov:2007nr} making this more precise) and can serve as a toy model for Yang-Mills theories where the structural Lie algebra is replaced by a Lie algebroid \cite{Strobl:2004im,Mayer:2009wf}, and these theories in turn, can serve as a model for higher gauge theories in general since one has the simplest tower of gauge fields, zero forms $X^i$ and 1-forms $A^a$ interacting non-trivially with one another.\footnote{For completeness we mention that also the categorification perspective will have its advantages: for example, when it comes to a generalization of a holonomy or parallel transport, the integrated version, when assumed to exist, is expected to play a prominent role since already in the ordinary $n=1$ Lie algebra case holonomies related to groups.}

We motivated the appearance of $Q$-structures as possible generalizations of Lie algebras and as an elegant alternative formulation of $L_\infty$ algebras. One has \emph{bundles} in this category since the tangent bundle $T\Sigma$ of a space-time or base manifold $\Sigma$, with fiber-linear coordinates considered of carrying degree +1 again and being odd (this graded manifold is conventionally denoted by $T[1]\Sigma$, becomes a Q-manifold in a canonical way by means of the de Rham differential as a $Q$ and, simultaneously, permits to host all different form degrees of our gauge fields as functions on this graded manifold. It remains to see, why, in a general theory of such a type, one should expect  the existence of a Q-structure on the fibers from a physical and not just a mathematical perspective. Here an observation made in \cite{Gruetzmann-Strobl} comes into the game: In all the known examples of such gauge theories there is some notion of (higher) curvatures or, in a physical language, field strengths such that they satisfy \emph{some} notion of Bianchi identities. Formalising this idea in some appropriate sense, as reviewed in the subsequent section, one can prove that this \emph{implies} the existence of a Q-structure in the fibers of the bundle (cf.~Theorem \ref{thm:QBianchi}) below). 

Since the validity of such generalized Bianchi identities, also holds true in the model of~\cite{Samtleben:2011fj}, the existence of such a Q-structure (and its associated $L_\infty$-algebra) characterizing the theory is guaranteed in principle. On the practical level, it is nevertheless a challenge to show that the concrete structural identities of the model \cite{Samtleben:2011fj} indeed imply a particular vector field $Q$ to square to zero, cf.~Theorem \ref{propositionhenning} below. In the construction of \cite{Samtleben:2011fj}, just as in most physical models, the system of $p$-forms is truncated to the set that actually appears in a Lagrangian formulation of the dynamics (typically including forms up to degree $p\le[D/2]$ for space-time dimension $D$). In particular, the Bianchi identities are shown to hold true (on behalf of a set of specified structural identities) only up to this order in the form degree, truncating the highest order by a specific projection. This will have its correspondence in Proposition \ref{propositiontruncation}, stating that every, in general infinite, $L_\infty$-algebra permits a canonical truncation to any arbitrary level (in particular to the level  coinciding with the highest relevant form degree in the given space-time dimension).

The structure of the paper is as follows: In Section \ref{section2} we review briefly the considerations of \cite{Gruetzmann-Strobl}; we restrict to trivial bundles in particular, since one of the main questions in the present article is the identification of the underlying structural $L_\infty$ algebra (but a generalization to nontrivial bundles is rather immediate following the steps as outlined in \cite{Kotov:2007nr}). In Section \ref{section3} we likewise review  the ingredients needed from \cite{Samtleben:2011fj}: the precise content of the gauge fields, a possible set of field strengths, the gauge transformations and, in particular, the eight structural equations, Eqs.~(\ref{lol}) below,  that are imposed for defining the theory.  In the subsequent section we then review the relation of $Q$-manifolds with $L_\infty$-algebras and prove the above-mentioned result on possible truncations. The main results of the present paper are presented in Section \ref{sec:principal}: we define the Q-structure for the six-dimensional tensor hierarchy, work out the Lie 3-algebra and derive the gauge transformations as vertical inner automorphisms of the corresponding bundle.

\paragraph{Acknowledgements:} We would like to thank Melchior Gr\"utzmann, Alexei Kotov, Jakob Palmkvist and 
Robert Wimmer for interesting discussions.

%--------------------------------------------------------------2eme section-------------------------------------------------------------%

\section{Basics of the  Q-formalism}

This section is mainly a review following the ideas of \cite{Gruetzmann-Strobl}; the lifting to the tangent, that will be important in the end for our application, is inspired by its sequels \cite{Kotov:2007nr} and \cite{Salnikov:2013pwa}.

\label{section2}
%-----------------------1ere subsection--------------------------%

\subsection{The Bianchi identities}

For some $n$ dimensional space-time $\Sigma$, the generalization of gauge theories to higher degrees usually involves a tower of gauge fields, which eventually stops at some degree $m$. It means that we have a set of 1-forms $A^{a}$, 2-forms $B^{I}$, 3-forms $C_{t}$, 4-forms $D_{\lambda}$ etc... up to $m$-forms. We have put the index of the 3-forms and of the 4-forms at the bottom of the letters just as a convention in concordance with the second part of this article. From a physical perspective, ``field strengths'' (or ``generalized curvatures'') are an essential ingredient of gauge theories, and there will be as many field strengths as there are gauge fields. The most general natural ansatz in terms of gauge fields for the first three field strengths is as follows:
\begin{align}%definition field strength (components)
	\C{F}^{a}	&= \dd  A^{a}+\frac{1}{2}C_{bc}^{a}A^{b}\wedge A^{c}-t^{a}_{I}B^{I} \;,\label{eq:Fa}\\
	\C{F}^{I}	&= \dd  B^{I}+\Gamma^{I}_{aJ}A^{a}\wedge B^{J}-\frac{1}{6}H^{I}_{abc}A^{a}\wedge A^{b}\wedge A^{c}-t^{It}C_{t}\;,\nonumber\\
	\begin{split}
	\C{F}^{(4)}_{t}	&= \dd  C_{t}-A^{s}_{at}A^{a}\wedge C_{s}-B_{IJt}B^{I}\wedge B^{J}-D_{abIt}A^{a}\wedge A^{b}\wedge B^{I}\\
				&\hspace{4cm}-E_{abcdt}A^{a}\wedge A^{b}\wedge A^{c}\wedge A^{d}-t^{\lambda}_{t}D_{\lambda} \;.\nonumber
	\end{split}
\end{align}
In fact, one could add  on the right-hand side terms containing derivatives of lower-form degree gauge fields or, equivalently, their previously defined field strengths; this may be seen to not change the subsequent analysis of this section.

{}From now on, we will write the gauge fields collectively as $A^{\alpha}$ and the field strengths correspondingly as $\C{F}^{\alpha}$, for $\alpha=a, I, t,..$; so for example, $A^{I}\equiv B^{I}$ and $A_{t}\equiv C_{t}$. To each $\alpha$, we associate a degree, denoted by $|\alpha|$, which is the order of the gauge field as a differential form, that is: $A^{\alpha}$ is an $|\alpha|$-form, and we write $N_{|\alpha|}$ for the number of $|\alpha|$-forms. We put $N=N_{1}+N_{2}+...+N_{m}$. We now want to cast the Bianchi identities into a very general form. Recall that classically (standard YM-theory), we have $D {F}^{a}=0$, that is $\dd  {F}^{a}=-C^{a}_{bc}A^{b}\wedge{F}^{c}$, i.e the de Rham differential of the field strength yields a term containing a field strength. To generalize this idea, we say that a field strength $\C{F}^{\alpha}$ satisfies the generalized Bianchi identity if $\dd  \C{F}^{\alpha}=\lambda^{\alpha}_{\beta}\C{F}^{\beta}$, where the summation is implied and $\lambda^{\alpha}_{\beta}$ can depend on the fields. To summarize:

\begin{definition}%definition generalized Bianchi identities
\label{def:Bianchi}
Let ${\cal A}$ be the  subalgebra of $\Omega^{\bullet}(\Sigma)$ generated by the differential forms $A^{\alpha}$ and $\dd  A^{\alpha}$ and $\C{I}\subset {\cal A}$ its both-sided ideal generated by the field strengths $\C{F}^{\alpha}$ (for some fixed choice of constants in their definition). Then the field strengths $\C{F}^{\alpha}$ satisfy the \emph{generalized Bianchi identities}, iff for any choice of the $A^\alpha$s this ideal is a differential ideal, i.e.~iff
		\begin{equation}
		\dd   \C{I}\subset\C{I} \, . 
	\end{equation}
\end{definition}
Usually in examples such a weak notion of Bianchi identities holds true and we will assume this to be the case when talking about ``higher gauge theories''. We will proceed now to show that Q-structures are a good formalism to encode them. 

Recall first the definition of a Q-manifold: it is a $\B{Z}$-graded manifold $\C{M}$ equipped with a degree $+1$ vector field $Q$ such that $[Q,Q]\equiv 2Q^{2}=0$. So in a local chart description, $\C{M}$ has coordinates which carry each an integer degree --- in fact, we will consider only so-called NQ-manifolds, where there are no negative coordinate degrees --- and which are glued together 
on overlaps by degree-preserving diffeomorphisms.  $[X,Y]=X\circ Y-(-1)^{|X||Y|}Y\circ X$ is the graded commutator defined on the space of vector fields $\F{X}(\C{M})$. One of the classical examples of a Q-manifold is the shifted tangent bundle equipped with the de Rham differential. The shifted tangent bundle $T[1]\Sigma$ over a smooth manifold $\Sigma$ is the graded vector bundle modelled on $T\Sigma$ in the sense that the local basis vectors $\frac{\partial}{\partial x^{i}}$ are now carrying a negative degree of $-1$. It implies that the fiber-linear coordinate functions on $T[1]\Sigma$, as function from $T[1]\Sigma$ to $\B{R}$, are functions of degree $+1$. Given a local coordinate system $\{x^i\}_{i = 1}^n$ on $\Sigma$, the (locally defined) differential 1-forms $dx^{i}$ thus are such degree 1 functions. The algebra of smooth functions over $T[1]\Sigma$ is defined as the set of formal series in these coordinate functions, glued together correctly on overlaps by degree preserving diffeomeorphisms, so that one arrives at the identification $\C{C}^{\infty}(T[1]\Sigma)\cong\Omega^{\bullet}(\Sigma)$. The de Rham differential then becomes just a vector field, in local coordinates  $\dd  = \dd  x^{i}\frac{\partial}{\partial x^{i}}$, that evidently raises the degree of a ``function'' (differential form) by +1; and certainly it squares to zero, $\dd  ^2=0$. Thus,  $(T[1]\Sigma,d)$ is a Q-manifold.

\begin{remarque}
	When the grading of the Q-manifold is concentrated in positive degrees, we say that it is an \emph{NQ-manifold}. When it does not involve any coordinate function of degree 0, this NQ-manifold can be seen as a graded vector space, what we will call an \emph{NQ-vector space}. Below we will have bundels in the category of Q-manifolds; in all those considered here, the base will be $(T[1]\Sigma,d)$ and the fiber an NQ-vector spaces.
\end{remarque}

To encode the higher gauge theory and the generalized Bianchi identities described above into the Q-formalism, we take an $N$-dimensional graded vector space  $\C{W}$ concentrated in negative degrees. It can be decomposed as a sum of subspaces which contain homogeneous elements excusively: $\C{W}=\C{W}_{-1}\oplus\C{W}_{-2}\oplus...\oplus\C{W}_{-m}$ where each one of the subspaces $\C{W}_{-|\alpha|}$ is spanned by $N_{|\alpha|}$ basis vectors $q_{\alpha}$ of degree $-|\alpha|$. The most general degree $+1$ vector field (not necessarily squaring to zero) can be parametrized as follows:
\begin{equation}%definition general Q
	\begin{aligned} \label{eq:Qansatz}
	Q = (-\frac{1}{2}C_{ab}^{c}q^{a}q^{b}	&+t^{c}_{I}q^{I})\frac{\partial}{\partial q^{c}}
		+(-\Gamma^{I}_{aJ}q^{a}q^{J}+\frac{1}{6}H^{I}_{abc}q^{a}q^{b}q^{c}+t^{It}q_{t})\frac{\partial}{\partial q^{I}}\\
										&+(A^{s}_{at}q^{a}q_{s}+B_{IJt}q^{I}q^{J}+D_{abIt}q^{a}q^{b}q^{I}+E_{abcdt}q^{a}q^{b}q^{c}q^{d}+t^{\lambda}_{t}q_{\lambda})\frac{\partial}{\partial q_{t}}+\dots\;,
	\end{aligned}
\end{equation}
where $\ldots$ denote higher order terms, such as $\frac{\partial}{\partial q_{\lambda}}$, etc. Recall that up to now, $(\C{W},Q)$ is just a graded manifold equipped with a degree $+1$ vector field, and not a Q-manifold yet.

In such a context, we can interpret the gauge field $A^{\alpha}$ as the pull-back of the dual basis vector $q^{\alpha}$ by some degree preserving map $a:T[1]\Sigma\longrightarrow\C{W}$:
\begin{equation}%definition pullback
	\begin{aligned}
		a^{\ast} :	\C{C}^{\infty}(\C{W})	&\longrightarrow\Omega^{\bullet}(\Sigma)\cong \C{C}^{\infty}(T[1]\Sigma)\\
							q^{\alpha}		&\longmapsto A^{\alpha}=a^{\ast}(q^{\alpha})
	\end{aligned}
	\label{defa}
\end{equation}
It is now easy to see that we can recover the general ansatz of a field strength by means of a degree one vector field $Q$ and the above definitions:
\begin{definition}%definition field strength (morphism)
\label{def:field}
	The field strength is a degree $+1$ map depending on the map $a$:
	\begin{equation}
	\begin{aligned}
		\C{F} :	\C{C}^{\infty}(\C{W})	&\longrightarrow\Omega^{\bullet}(\Sigma)\cong \C{C}^{\infty}(T[1]\Sigma)\\
							q^{\alpha}	&\longmapsto \C{F}^{\alpha}=\dd  a^{\ast}(q^{\alpha})-a^{\ast} Q(q^{\alpha})
	\end{aligned}
	\end{equation}
\end{definition}
We recommend the reader to reproduce the general ansatz eqs.~(\ref{eq:Fa}) for the field strengths by means of the general ansatz (\ref{eq:Qansatz}) for the vector field $Q$ (with the general parametrizations matching as given --- note also that $Q(q^\alpha)= Q^\alpha$ if $Q \equiv Q^\alpha \partial_\alpha$). For a general $Q$, squaring to zero or not,  $\C{F}$ is the obstruction for $a$ to be a Q-morphism. 

In general, when starting to construct a general gauge theory, it is more common to focus on the gauge symmetries at the very beginning and simultaneously on the action functional or something like generalized field strengths and ``covariant derivatives'' as elementary objects for later construting an invariant functional. This has the disadvantage that one needs to tune two things simultaneously, the symmetries and the invariant/covariant/equivariant objects like the field strengths, both carrying independently their parameters in a general ansatz. One of the main observations in \cite{Gruetzmann-Strobl} was that even such a general notion of Bianchi identities as given by definition \ref{def:Bianchi} above is already very restrictive and turns out to subsequently restrict the gauge symmetries considerably. Splitting the problem like this in two parts, first the definition of the ``elementary'' objects and, only in a second step regarding the gauge symmetries that remain admissible in this context, turns out to be surprisingly effective.

\begin{theoreme}\label{thm:QBianchi}% thm:QBianchi
\emph{\textbf{[Gr\"utzmann-Strobl 05]}}
Let $\dim \Sigma \equiv n \geq m+2$. Then any definition of field strengths 
(such as eqs.~(\ref{eq:Fa})) satisfies the Bianchi identities in the sense of  
Def.~\ref{def:Bianchi}, \emph{if and only if} the corresponding operator $Q$ (such as (\ref{eq:Qansatz}), using def.~\ref{def:field} for coordination of parameters) squares to zero, $Q^2 = 0$.
\end{theoreme}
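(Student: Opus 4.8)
The plan is to derive a single master identity expressing $\dd\C{F}^\alpha$ in terms of the components of $Q^2$, and then to argue that the residual $Q^2$-term can belong to the ideal $\C{I}$ only if it vanishes identically, the dimension hypothesis being exactly what rules out accidental relations.

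First I would compute $\dd\C{F}^\alpha$ by hand. Writing $A^\alpha=a^*(q^\alpha)$ and $Q^\alpha\deff Q(q^\alpha)$, Definition \ref{def:field} gives $\C{F}^\alpha=\dd A^\alpha-a^*(Q^\alpha)$. Applying $\dd$ and using $\dd^2=0$ kills the first term, leaving $\dd\C{F}^\alpha=-\dd\,a^*(Q^\alpha)$. Since $a^*$ is an algebra morphism and $Q^\alpha$ is a polynomial in the $q^\beta$, one has $a^*(Q^\alpha)=Q^\alpha(A)$, and the graded Leibniz rule yields $\dd\,a^*(Q^\alpha)=a^*\!\big(\partial_\beta Q^\alpha\big)\wedge\dd A^\beta$ up to signs. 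Re-expressing $\dd A^\beta=\C{F}^\beta+a^*(Q^\beta)$ and recognising $\sum_\beta(\partial_\beta Q^\alpha)\,Q^\beta=Q(Q^\alpha)=(Q^2)^\alpha$, I obtain the master identity
\begin{equation}
\dd\C{F}^\alpha=\lambda^\alpha_\beta\,\C{F}^\beta-a^*\!\big((Q^2)^\alpha\big),\qquad \lambda^\alpha_\beta\deff -a^*(\partial_\beta Q^\alpha).
\label{eq:master}
\end{equation}
The graded signs need care but do not affect the structure.

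Next I would read off the equivalence. The first term of \eqref{eq:master} manifestly lies in $\C{I}$, being a multiple of $\C{F}^\beta$, and $\lambda^\alpha_\beta$ is precisely the field-dependent coefficient anticipated in Definition \ref{def:Bianchi}. Because $\dd\C{I}\subset\C{I}$ holds iff $\dd\C{F}^\alpha\in\C{I}$ for every $\alpha$ (an arbitrary element reduces to the generators via the Leibniz rule together with $\C{F}^\beta$-linearity), the Bianchi identities are equivalent to $a^*\!\big((Q^2)^\alpha\big)\in\C{I}$ for all $\alpha$ and all admissible $a$. The implication $Q^2=0\Rightarrow$ Bianchi is then immediate and needs no dimension hypothesis, since $(Q^2)^\alpha=0$ removes the offending term outright. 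The substance is the converse: here I would use that $\C{A}$ is equally freely generated by the $A^\alpha$ together with the $\C{F}^\alpha$ (the change of generators $\dd A^\alpha=\C{F}^\alpha+Q^\alpha(A)$ being invertible), so that $\C{A}/\C{I}$ is the free graded-commutative algebra on the $A^\alpha$ alone. The reduction of $a^*\!\big((Q^2)^\alpha\big)$ modulo $\C{I}$ is the purely field-polynomial $(Q^2)^\alpha(A)$, containing no $\C{F}$'s; hence $a^*\!\big((Q^2)^\alpha\big)\in\C{I}$ iff $(Q^2)^\alpha(A)=0$ in this free algebra for every choice of forms $A^\alpha$. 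Finally I would invoke $n\ge m+2$: the form $(Q^2)^\alpha(A)$ has degree $|\alpha|+2\le m+2\le n$, and in form-degrees not exceeding $n$ the evaluation map sending abstract graded-commutative monomials in the $A^\alpha$ to genuine forms on $\Sigma$ is injective — taking $\Sigma$ locally as $\B{R}^n$ and the $A^\alpha$ generic, distinct monomials give linearly independent forms — so vanishing for all $A$ forces each component $Q^2(q^\alpha)$ to vanish, i.e.\ $Q^2=0$.

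The main obstacle is exactly this last injectivity step: separating the genuine algebraic vanishing of the structure constants assembled in $Q^2$ from an accidental vanishing of $(Q^2)^\alpha(A)$ forced by the finite dimension of $\Sigma$ (products of forms of total degree $>n$ die automatically). The bound $n\ge m+2$ is precisely what provides enough room, up to the critical degree $m+2$, to realise all relevant monomials as independent forms and thereby read off each structure constant; making this realisation explicit is the technical heart of the argument.
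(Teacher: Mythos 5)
Your master identity $\dd\C{F}^{\alpha}=\lambda^{\alpha}_{\beta}\C{F}^{\beta}-a^{\ast}\big((Q^2)^{\alpha}\big)$ is precisely the paper's operator identity $\dd\circ\C{F}=-\C{F}\circ Q-a^{\ast}\circ Q^{2}$ written in components, and your division of labour (the direction $Q^2=0\Rightarrow$ Bianchi needing no dimension hypothesis; the converse being where $n\geq m+2$ enters, because $\deg (Q^2)^\alpha\le m+2\le n$) is the paper's route as well. The genuine gap is in your converse: you claim that since $\{A^{\alpha},\C{F}^{\alpha}\}$ generate $\C{A}$ just as well as $\{A^{\alpha},\dd A^{\alpha}\}$, the quotient $\C{A}/\C{I}$ is the \emph{free} graded-commutative algebra on the $A^{\alpha}$, so that a pure $A$-polynomial lies in $\C{I}$ only if it vanishes. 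This inference is invalid: $\C{A}$ is a subalgebra of actual forms and carries relations that have nothing to do with $\C{I}$, so generation does not give freeness. Concretely, take $\Sigma=\B{R}^{n}$ and all $A^{\alpha}$ with \emph{constant} coefficients; then $\dd A^{\alpha}=0$, hence $\C{F}^{\alpha}=-a^{\ast}(Q^{\alpha})$, and every $a^{\ast}(Q^{\alpha})$ is a pure $A$-polynomial lying in $\C{I}$ which (by your own injectivity lemma, since $\deg Q^{\alpha}\le m+1<n$) is a nonzero form for suitable constants whenever $Q\neq 0$. So ``containing no $\C{F}$'s'' does not protect a polynomial from ideal membership, and step (ii) of your argument breaks down exactly where the content of the theorem sits.

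The gap can be closed, but by an idea your proposal never states: exploit the quantifier over all field configurations \emph{pointwise}. Given any constant forms $\omega^{\alpha}\in\Lambda^{|\alpha|}(\B{R}^{n})^{\ast}$, choose $A^{\alpha}$ whose $1$-jet at a point $p$ is prescribed by $A^{\alpha}(p)=\omega^{\alpha}$ and $\dd A^{\alpha}(p)=Q^{\alpha}(\omega)$ (always possible); then $\C{F}^{\alpha}(p)=0$ for all $\alpha$, so every element of $\C{I}$, being a finite sum $\sum_{\alpha}u_{\alpha}\wedge\C{F}^{\alpha}$, vanishes at $p$. The Bianchi hypothesis together with your master identity then yields $(Q^2)^{\alpha}(\omega)=0$ for \emph{every} $\omega$, and only at this stage does your final injectivity argument (separating monomials by disjoint coordinate blocks and independent scalings, which is where $n\geq m+2$ guarantees enough room) legitimately conclude $Q^{2}=0$. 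With that replacement — pointwise evaluation at configurations whose field strengths vanish at a point, instead of the false freeness of $\C{A}/\C{I}$ — your proof is complete and is essentially the one sketched in the paper.
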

This can be proven rather easily by means of the notions introduced above, using $\dd \circ \C{F}=\dd (\dd \circ a^{\ast}-a^{\ast}\circ Q)=-\dd \circ a^{\ast}\circ{Q}=-\C{F}\circ Q-a^{\ast}\circ Q^{2}$, but still, to our mind, deserves to be called a theorem due to the fact that it is so fundamental for the construction of higher gauge theories. The condition on the dimension comes from the fact that if the highest degree on $\C{V}$ would be $m=n-1$, for some coordinate $q^{\alpha}$, then $Q^{2}(q^{\alpha})$ is well defined as a not necessarily vanishing degree $n+1$ element of $\C{C}^{\infty}(\C{W})$, but its pull back on $T[1]\Sigma$ would vanish anyway due to the dimension of $\Sigma$.

There are situations where the dimensional condition in the above Theorem is violated, like for the case of ordinary gauge theories in two space-time dimensions since then $n=2$ but the presence of 1-form gauge fields implies $m=1$. Certainly, there exist important examples in two space-time dimensions which are still governed by a $Q$ squaring to zero, like ordinary Yang-Mills gauge theories or also the Poisson sigma model. One may argue that starting from $n=3$, i.e.~in particular for all physically relevant space-time dimensions, the condition becomes generically satisfied, since Hodge duality---induced by a metric on $\Sigma$ as present in physically relevant theory---permits to exchange any $p$-form gauge field for an $(n-p)$-form gauge field, so that it is sufficient to deal with gauge fields of a form-degree equalling essentially half the dimension of space-time. We remark, however, that this argument implies some constraints on the field equations for those potential higher form gauge fields since it should be possible that after the exchange for a lower form-degree gauge field they should turn into Bianchi-type equations. Although this is often assumed in physical theories, one does not necessarily need to assume it.

If the condition on $\Sigma$ is satisfied, the Theorem implies that the generalized Bianchi identities automatically induce a Q-structure to $\C{W}$, thus, as we will review in Section \ref{sec:Linfty} below, turning it into a $L_{m}[1]$ algebra, with associated $L_{m}$ algebra $\C{V}$ (we call an $m$-term $L_\infty$-algebra a $L_{m}$-algebra). The  explicit form of the generalized Bianchi identities for the three first field strengths can be also computed easily using $\dd   \C{F}^\alpha = - \C{F}(Q^\alpha)$, which follows as a simple corollary from the proof of the above theorem. One obtains:
\begin{align}%generalized Bianchi identities
	\dd  \C{F}^{a}		&= C_{bc}{}^{a}\C{F}^{b}\wedge A^{c}-t^{a}_{I}\C{F}^{I}\;, \label{eq:BianchiF}\\
	\dd  \C{F}^{I}		&= \Gamma^{I}_{aK}(\C{F}^{a}\wedge B^{K} -A^{a}\wedge \C{F}^{K})-\frac{1}{2}H^{I}_{abc}\C{F}^{a}\wedge A^{b}\wedge A^{c}-t^{It}\C{F}^{(4)}_{t}\;,\nonumber\\
	\begin{split}
	\dd  \C{F}^{(4)}_{t}	&= -A_{at}^{s}(\C{F}^{a}\wedge C_{s}-A^{a}\wedge \C{F}^{(4)}_{s})-D_{abIt}(2\C{F}^{a}\wedge A^{b}\wedge B^{I}+A^{a}\wedge A^{b}\wedge \C{F}^{I})\nonumber\\
					&\hspace{2cm}-2B_{IJt}\C{F}^{I}\wedge B^{J}-4E_{abcdt}\C{F}^{a}\wedge A^{b} \wedge A^{c}\wedge A^{d} -t_{t}^{\lambda}\C{F}^{(5)}_{\lambda}\;.\nonumber
	\end{split}
\end{align}

%-----------------------2eme subsection--------------------------%

\subsection{The gauge transformations}\label{subsection22}
We will discuss here the gauge transformations and why the Q-formalism is very natural to describe them. In analogy with the generalization we have made with the Bianchi identities, we will start from the classical form of a gauge transformation for a 1-form gauge field: 

\begin{equation}
\delta_{\lambda} A^{a}=D\lambda^{a}\equiv\dd \lambda^{a}+C^{a}_{bc}A^{b}\lambda^{c}\;,
\end{equation}
where $\lambda^{a}$ is some function on the space-time $\Sigma$. Having a tower of gauge fields, we can generalize this picture to the following one (generalized at the end of the subsection by terms containing field strengths):
\begin{align}% gauge transformations
	\delta_{\lambda} A^{a}		&=\dd \lambda^{a} -  \bar{C}_{bc}{}^{a}\lambda^{b}A^{c} + \bar{t}^{a}_{I}\lambda^{I}\;,\label{eq:deltaAbar}
	\\
	\delta_{\lambda} B^{I}		&=\dd \lambda^{I} - \bar{\Gamma}^{I}_{aK}(\lambda^{a}B^{K}-A^{a}\wedge \lambda^{K}) + \frac{1}{2}\bar{H}^{I}_{abc}\lambda^{a}A^{b}\wedge A^{c} + \bar{t}^{It}\lambda_{t}\;,\nonumber\\
	\begin{split}
	\delta_{\lambda} C_{t}		&= \dd   \lambda_t+ \bar{A}_{at}^{s}(\lambda^{a} C_{s}-A^{a}\wedge \lambda_{s}) + 2\bar{D}_{abIt}\lambda^{a} A^{b}\wedge B^{I}+\hat{D}_{abIt}A^{a}\wedge A^{b}\wedge \lambda^{I}\nonumber\\
					&\hspace{2cm}+2\bar{B}_{IJt}\lambda^{I}\wedge B^{J}+4\bar{E}_{abcdt}\lambda^{a}\wedge A^{b} \wedge A^{c}\wedge A^{d} + \bar{t}_{t}^{\lambda}\lambda_{\lambda}\;,
	\end{split}
\end{align}
where $\lambda^{\alpha}$ is a $(|\alpha|-1)$-form on $\Sigma$. The barred and hatted coefficients have \emph{a priori} nothing to do with the coefficients of the homological vector field $Q$. But we will see below that in fact they do: requiring an appropriate notion of covariance (cf.~definition \ref{def:covariant} below) forces the barred and hatted quantities to be the same as the plain ones. We generalize the above ansatz slightly by asking that the generic form of a gauge transformations is:
\begin{equation}\label{eq:genericform}% generic form of gauge transformations
	\delta_{\lambda} A^{\alpha}=\dd \lambda^{\beta}W^{\alpha}_{\beta}
+\lambda^{\beta}V^{\alpha}_{\beta} \, ,
\end{equation}
where the coefficients $W^\alpha_\beta$ and $V^{\alpha}_{\beta}$ are functions of the gauge fields $A^\gamma$ in general (but still not of the field strengths). 
So, for example, the second equation in (\ref{eq:deltaAbar}) corresponds to the choice:
\begin{equation}
W^{I}_{K}=\delta^{I}_{K},\quad
V^{I}_{a}=-\bar{\Gamma}^{I}_{aK}B^{K}+ \frac{1}{2}\bar{H}^{I}_{abc}A^{b}\wedge A^{c},\quad
V^{I}_{K}=-\bar{\Gamma}^{I}_{aK}A^{a}, \quad
V^{It}=\bar{t}^{It} \, ,
\end{equation}
but in general we will permit $W^\alpha_\beta$ to be different from the identity $\delta^\alpha_\beta$. Taking the ansatz (\ref{eq:genericform})  for gauge transformations, we now address the question of the conditions on the coefficient parameters such that the field strengths transform ``covariantly''. This notion is not yet clearly defined without a clear algebraic-geometric understanding of the structural equations and so, 
like for the generalized Bianchi identities, we want to give a definition as general as possible so that at least every known physical example is encompassed:
\begin{definition}%definition generalized Bianchi identities
\label{def:covariant}
Let $\C{I}\subset {\cal A}$ be the ideal generated by field strengths $\C{F}^{\alpha}$ as in definition \ref{def:Bianchi}. The field strengths $\C{F}^{\alpha}$ are said to \emph{transform covariantly} under some set of gauge transformations (such as eqs.~(\ref{eq:deltaAbar}) or, more generally, eq.~(\ref{eq:genericform}) for some fixed choice of the parameter functions $W$ and $V$), iff for any choice of the $A^\alpha$s and $\lambda^\alpha$s the ideal is stable, i.e.		\begin{equation}
		\delta \C{I}\subset\C{I} \, . 
	\end{equation}
\end{definition}
Although again this definition is kept very general, it turns out to be surprisingly restrictive. For the following we always assume that the definition of field strengths is such that they satisfy some generalized Bianchi identities in the sense of def.~\ref{def:Bianchi}. 

We will now show that the Q-formalism again helps to determine the conditions on the gauge transformations in a very concise form. First of all, to introduce the gauge parameters $\lambda^{\alpha}$ in the Q-formalism, we have to slightly modify the picture, by extending trivially the range of the map $a$ from $\C{W}$ to $T[1]\Sigma\times\C{W}$, which will be called $\C{M}$. Since $a$ acts as the identity from $T[1]\Sigma$ to itself, the pullback $a^{\ast}$ act as the identity on $\C{C}^{\infty}(T[1]\Sigma)=\Omega^{\bullet}(\Sigma)$, then $a^{\ast}(\lambda^{\alpha})=\lambda^{\alpha}$.

Another important point is that even if we have glued the shifted tangent bundle to $\C{W}$, it does not change the definition of the field strengths and of the generalized Bianchi identities. Indeed, we can turn $\C{M}$ into a Q-manifold as soon as we take the following new homological vector field: $\Q=\dd+Q$, because $\dd^{2}=0$, $\dd\circ Q+Q\circ \dd=0$, and $Q^{2}=0$. Now concerning the Q-formalism, we wonder if there exists a degree zero vertical vector field $U$ on $\C{M}$ such that:
\begin{equation}% gauge transformation Q formalism
	\delta A^{\alpha}=(\delta_{\C{M}} a)^{\ast}(q^{\alpha})=a^{\ast}(U(q^{\alpha}))\label{eq:deltaU}
\end{equation}
and if it does, what are the conditions $U$ has to satisfy to ensure that $\delta\C{I}\subset \C{I}$? Vertical means that it is tangent to $\C{W}$, and one of the main advantages of adding $T[1]\Sigma$ to $\C{W}$ is that now $U$ can naturally depend on the gauge parameters $\lambda^{\alpha}$, which are differential forms on the base manifold $\Sigma$ (as well as on the coordinates $q^{\alpha}$ corresponding to the fields). In particular, one has:
\begin{align}% vertical vector fields
	\F{X}^{\text{vert}}_{0}(\C{M} \to T[1]\Sigma)= \C{C}^{\infty}(\Sigma)	&\otimes\F{X}_{0}(\C{W})\ \oplus\
									\Omega^{1}(\Sigma)\otimes\F{X}_{-1}(\C{W})\ \oplus\ 
									\Omega^{2}(\Sigma)\otimes\F{X}_{-2}(\C{W})
									\nonumber \\
									&\oplus\hspace{0.5cm}...\hspace{0.5cm}\oplus\
									\Omega^{|\text{dim}(\Sigma)|}(\Sigma)\otimes\F{X}_{-|\text{dim}(\Sigma)|}(\C{W})
\end{align}
It is easy to see that the general type of gauge transformations (\ref{eq:genericform}) can be reproduced by means of (\ref{eq:deltaU}) for the choice:
\begin{equation}% explicit form of the vector field U
	U\equiv U(\lambda^{\beta},q^{\gamma})=\left(\dd \lambda^{\beta}W^{\alpha}_{\beta}(q^{\gamma})+\lambda^{\beta}V^{\alpha}_{\beta}(q^{\gamma})\right)\frac{\partial}{\partial q^{\alpha}} \, , \label{eq:U}
\end{equation}
where we indicated the arguments of the coefficient functions $W$ and $V$, which result from the corresponding expressions in (\ref{eq:genericform}) by replacing $A^\alpha$ by $q^\alpha$ everywhere. One then finds in a slight generalization of the results of \cite{Gruetzmann-Strobl}:
\begin{proposition}\label{propositiongauge1}% proposition U is ad_Q closed
Let $n>m$. 
Gauge transformations of the form (\ref{eq:genericform}) induce covariant transformations of the field strengths in the sense of definition \ref{def:covariant} if and only if the vector field $U$ in eq.~(\ref{eq:U}) takes the form $U=[\Q,\lambda^{\beta}W^{\gamma}_{\beta}\frac{\partial}{\partial q^{\gamma}}]$.
\end{proposition}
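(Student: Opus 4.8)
The plan is to track how the field-strength \emph{map} transforms under the gauge variation and then to read off covariance as a purely algebraic condition on $[\Q,U]$. Upon extending $a$ trivially over $T[1]\Sigma$, one may write the field strength uniformly as the degree $+1$ map $\C{F}=\dd\circ a^{\ast}-a^{\ast}\circ\Q$, since $\Q(q^{\alpha})=Q(q^{\alpha})$ and $\C{F}$ annihilates every function pulled back from the base (in particular the $\lambda^{\alpha}$, for which $a^{\ast}(\lambda^{\alpha})=\lambda^{\alpha}$). The gauge transformation is encoded by $\delta a^{\ast}=a^{\ast}\circ U$. Differentiating the defining relation and substituting $\dd\circ a^{\ast}=\C{F}+a^{\ast}\circ\Q$, I would obtain the master formula
\begin{equation*}
\delta\C{F}=\C{F}\circ U+a^{\ast}\circ[\Q,U]\,,
\end{equation*}
where $[\Q,U]=\Q U-U\Q$ because $U$ is even. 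This single identity carries the whole argument.

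The next step is to dispose of the first term. Since $\C{F}$ is a derivation along $a^{\ast}$ that annihilates $\lambda^{\beta}$ and $\dd\lambda^{\beta}$ and sends $q^{\gamma}$ to $\C{F}^{\gamma}$, applying it to $U(q^{\alpha})=\dd\lambda^{\beta}W^{\alpha}_{\beta}(q)+\lambda^{\beta}V^{\alpha}_{\beta}(q)$ produces, by the Leibniz rule, only terms each carrying a factor $\C{F}^{\gamma}$; hence $\C{F}\circ U$ always lands in $\C{I}$, independently of the coefficients $W,V$. Because $\delta$ is a derivation and $\C{I}$ is generated by the $\C{F}^{\alpha}$, covariance in the sense of Definition~\ref{def:covariant} is equivalent to $\delta\C{F}^{\alpha}\in\C{I}$ for every $\alpha$, which by the master formula reduces to the condition $a^{\ast}\big([\Q,U](q^{\alpha})\big)\in\C{I}$ for all $\alpha$.

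I would then establish the two implications. For sufficiency, if $U=[\Q,\mu]$ with $\mu=\lambda^{\beta}W^{\gamma}_{\beta}\frac{\partial}{\partial q^{\gamma}}$, the graded Jacobi identity together with $\Q^{2}=0$ (which holds under the standing Bianchi assumption, by Theorem~\ref{thm:QBianchi}) gives $[\Q,U]=[\Q,[\Q,\mu]]=\tfrac{1}{2}[[\Q,\Q],\mu]=0$, so $a^{\ast}([\Q,U]q^{\alpha})=0\in\C{I}$. For necessity I would note that $[\Q,U](q^{\alpha})$, pulled back, is a polynomial in the $A^{\beta}$ (and in $\lambda,\dd\lambda$) containing no $\dd A^{\beta}$; the algebra homomorphism fixing $A^{\beta},\lambda,\dd\lambda$ and sending $\dd A^{\beta}\mapsto \dd A^{\beta}-\C{F}^{\beta}$ kills $\C{I}$ yet fixes this expression, so membership in $\C{I}$ forces it to vanish. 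Since $[\Q,U](q^{\alpha})$ has form-degree $|\alpha|+1\le m+1\le n$, the hypothesis $n>m$ guarantees no accidental vanishing from exceeding the top degree, so $[\Q,U](q^{\alpha})=0$ universally for all $\alpha$, i.e.\ $[\Q,U]=0$. Finally, setting $R:=U-[\Q,\mu]$, the $\dd\lambda$-parts of $U$ and of $[\Q,\mu]$ coincide (both equal $W^{\alpha}_{\beta}$), so $R$ depends on $\lambda$ alone; from $[\Q,R]=0$ the $\dd\lambda$-sector $[\dd,R]=0$ forces the remaining coefficients to vanish, whence $R=0$ and $U=[\Q,\mu]$.

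The main obstacle I anticipate is the necessity direction, specifically the passage from $a^{\ast}([\Q,U]q^{\alpha})\in\C{I}$ to $[\Q,U]=0$: it requires treating the $A^{\alpha},\dd A^{\alpha},\lambda^{\alpha}$ as free generators (the ``for any choice'' clause of Definition~\ref{def:covariant}), recognizing that an element free of $\dd A$ cannot lie nontrivially in $\C{I}=\langle\C{F}^{\alpha}\rangle$, and invoking $n>m$ to rule out spurious cancellations caused by the finite dimension of $\Sigma$ --- exactly the mechanism behind the dimensional hypothesis of Theorem~\ref{thm:QBianchi}, here needed only one degree above the top form. The splitting-off of the $\dd\lambda$-part, which upgrades the bare statement $[\Q,U]=0$ into the sharper $U=[\Q,\mu]$ with the \emph{same} $W$, is then a short bookkeeping step.
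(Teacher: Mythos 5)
Your proposal is correct and takes essentially the same route as the paper's proof: the identical master formula $\delta_{\lambda}\C{F}^{\alpha}=\C{F}(U(q^{\alpha}))+a^{\ast}([\Q,U](q^{\alpha}))$, the same use of the hypothesis $n>m$ to promote ideal-membership of the $\dd A$-free second term to the vector-field equation $[\Q,U]=0$, and the same separation into $\lambda$- and $\dd\lambda$-sectors (your $R:=U-[\Q,\mu]$ bookkeeping is the paper's ``locally constant $\lambda$'' trick in different clothing) to extract $V^{\alpha}_{\beta}=[W^{\gamma}_{\beta}\frac{\partial}{\partial q^{\gamma}},\Q]^{\alpha}$ and hence $U=[\Q,\lambda^{\beta}W^{\gamma}_{\beta}\frac{\partial}{\partial q^{\gamma}}]$, with your explicit sufficiency paragraph and substitution-homomorphism argument merely spelling out what the paper leaves implicit. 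The only caveat is that your appeal to Theorem~\ref{thm:QBianchi} to justify $\Q^{2}=0$ strictly requires $n\geq m+2$ while the proposition assumes only $n>m$; since the paper's own proof invokes $\mathrm{ad}_{\Q}^{2}=0$ under the same hypothesis, this is a shared imprecision rather than a gap in your argument.
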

\begin{remarque}
	This implies that the vector field $U$ is $ad_{\Q}$-exact. ($ad_{\Q}$ is a differential on the space of vector fields due to $Q^2=0$). The equality on $U$ is equivalent to $V^{\alpha}_{\beta}=[W^{\gamma}_{\beta}\frac{\partial}{\partial q^{\gamma}},Q]^{\alpha}$, or, if we permit the functions $W$ and $V$ to also depend explicitly on $\Sigma$, to $V^{\alpha}_{\beta}=[W^{\gamma}_{\beta}\frac{\partial}{\partial q^{\gamma}},Q+\dd]^{\alpha}$.
\end{remarque}
\begin{proof}% demonstration
The gauge variation of the field strength $\C{F}^{\alpha}$ is:
\begin{equation}
	\delta_{\lambda}\C{F}^{\alpha}=\left(\dd \circ (a^{\ast} \circ U)-(a^{\ast}\circ U)\circ \Q\right) (q^{\alpha})=\C{F}(U(q^{\alpha}))+(a^{\ast}\circ[\Q,U])(q^{\alpha})\, .
\end{equation}
Assuming that the dimension of $\Sigma$ is at least one bigger than the highest degree $m$ of the graded coordinates (so that the second term does not vanish already by degree reasons),  necessarily $U$ should be $ad_{\Q}$ closed:
\begin{equation}
	0= [\Q,U] \equiv [\dd + Q,U] = \dd\lambda^{\beta} \left(V^{\alpha}_{\beta}\frac{\partial}{\partial q^\alpha} -[W^{\gamma}_{\beta}\frac{\partial}{\partial q^{\gamma}},\Q]\right)-\lambda^{\beta} [V^{\gamma}_{\beta}\frac{\partial}{\partial q^{\gamma}},\Q]
\end{equation}
from which we conclude first that the last term has to vanish by itself (choose locally constant gauge parameters $\lambda^{\beta}$ on $\Sigma$). This in turn implies that the first term has to vanish by itself as well, or  $V^{\alpha}_{\beta}=[W^{\gamma}_{\beta}\frac{\partial}{\partial q^{\gamma}},\Q]^{\alpha}=[W^{\gamma}_{\beta}\frac{\partial}{\partial q^{\gamma}},Q]^{\alpha}$, which then implies the vanishing of the second term as well by $ad_{Q_{\C{M}}}^2=0$ and is also easily seen to imply the statement on $U$.
\end{proof}

We see that the gauge transformations given in Eqs.~(\ref{eq:deltaAbar}), corresponding to $W^{\alpha}_{\beta}=\delta^{\alpha}_{\beta}$, preserve the differential ideal $\C{I}$ if and only if $\delta_{\lambda} A^{\alpha}=a^{\ast}([\lambda^{\beta}\frac{\partial}{\partial q^{\beta}},\Q](q^{\alpha}))$. This in turn implies that, as anticipated already by the notation, the barred/hatted coefficients are the same as the unbarred/unhatted coefficients of the homological vector field $Q$. From now on $W_{\beta}: = W_{\beta}^{\gamma}\frac{\partial}{\partial q^{\gamma}}$ and we define the vertical degree $-1$ vector field $\epsilon := \lambda^{\beta}W_{\beta}\in\F{X}^{\text{vert}}_{-1}(\C{M})$. With this notation, the system of gauge transformations can be written compactly as:
\begin{equation}% gauge transformation [Q,\epsilon]
	\delta_{\lambda} A^{\alpha}=a^{\ast}\big([\Q,\epsilon](q^{\alpha})\big)
	\;. \label{deltaAalpha}
\end{equation}

The Q-formalism is very convenient to compute the commutator of two gauge transformations. In fact, using  $\Q^{2}=0$, we obtain
\begin{equation}% commutator gauge transformations Q-formalism
	[\delta_{\lambda},\delta_{\lambda'}] A^{\alpha}=a^{\ast}\big(\big[[\Q,\epsilon],[\Q,\epsilon']\big](q^{\alpha})\big)=a^{\ast}\big([\Q,\widehat{\epsilon}](q^{\alpha})\big) \, ,  \label{eq:comm}
\end{equation}
where the new gauge parameter $\widehat{\epsilon} \equiv \widehat{\lambda}^\alpha(\lambda,\lambda') W_\alpha$ can be obtained as a derived bracket 
$\widehat{\epsilon} = [\epsilon,\epsilon']_{\Q}\equiv \big[[\Q,\epsilon],\epsilon'\big]$  on $\F{X}^{\text{vert}}_{-1}(\C{M})$. Note, however, that the new degree $-1$ vector field $\widehat{\epsilon}$ parametrizing the gauge symmetries is defined only up to  $\mathrm{ad}_{\Q}$-closed vector fields of degree $-2$. This is important since the derived bracket is in general \emph{not} antisymmetric per se:
\begin{equation}% antisymmetry of the derived bracket
	[\epsilon,\epsilon']_{\Q}=-[\epsilon',\epsilon]_{\Q}+\big[\Q,[\epsilon,\epsilon']\big] \, ,
	\label{ex}
\end{equation}
but only up to an $\mathrm{ad}_{\Q}$-exact term. Thus, although $(\F{X}^{\text{vert}}_{-1}(\C{M}),[\cdot,\cdot]_{\Q})$ is not a Lie algebra in general, the quotient $\F{X}$ of $\F{X}^{\text{vert}}_{-1}(\C{M})$ by the subspace of $\mathrm{ad}_{\Q}$-closed vector fields is.
For determining (a representative of the equivalence class of) the new parameter $\widehat{\lambda}$ as a function of the two parameters $\lambda$ and $\lambda'$, we can thus restrict to the antisymmetric part of the derived bracket: $\widehat{\epsilon}\equiv \widehat{\lambda}^{\beta}W_{\beta}:=[\epsilon,\epsilon']^{A}_{\Q}$.

For a concrete example, let us take the gauge transformations generated by $\epsilon=\lambda^{\beta}\frac{\partial}{\partial q^{\beta}}$\,,
\begin{align}% gauge transformations
	 \label{eq:deltaA}\delta_{\lambda} A^{a}		&=\dd \lambda^{a} -  C_{bc}{}^{a}\lambda^{b}A^{c} + t^{a}_{I}\lambda^{I}\;,\\
	\delta_{\lambda} B^{I}		&=\dd \lambda^{I} - \Gamma^{I}_{aK}(\lambda^{a}B^{K}-A^{a}\wedge \lambda^{K}) + \frac{1}{2}H^{I}_{abc}\lambda^{a}A^{b}\wedge A^{c} + t^{It}\lambda_{t}\;,\nonumber\\
	\begin{split}
	\delta_{\lambda} C_{t}		&=\dd\lambda_{t} + A_{at}^{s}(\lambda^{a} C_{s}-A^{a}\wedge \lambda_{s}) + D_{abIt}(2\lambda^{a} A^{b}\wedge B^{I}+A^{a}\wedge A^{b}\wedge \lambda^{I})\nonumber\\
					&\hspace{2cm}+2B_{IJt}\lambda^{I}\wedge B^{J}+4E_{abcdt}\lambda^{a}\wedge A^{b} \wedge A^{c}\wedge A^{d} + t_{t}^{\lambda}\lambda_{\lambda}\;,\nonumber
	\end{split}
\end{align}
yielding
\begin{align}\label{hutepsilon}
% [\epsilon,\epsilon']_Q
	\widehat{\epsilon}^{a}=[\epsilon,\epsilon']_{\Q}(q^{a})	&=	C_{bc}^{a}\lambda^{b}\lambda'^{c}\\
	\widehat{\epsilon}^{I}=[\epsilon,\epsilon']_{\Q}(q^{I})	&=	\Gamma^{I}_{aK}(\lambda^{a}\lambda'^{K}-\lambda'^{a}\lambda^{K})-H^{I}_{abc}\lambda^{a}\lambda'^{b}q^{c}\;,\nonumber\\
	\begin{split}
	\widehat{\epsilon}_{t}=[\epsilon,\epsilon']_{\Q}(q_{t})	&=	-A_{at}^{s}(\lambda^{a}\lambda'_{s}-\lambda'^{a}\lambda_{s})+2B_{IJt}\lambda^{I} \lambda'^{J}
	-12E_{abcdt}\lambda^{a}\lambda'^{b} q^{c}  q^{d}\;\nonumber\\
								&\hspace{3cm}+2D_{abIt}(\lambda^{a} q^{b}  \lambda'^{I}-\lambda'^{a}  q^{b}  \lambda^{I}-\lambda^{a} \lambda'^{b}  q^{I})\;.\nonumber
	\end{split}
\end{align}
In this example, the derived bracket turns out to already be antisymmetric, since evidently $[\epsilon,\epsilon']=0$ as these vector-fields are ``constant'' along the fibers. Thus, we can directly read off $\widehat{\lambda}^a$, $\widehat{\lambda}^I$, and $\widehat{\lambda}_t$ from the three lines above, as the pullback by $a^{\ast}$ of the components of the vector field $\widehat{\epsilon}$, $\widehat{\lambda}^{\beta}=a^{\ast}(\widehat{\epsilon}^{\beta})$. This calculation is much shorter than calculating the commutator directly using the definition (\ref{eq:deltaA}) and the identities satisfied by the coefficients that are encoded into $Q^2=0$ in the derived-bracket calculation. 

Following \cite{Bojowald:2004wu,Gruetzmann-Strobl}, we can easily extract from this the commutator of two gauge transformations, even if they do \emph{not} close. This works as follows: Using Equation \eqref{eq:comm} as well as $\C{F} \equiv  \dd \circ a^* - a^* \circ Q_{\C{M}}$, we obtain ($Q_{\C{M}}(q^\alpha)=Q(q^\alpha)\equiv Q^\alpha$)
\begin{equation} \label{ll'}
[\delta_{\lambda},\delta_{\lambda'}] A^{\alpha}=a^{\ast}\big([\Q,\widehat{\epsilon}](q^{\alpha})\big) =\dd a^{\ast}(\widehat{\epsilon}^{\alpha})-\C{F}(\widehat{\epsilon}^{\alpha})+a^{\ast}(\widehat{\epsilon}(Q^{\alpha}))=\delta_{\widehat{\lambda}}A^{\alpha}-\C{F}(\widehat{\epsilon}^{\alpha}) \, . 
\end{equation}
While the original gauge parameters $\epsilon^\alpha = \lambda^\alpha$ depend only on coordinates of $T[1]\Sigma$, the new ones depend also explicitly on the coordinates $q^\gamma$ (before the pullback by $a$). This has the following effect: A function on the total bundle $\C{M} =T[1]\Sigma \times \C{W}$ that comes from the base, i.e.~that only depends on coordinates of $T[1]\Sigma$, lies in the kernel of the operator $\C{F}$. Correspondingly, one has $\C{F}(\epsilon^\alpha) = \C{F}(\epsilon'^\alpha) = 0$, which is essential in identifying the gauge transformations as written in Equations \eqref{eq:deltaA} with Equation \eqref{deltaAalpha} (and likewise so for $\delta_{\lambda'}A^\alpha$). However, 
$\C{F}(\widehat{\epsilon}^\alpha) \equiv \C{F}^\beta \wedge a^*\left(\partial_\beta (\widehat{\epsilon}^{\alpha})\right)$ does no more vanish, at least for those components 
of $\widehat{\epsilon}^{\alpha}$ which depend on $q^\gamma$, like the second and third line of Equation \eqref{eq:deltaA}. From \eqref{ll'} we then obtain directly:
\begin{eqnarray}
[\delta_{\lambda},\delta_{\lambda'}] A^a &=& \delta_{\widehat{\lambda}}A^a \label{offen} \\ {}
[\delta_{\lambda},\delta_{\lambda'}] B^I &=& \delta_{\widehat{\lambda}}B^I +H^{I}_{abc}\lambda^{a}\lambda'^{b}\C{F}^{c}
\nonumber \\ {} [\delta_{\lambda},\delta_{\lambda'}] C_t &=& \delta_{\widehat{\lambda}}C_t + 24 E_{abcdt}\lambda^{a}\lambda'^{b}\C{F}^{c}A^{d}+ 2D_{abIt}(\lambda'^{a}  \C{F}^{b}\wedge \lambda^{I}-\lambda^{a} \C{F}^{b}  \wedge \lambda'^{I}+\lambda^{a} \lambda'^{b}  \C{F}^{I}) \, . \nonumber
\end{eqnarray}

\smallskip

The field strengths  of definition \ref{def:field} for some map $a\colon T[1]\Sigma\longrightarrow\C{W}$ encoding the tower of gauge fields can be equivalently described by the map $f \colon T[1]\Sigma\longrightarrow T[1]\C{W}$ covering $a$. Denote by $\{q^{\alpha},\bar\dd  q^{\alpha}\}$ coordinates on $T[1]\C{W}$, where $\bar \dd$ is the de Rham differential on $\C{W}$, then
\begin{equation} \label{eq:fdef}
f^{\ast}(q^{\alpha}) := A^\alpha \equiv a^*(q^\alpha)\, , \quad
f^{\ast}(\bar\dd  q^{\alpha}):= \C{F}^{\alpha} \equiv \C{F}(q^\alpha) \, . 
\end{equation}
Equipping $T[1]\C{W}$ with the homological vector field 
$Q' = \bar{\dd}  + \C{L}_{Q}$, $(T[1]\C{W},Q')$ is a Q-manifold, and the map $f$ turns out to be a chain map or Q-morphism for any choice of $a$ \cite{Kotov:2007nr}: 
\begin{equation}% Q-morphism
\label{eq:Qmorph}
	\dd \circ  f^{\ast}=f^{\ast}\circ Q' \, .
\end{equation} 
In the above, $\C{L}_{X}$ denotes the Lie derivative along $X$ on $\C{V}$: 
\begin{equation}% definition Lie derivative
	\C{L}_{X}=[\iota_{X},\bar\dd ] \equiv \iota_{X}\circ\bar\dd +(-1)^{|X|}\bar\dd \circ\iota_{X}   \, .   
\end{equation}
Introducing the map $f$ covering $a$ gives us further freedom in parametrizing the gauge transformations. In particular, it is advantageous in the context of $\C{F}^\alpha$-dependent contributions in the gauge transformations of the gauge fields $A^\beta$. 

We first again turn to the picture with a section in a trivial bundle,  $\bar{\C{M}}=T[1]\Sigma\times T[1]\C{W}\to T[1]\Sigma$, where the total space is equipped with the Q-structure
\begin{equation}% definition lifted Q_M
	\QQ = \dd+Q' \, .
\end{equation}
As before we do not change the notation, but denote the section $f \colon T[1]\Sigma \to \bar{\C{M}}$ by the same letter as the corresponding map to the fiber. 

In this language, the previous gauge transformations (\ref{deltaAalpha}) can be easily reproduced by means of 
the use of the Lie derivative of $\epsilon$ \cite{Kotov:2007nr}:
\begin{align}% gauge transformation \delta^(0)=f^\ast
	f^{\ast}([\QQ,\C{L}_{\epsilon}](q^{\alpha}))
				&=f^{\ast}(\QQ\circ\C{L}_{\epsilon}(q^{\alpha}))+f^{\ast}(\C{L}_{\epsilon}(Q^{\alpha}))+f^{\ast}(\C{L}_{\epsilon}(\bar{\dd}q^{\alpha}))\label{eq:fastaast}\\
				&=\dd f^{\ast}(\epsilon(q^{\alpha}))+f^{\ast}(\epsilon(Q^{\alpha}))+f^{\ast}(-\bar{\dd}(\epsilon^{\alpha}))\nonumber\\
				&=\dd a^{\ast}(\epsilon(q^{\alpha}))+a^{\ast}(\epsilon(Q^{\alpha}))-\C{F}(\epsilon^{\alpha})\nonumber\\
				&=a^{\ast}([Q_{\C{M}},\epsilon](q^{\alpha}))\, ,
\end{align}
which proves the assertion. Here in the second equality we have used that $f^*$ is a chain map, in the third the defining equations (\ref{eq:fdef}), and in the last equality that for the bundle, $\C{F}= \dd \circ a^* - a^* \circ Q_{\C{M}}$, where $Q_{\C{M}} = \dd+Q$, $\dd$ denoting the de Rham differential on the base and thus acting trivially on $q^\alpha$ (but not on $\epsilon^\alpha$ since this is a function on the total bundle). So the concise formula (\ref{deltaAalpha}) is still valid with $f^{\ast}$:
\begin{equation}\label{deltaAalpha2}% gauge transformation \delta^(0)=f^\ast
	\delta A^{\alpha}=f^{\ast}([\QQ,\C{L}_{\epsilon}](q^{\alpha}))
\end{equation}

But now we are not restricted to vertical vector fields of degree $-1$ that can be written as Lie derivatives, we can permit more general such vector fields, in particular such that they also depend on $\bar{\dd}q^{\alpha}$ producing $\C{F}$-dependent coefficients. To distinguish these more general gauge transformations from the previous ones, we denote the parameters of those gauge transformations by $\Lambda^\alpha$ and add a bar on the symbol for the infinitesimal variation, $\delta_\Lambda = \bar{\delta}$, as well as over the degree $-1$ vector field parametrizing it on the bundle (according with the notation for the transition from $\C{M}$ to $\bar{\C{M}}$). This leads us to
\begin{theoreme}\label{thm:gauge}% thm:QBianchi
Let $\dim \Sigma \equiv n \geq m+1$ and consider gauge transformation of the form 
\begin{equation}\label{eq:genericformtilde}% generic form of gauge transformations
	\bar{\delta} A^{\alpha}=\dd \Lambda^{\beta}\bar{W}^{\alpha}_{\beta}
+\Lambda^{\beta}\bar{V}^{\alpha}_{\beta} \, ,
\end{equation}
where the coefficients $\bar{W}^{\alpha}_{\beta}$ and $\bar{V}^{\alpha}_{\beta}$ are arbitrary functions of the gauge fields $A^\gamma$  \emph{and} field strengths $\C{F}^\gamma$ (while, by definition, $\Lambda^\alpha \in \Omega^\bullet(\Sigma)$ are field-independent). Let
 $\bar{w}^{\alpha}_{\beta} = \bar{w}^{\alpha}_{\beta}(q,\bar{\dd}q)$ and $\bar{v}^{\alpha}_{\beta}= \bar{v}^{\alpha}_{\beta}(q,\bar{\dd}q)$ be the associated functions on $\bar{\C{M}}$ and ${w}^{\alpha}_{\beta}$ and ${v}^{\alpha}_{\beta}$, respectively, their evaluations on the zero section of $T[1]\C{V}$, i.e., e.g., ${w}^{\alpha}_{\beta}(q) = \bar{w}^{\alpha}_{\beta}(q,0)$.
Then one has:

Gauge transformations of the form (\ref{eq:genericformtilde}) induce covariant transformations of the field strengths in the sense of definition \ref{def:covariant} if and only if 
$v^{\alpha}_{\beta}=[w^{\gamma}_{\beta}\frac{\partial}{\partial q^{\gamma}},Q+\dd]^{\alpha}$. Moreover, the gauge transformations can be generated by means of 
\begin{equation}\label{eq:tildedelta}
\bar\delta A^{\alpha}=f^{\ast}([\QQ,\bar{\epsilon}](q^{\alpha}))
\end{equation}
for the vector field
\begin{equation}\label{eq:tildeepsilon}
\bar{\epsilon} = \C{L}_{\epsilon}+\left(\Lambda^{\beta}(\bar{v}_{\beta}^{\alpha}-v_{\beta}^{\alpha})
+\dd\Lambda^{\beta}(\bar{w}_{\beta}^{\alpha}-w_{\beta}^{\alpha})\right)\frac{\partial}{\partial \bar{\dd}q^{\alpha}}
\end{equation}
where $\epsilon=\Lambda^{\beta}w^{\gamma}_{\beta}\dfrac{\partial}{\partial q^{\gamma}}$.
\end{theoreme}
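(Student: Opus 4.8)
The plan is to reduce the ``if and only if'' part to Proposition~\ref{propositiongauge1}, and then to establish the generating formula \eqref{eq:tildedelta}--\eqref{eq:tildeepsilon} by a direct graded-commutator computation on $\bar{\C{M}}$.

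First I would isolate the part of the transformation that survives on the zero section $\{\bar{\dd}q=0\}$. Since $A^\gamma=f^\ast(q^\gamma)$ and $\C{F}^\gamma=f^\ast(\bar{\dd}q^\gamma)$, the differences $\bar w^\alpha_\beta-w^\alpha_\beta$ and $\bar v^\alpha_\beta-v^\alpha_\beta$ vanish at $\bar{\dd}q=0$ and hence lie in the ideal of $\C{C}^\infty(\bar{\C{M}})$ generated by the $\bar{\dd}q^\gamma$; applying $f^\ast$ shows $\bar W^\alpha_\beta-W^\alpha_\beta$ and $\bar V^\alpha_\beta-V^\alpha_\beta$ lie in $\C{I}$. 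Writing $\delta A^\alpha:=\dd\Lambda^\beta W^\alpha_\beta+\Lambda^\beta V^\alpha_\beta$ for the $\C{F}$-independent transformation and $\delta':=\bar\delta-\delta$, I therefore have $\delta'A^\alpha\in\C{I}$. The variation genuinely induced on the field strength is obtained by varying $A$ in $\C{F}^\alpha=\dd A^\alpha-a^\ast(Q^\alpha)$, so it is linear in $\bar\delta A$; splitting $\bar\delta=\delta+\delta'$, the $\delta'$-contribution is built from $\dd(\delta'A^\alpha)$ and from $\delta'A^\gamma$ multiplied by coefficients $a^\ast(\partial_{q^\gamma}Q^\alpha)$, both of which lie in $\C{I}$ because $\C{I}$ is a \emph{differential} ideal by the assumed Bianchi identities (Def.~\ref{def:Bianchi}). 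Since $\bar\delta$ is a derivation and $\C{I}$ is generated by the $\C{F}^\alpha$, one has $\bar\delta\C{I}\subset\C{I}$ iff $\bar\delta\C{F}^\alpha\in\C{I}$, and the previous step shows this holds iff $\delta\C{F}^\alpha\in\C{I}$. Thus $\bar\delta$ is covariant iff its $\C{F}$-independent part $\delta$ is, and Proposition~\ref{propositiongauge1} (applicable since $n\geq m+1$) yields exactly $v^\alpha_\beta=[w^\gamma_\beta\frac{\partial}{\partial q^\gamma},Q+\dd]^\alpha$.

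For the generating formula I would assume this condition and compute $f^\ast([\QQ,\bar\epsilon](q^\alpha))$ term by term using the splitting $\bar\epsilon=\C{L}_\epsilon+\Xi$ with $\Xi:=\big(\Lambda^\beta(\bar v^\alpha_\beta-v^\alpha_\beta)+\dd\Lambda^\beta(\bar w^\alpha_\beta-w^\alpha_\beta)\big)\frac{\partial}{\partial\bar{\dd}q^\alpha}$. For the Lie-derivative piece the chain of equalities \eqref{eq:fastaast} already gives $f^\ast([\QQ,\C{L}_\epsilon](q^\alpha))=a^\ast([\Q,\epsilon](q^\alpha))$, which reproduces $\delta A^\alpha$ with $W$-coefficient $f^\ast(w^\alpha_\beta)$ and $V$-coefficient $f^\ast([w^\gamma_\beta\frac{\partial}{\partial q^\gamma},Q+\dd]^\alpha)=f^\ast(v^\alpha_\beta)=V^\alpha_\beta$, the last equality being precisely the covariance condition. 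For the remaining piece I use that $\Xi$ points only in the $\partial/\partial\bar{\dd}q$-directions, so $\Xi(q^\alpha)=0$ and $[\QQ,\Xi](q^\alpha)=\Xi(\QQ(q^\alpha))$; since $\QQ(q^\alpha)=\bar{\dd}q^\alpha+Q^\alpha$ and $Q^\alpha$ is independent of $\bar{\dd}q$, only the $\Xi(\bar{\dd}q^\alpha)$ term survives, giving $f^\ast([\QQ,\Xi](q^\alpha))=\dd\Lambda^\beta(\bar W^\alpha_\beta-W^\alpha_\beta)+\Lambda^\beta(\bar V^\alpha_\beta-V^\alpha_\beta)=\delta'A^\alpha$. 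Adding the two contributions reproduces $\bar\delta A^\alpha=\delta A^\alpha+\delta'A^\alpha$, which is the claim.

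The step I expect to be the main obstacle is the treatment of $\bar\delta\C{F}^\alpha$ once the coefficients themselves depend on $\C{F}$: one must use the variation genuinely induced on $\C{F}$ by varying $A$ (rather than a naive action of the generating vector field on $\bar{\dd}q$) and recognize that every $\C{F}$-dependent modification of $\bar\delta A$ drops out of the covariance criterion precisely because the Bianchi identities make $\C{I}$ a differential ideal. Once this reduction is in place, the remaining degree-and-sign bookkeeping in the two bracket computations is routine.
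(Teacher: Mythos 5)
Your proposal is correct and follows essentially the same route as the paper's proof: the same decomposition of $\bar{\delta}$ into its $\C{F}$-independent part plus corrections valued in the differential ideal $\C{I}$, the same reduction of the covariance criterion to Proposition~\ref{propositiongauge1}, and the same verification of the generating formula based on \eqref{eq:fastaast} plus the fact that the $\partial/\partial\bar{\dd}q^{\alpha}$-part of $\bar{\epsilon}$ contributes exactly the $\C{F}$-dependent corrections via $[\QQ,\Xi](q^{\alpha})=\Xi(\bar{\dd}q^{\alpha})$. Your explicit splitting $\bar{\epsilon}=\C{L}_{\epsilon}+\Xi$ merely modularizes what the paper carries out in a single displayed computation.
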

\begin{remarquelol}
\begin{itemize}
\item[] 
\item
If the coefficient functions in (\ref{eq:genericformtilde}) are not also explicit functions on $\Sigma$, one may drop the de Rham differential $\dd$ in the commutator expression $[w^{\gamma}_{\beta}\frac{\partial}{\partial q^{\gamma}},Q+\dd]^{\alpha}$ and one has simply $v^{\alpha}_{\beta}=[w^{\gamma}_{\beta}\frac{\partial}{\partial q^{\gamma}},Q]^{\alpha}$. 
\item 
If, more generally, the parameters $\Lambda^{\beta}$ depend on the fields and field strengths (or, when viewed upon as parameters on the Q-bundle, on $q^\alpha$ and $\bar{\dd}q^{\alpha}$), formula $(\ref{eq:tildeepsilon})$ has to be replaced by:
\begin{equation}\label{eq:tildeepsilonbis}
\bar{\epsilon} = \Lambda^{\beta}w^{\gamma}_{\beta}\dfrac{\partial}{\partial q^{\gamma}}+\left(\Lambda^{\beta}(\bar{v}_{\beta}^{\alpha}-[w_{\beta}^{\gamma}\frac{\partial}{\partial q^{\gamma}},\QQ]^{\alpha})
+\QQ(\Lambda^{\beta})(\bar{w}_{\beta}^{\alpha}-w_{\beta}^{\alpha})\right)\frac{\partial}{\partial \bar{\dd}q^{\alpha}}
\end{equation}
\item 
It is also straightforward to verify that a transformation induced by $[\QQ,\bar{\epsilon}]$ on the lifted bundle $\bar{\C{M}}\to T[1]\Sigma$ maps a section $f$ of this Q-bundle into another section of it (in the category of Q-manifolds). In particular, Eq.~\eqref{eq:Qmorph} remains true also for the transformed section. We recommend the reader to verify this explicitly.
\end{itemize}
\end{remarquelol}
\begin{proof}
First of all we rewrite equation (\ref{eq:genericformtilde}) as follows:
\begin{equation}\label{eq:genericformtildeb}
	\bar{\delta} A^{\alpha}=\dd \Lambda^{\beta}{W}^{\alpha}_{\beta}
+\Lambda^{\beta}{V}^{\alpha}_{\beta} + \dd \Lambda^{\beta}\left(\bar{W}^{\alpha}_{\beta} -{W}^{\alpha}_{\beta}\right)
+\Lambda^{\beta}\left(\bar{V}^{\alpha}_{\beta} -{V}^{\alpha}_{\beta}\right) \, ,
\end{equation}
with ${W}^{\alpha}_{\beta}=a^*{w}^{\alpha}_{\beta}$ and 
${V}^{\alpha}_{\beta}=a^*{v}^{\alpha}_{\beta}$. Since the coefficients in the brackets on the r.h.s.~lie in the differential ideal $\C{I}$ and the field strengths result from polynomials of $A$ and an application of the differential $\dd$, the field strengths stay covariant if and only if they do with respect to the first two terms in the above equation. The necessary and sufficient conditions for this were found in proposition \ref{propositiongauge1} above to be $v^{\alpha}_{\beta}=[w^{\gamma}_{\beta}\frac{\partial}{\partial q^{\gamma}},Q+\dd]^{\alpha}$ (cf.~in particular the remark following that proposition). 

It remains to check that the vector field (\ref{eq:tildeepsilon}) generates the symmetries by means of (\ref{eq:tildedelta}). The first part was verified already in (\ref{eq:fastaast}), the second part is a simple straightforward calculation:\begin{align}
f^{\ast}([\QQ,\bar{\epsilon}](q^{\alpha}))&=f^{\ast}\left(\QQ(\Lambda^{\beta}w^{\alpha}_{\beta})+\bar{\epsilon}(Q^{\alpha})+\bar{\epsilon}(\dd q^{\alpha})\right)\label{calcul explicite}\\
&=\R{d}(\Lambda^{\beta}W^{\alpha}_{\beta})+\Lambda^{\beta}W^{\gamma}_{\beta}\partial_{\gamma}Q^{\alpha}+\Lambda^{\beta}\bar{V}^{\alpha}_{\beta}-\Lambda^{\beta}f^{\ast}([\bar{w}_{\beta},\QQ]^{\alpha})+\R{d}\Lambda^{\beta}(\bar{W}_{\beta}^{\alpha}-W_{\beta}^{\alpha})\nonumber\\
&=(-1)^{|\beta|-1}\Lambda^{\beta}\R{d}W^{\alpha}_{\beta}+\Lambda^{\beta}W^{\gamma}_{\beta}\partial_{\gamma}Q^{\alpha}+\Lambda^{\beta}\bar{V}^{\alpha}_{\beta}\nonumber\\
&\hspace{2cm}-\Lambda^{\beta}W^{\gamma}_{\beta}\partial_{\gamma}Q^{\alpha}+(-1)^{|\beta|}\Lambda^{\beta}\R{d}W^{\alpha}_{\beta}+\R{d}\Lambda^{\beta}\bar{W}^{\alpha}_{\beta}\nonumber\\
&=\Lambda^{\beta}\bar{V}^{\alpha}_{\beta}+\R{d}\Lambda^{\beta}\bar{W}^{\alpha}_{\beta}=\bar\delta A^{\alpha}\,.\nonumber
\end{align}
\end{proof}

\begin{proposition}
With respect to the gauge transformations (\ref{eq:genericformtilde}),  \eqref{eq:tildedelta} the field strengths transform according to the following formula:
\begin{equation}
\bar\delta \C{F}^{\alpha}=f^{\ast}([\QQ,\bar\epsilon](\bar \dd q^{\alpha}))
\end{equation}
\end{proposition}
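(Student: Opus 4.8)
The plan is to \emph{not} vary $\C{F}^{\alpha}=\dd A^{\alpha}-a^{\ast}Q^{\alpha}$ term by term, but to read the result off from the geometric meaning of the gauge transformation. The key observation is that the transformation (\ref{eq:tildedelta}) is simply the infinitesimal displacement of the section $f\colon T[1]\Sigma\to\bar{\C{M}}$ along the degree $0$ vector field $V:=[\QQ,\bar{\epsilon}]$. For the flow $f_{s}:=\exp(sV)\circ f$ of any such vector field one has, for \emph{every} function $g\in\C{C}^{\infty}(\bar{\C{M}})$,
\begin{equation}
\bar{\delta}\,(f^{\ast}g)=\left.\frac{\dd}{\dd s}\right|_{s=0}f_{s}^{\ast}(g)=f^{\ast}\big(V(g)\big)=f^{\ast}\big([\QQ,\bar{\epsilon}](g)\big)\,.
\end{equation}
Choosing $g=q^{\alpha}$ reproduces exactly (\ref{eq:tildedelta}); the whole point of the argument is that the very same identity holds for \emph{any} coordinate of $\bar{\C{M}}$, not just the base-fibre coordinates $q^{\alpha}$.

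The substitution $g=\bar{\dd}q^{\alpha}$ is then immediate: by the defining equation (\ref{eq:fdef}) one has $\C{F}^{\alpha}=f^{\ast}(\bar{\dd}q^{\alpha})$, so
\begin{equation}
\bar{\delta}\,\C{F}^{\alpha}=\bar{\delta}\,\big(f^{\ast}(\bar{\dd}q^{\alpha})\big)=f^{\ast}\big([\QQ,\bar{\epsilon}](\bar{\dd}q^{\alpha})\big)\,,
\end{equation}
which is the claimed formula. In this form the field-strength transformation is obtained from the gauge-field transformation simply by replacing $q^{\alpha}$ with $\bar{\dd}q^{\alpha}$, mirroring how $\C{F}^{\alpha}$ itself arises from $A^{\alpha}$.

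The step that genuinely requires justification --- and which I expect to be the main obstacle --- is the claim that moving the section along $V$ really computes $\bar{\delta}\C{F}^{\alpha}$ in the intended sense, i.e.\ the variation induced on the \emph{genuine} field strength of the transformed gauge field. A priori, $\bar{\delta}\C{F}^{\alpha}$ means $\dd(\bar{\delta}A^{\alpha})$ minus the variation of $a^{\ast}Q^{\alpha}$, whereas the right-hand side above is the variation of the $\bar{\dd}q^{\alpha}$-component of the displaced section. These two coincide precisely because the flow keeps every $f_{s}$ a section of the Q-bundle, i.e.\ a Q-morphism in the sense of (\ref{eq:Qmorph}); this is exactly the content of the third remark following Theorem \ref{thm:gauge}. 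Differentiating at $s=0$ the relation that $f_{s}^{\ast}(\bar{\dd}q^{\alpha})$ equals the field strength built from $f_{s}^{\ast}(q^{\alpha})$ --- valid for all $s$ because $f_{s}$ stays a section --- then identifies the two notions of $\bar{\delta}\C{F}^{\alpha}$.

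To close this gap cleanly I would first check that $V=[\QQ,\bar{\epsilon}]$ is vertical, $V(x^{i})=V(\dd x^{i})=0$, which follows from $\bar{\epsilon}$ being vertical together with $\dd^{2}=0$ and the fact that $Q'$ acts only along the fibre; this guarantees that the flow displaces the section without reparametrizing the base, so that the formula $\bar{\delta}(f^{\ast}g)=f^{\ast}(Vg)$ holds without a base-reparametrization correction. I would then invoke the third remark to conclude that the Q-morphism property (\ref{eq:Qmorph}) survives the flow. With these two points established, the one-line computation above is rigorous, and no further manipulation of the explicit coefficients $\bar{w}^{\alpha}_{\beta},\bar{v}^{\alpha}_{\beta}$ entering $\bar{\epsilon}$ in (\ref{eq:tildeepsilon}) is needed.
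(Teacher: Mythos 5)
Your proof is correct, but it takes a genuinely different route from the paper's. The paper argues by direct computation: it varies the defining expression $\C{F}^{\alpha}=\dd a^{\ast}(q^{\alpha})-a^{\ast}((\dd+\C{L}_{Q})(q^{\alpha}))$ term by term, inserts $\bar\delta a^{\ast}(\,\cdot\,)=f^{\ast}([\QQ,\bar\epsilon](\,\cdot\,))$, converts $\dd\circ f^{\ast}$ into $f^{\ast}\circ\QQ$ by the chain-map property \eqref{eq:Qmorph}, and then discards the leftover term $f^{\ast}([\QQ,[\QQ,\bar\epsilon]](q^{\alpha}))$ using $\QQ^{2}=0$. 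You instead interpret the transformation as the flow of the vertical vector field $V=[\QQ,\bar\epsilon]$ and reduce the whole statement to the fact that this flow maps sections in the category of Q-manifolds to sections of the same kind, so that $f_{s}^{\ast}(\bar\dd q^{\alpha})$ remains the genuine field strength of the transformed gauge field $f_{s}^{\ast}(q^{\alpha})$ for all $s$; differentiating at $s=0$ then gives the claim. Both proofs consume exactly the same two ingredients, \eqref{eq:Qmorph} and $\QQ^{2}=0$, but in different places: in the paper the nilpotency kills the double bracket at the end of the computation, while in your argument it yields $[\QQ,V]=0$, which is what makes the flow commute with $\QQ$ and hence preserve the Q-morphism property. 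Your packaging buys conceptual clarity and immediate generality --- the identical argument gives the transformation law of $f^{\ast}(g)$ for an arbitrary fiber function $g$, e.g.\ for the field strengths $\C{H}^{\alpha}=f^{\ast}(\omega^{\alpha})$ built from the non-exact 1-forms \eqref{eq:omega}. Its cost is the reliance on the third remark after Theorem \ref{thm:gauge}, which the paper asserts but explicitly leaves to the reader; to be self-contained you should include its infinitesimal verification, which is one line: since $f$ is a chain map at $s=0$, one has $\frac{\dd}{\dd s}\big|_{s=0}\big(\dd\circ f_{s}^{\ast}-f_{s}^{\ast}\circ\QQ\big)=\dd\circ f^{\ast}\circ V-f^{\ast}\circ V\circ\QQ=f^{\ast}\circ[\QQ,V]=0$, because $[\QQ,V]=[\QQ,[\QQ,\bar\epsilon]]$ vanishes by the graded Jacobi identity together with $\QQ^{2}=0$ --- which is, of course, the very identity on which the paper's computation ends.
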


\begin{proof}
This is a straightforward calculation, using the formula $\C{F}^{\alpha}=\dd a^{\ast}(q^{\alpha})-a^{\ast}((\dd+\C{L}_{Q})(q^{\alpha}))$:
\begin{align}
	\bar\delta\C{F}^{\alpha}	&=\dd\bar\delta a^{\ast}(q^{\alpha})-\bar\delta a^{\ast}(\C{L}_{Q}(q^{\alpha}))\nonumber\\
						&=\dd f^{\ast}([\QQ,\bar\epsilon](q^{\alpha}))-f^{\ast}([\QQ,\bar\epsilon]\circ\C{L}_{Q}(q^{\alpha}))\nonumber\\
						&=f^{\ast}(\QQ\circ [\QQ,\bar\epsilon](q^{\alpha}))-f^{\ast}([\QQ,\bar\epsilon]\circ\C{L}_{Q}(q^{\alpha}))\nonumber\\
						&=f^{\ast}([\QQ,\bar\epsilon](\bar{\dd}q^{\alpha}))+f^{\ast}([\QQ,[\QQ,\bar\epsilon]](q^{\alpha}))
\end{align}
but $[\QQ,[\QQ,\bar\epsilon]]=\frac{1}{2}[\QQ^{2},\bar\epsilon]=0$, thus the result.
\end{proof}

Since the gauge transformations  (\ref{eq:genericformtilde}) can be expressed in the form  (\ref{eq:genericformtildeb}), their commutator again can be calculated by a derived bracket. We leave the details of this as an exercise to the reader.

%%%%!!! NOT VISIBLE:
\begin{comment}
An unused formula for the general transformation of the field strengths
\begin{align}
\begin{split}
	\widetilde{\delta} \C{F}^{\alpha}
		&=\dd\Lambda^{\beta}(\widetilde{v}_{\beta}^{\alpha}-v_{\beta}^{\alpha})+(-1)^{|\beta|-1}\Lambda^{\beta}(Q^{\gamma}+\nu^{\gamma}_{\eta}\omega^{\eta})\frac{\partial}{\partial q^{\gamma}}\left(\widetilde{v}_{\beta}^{\alpha}-v_{\beta}^{\alpha}\right)+(-1)^{|\beta|}\Lambda^{\beta}\nu^{\delta}_{\eta}\omega^{\eta}Q^{\gamma}_{,\delta}\frac{\partial}{\partial \bar\dd q^{\gamma}}\widetilde{v}^{\alpha}_{\beta}\\
		&\hspace{1cm}+(-1)^{|\beta|}\Lambda^{\beta}(Q^{\gamma}+\nu^{\gamma}_{\eta}\omega^{\eta})\frac{\partial}{\partial q^{\gamma}}\left(\widetilde{w}_{\beta}^{\alpha}-w_{\beta}^{\alpha}\right)+(-1)^{|\beta|+1}\Lambda^{\beta}\nu^{\delta}_{\eta}\omega^{\eta}Q^{\gamma}_{,\delta}\frac{\partial}{\partial \bar\dd q^{\gamma}}\widetilde{w}^{\alpha}_{\beta}\\
		&\hspace{2cm}-\Lambda^{\beta}w_{\beta}^{\gamma}\frac{\partial}{\partial q^{\gamma}}(\nu^{\delta}_{\eta}\omega^{\eta}Q^{\alpha}_{,\delta})-\Lambda^{\beta}\left(\widetilde{v}_{\beta}^{\gamma}-v_{\beta}^{\gamma}\right)Q^{\alpha}_{,\gamma}-\dd\Lambda^{\beta}\left(\widetilde{w}_{\beta}^{\gamma}-w_{\beta}^{\gamma}\right)Q^{\alpha}_{,\gamma}
\end{split}
\end{align}
\end{comment}

\newpage

\section{Basics of the tensor hierarchy and the bosonic model in six dimensions}\label{section3}

In this section, we briefly review the six-dimensional tensor hierarchy constructed in \cite{Samtleben:2011fj},
based on the appearance of similar structures in gauged supergravity~\cite{deWit:2005hv,Bergshoeff:2007ef,deWit:2008ta}. It contains 1-form gauge fields $A^a$, 2-form gauge fields $B^I$ and 3-form gauge fields $C_t$, where the index $t$ is assumed to be ``dual'' to the index $a$ from now on (in other words, we can contract those indices). We still keep the notation with letters from the end of the alphabet for the lower indices so as to facilitate comparison with the general formulas of the previous section, where such a duality was not assumed. The theory is governed by a set of constants $b_{Irs}$, $d^I_{ab}\equiv d^I_{(ab)}$, $g^{Ir}$, $h_I^a$, $f_{ab}{}^c\equiv f_{[ab]}{}^c$ subject to the following relations: 
\begin{align}
	2(d^{J}_{r(u}d^{I}_{v)s}-d^{I}_{rs}d^{J}_{uv})h^{s}_{J}			&=2f_{r(u}{}^{s}d^{I}_{v)s}-b_{Jsr}d^{J}_{uv}g^{Is}\label{lol}\\
	(d^{J}_{rs}b_{Iut}+d^{J}_{rt}b_{Isu}+2d^{K}_{ru}b_{Kst}\delta^{J}_{I})h^{u}_{J}&=f_{rs}{}^{u}b_{Iut}+f_{rt}{}^{u}b_{Isu}+g^{Ju}b_{Iur}b_{Jst}\nonumber \\
	f_{[pq}{}^{u}f_{r]u}{}^{s}-\frac{1}{3}h_{I}^{s}d^{I}_{u[p}f_{qr]}{}^{u}&=0\nonumber \\
		h^{r}_{I}g^{It}													&=0\nonumber \\
	f_{rs}{}^{t}h_{I}^{r}-d^{J}_{rs}h^{t}_{J}h^{r}_{I}				&=0\nonumber \\
	g^{Js}h^{r}_{K}b_{Isr}-2h_{I}^{s}h_{K}^{r}d^{J}_{rs}				&=0\nonumber \\
	-f_{rt}{}^{s}g^{It}+d^{J}_{rt}h^{s}_{J}g^{It}-g^{It}g^{Js}b_{Jtr}	&=0\nonumber \\
	b_{Jr(s}d^{J}_{uv)}												&=0\;.
	\nonumber
\end{align}
Solutions to this system have been constructed in~\cite{Samtleben:2012mi}.

Strictly speaking, to render the theory physically consistent, and sticking to gauge fields up to degree 3 only, the gauge field $C_t$ always has to appear contracted with the tensor $g^{It}$. %, $C_tg^{It}:= \tilde{C^I}$. 
(Note that such contracted fields $\tilde{C}^I:= C_tg^{It}$ are in general \emph{not} independent gauge fields; for example, $h_I^r \tilde{C}^I\equiv 0$ as a consequence of the fourth equation of (\ref{lol}).) Alternatively, one can assume the existence of \emph{sufficiently} many higher form degree gauge fields, 4-forms $D_\lambda$, 5-forms $E_\omega$, $\ldots$, inducing corresponding higher field strengths parametrized by further constants. These  constants are subject to equations similar to (\ref{lol}), cf.~(\ref{extra}) below, and ensure that no further constraints are imposed on the constants above.

The gauge structure of the model is encoded in the tensors
\begin{equation}% decomposition structure constant
	(X_{a})_{b}{}^{c} \equiv X_{ab}{}^c:= -f_{ab}{}^{c}+h^{c}_{I}d^{I}_{ab} 
\;,
\label{gen1}
\end{equation}
satisfying for the ``matrix commutator''
$[X_{a},X_{b}]\,_c{}^d := (X_{a})_{c}{}^{e}(X_{b})_{e}{}^{d}-(X_{b})_{c}{}^{e}(X_{a})_{e}{}^{d}$ the relation
\begin{equation}% bracket gauge algebra
	[X_{a},X_{b}]=-X_{ab}{}^{c}\,X_{c}%=-X_{[ab]}{}^{c}\,X_{c}
	\;.
	\label{algebra}
\end{equation}
The validity of this equation is a consequence of the constraints (\ref{lol}).
Note that the `structure constants' $X_{ab}{}^{c}$ can have a symmetric part as well, parametrized by $h^{c}_{I}d^{I}_{ab}$, which vanishes when contracted with $X_{c}$ (as a consequence of (\ref{algebra})). In particular, this implies that the standard Jacobi identities are not satisfied for the antisymmetric part $X_{[ab]}{}^c\equiv - f_{ab}{}^c$, cf.~the third equation of (\ref{lol}).

We can reinterpret Eq.~(\ref{algebra}) in a more abstract setting as the defining relation of a Leibniz algebra $(\mathbb{V}, [ \cdot,\cdot])$, reading the left-hand as the product or bracket in this algebra between basis elements $X_a$ of $\mathbb{V}$ \cite{Kotov:xxx}. Then 
\begin{equation}% action of \F{h} on A^a
	X_{a}\cdot A^{c} := -X_{ab}{}^{c}A^{b}\;.
	\label{covA}
\end{equation}
defines a representation of this Leibniz algebra on the 1-form fields $A^a$ in the sense that the map from the bracket to commutators is a morphism.  Then we also have a canonical action on the 3-form fields $C_r$ since they take values in a dual vector space, $X_{a}\cdot C_{t} = X_{at}{}^{s}C_{s}$. As a consequence of the relations (\ref{lol}), one may verify that 
\begin{align}% generator action
	X_{a}\cdot B^{I} &:= -(X_{a})_{J}{}^{I}B^{J} \;,
	\nonumber\\
	(X_{a})_{J}{}^{I} &\equiv2d^{I}_{ap}h^{p}_{J}-g^{It}b_{Jt a}	\;.
\label{covBC}
\end{align}
defines likewise an action on the 2-form fields $B^I$. 

By means of these operations one now defines ``covariant derivatives'' according to 
\begin{equation} \label{eq:D}
D=\dd -A^{a}X_{a} \cdot
\end{equation}
which appear e.g.~in the transformations of each gauge field $A^\alpha$ with respect to ``its own'' gauge parameter $\Lambda^\alpha$: $\delta A^\alpha = D \Lambda^\alpha$. In general, gauge fields $A^\alpha$ of form degree $|\alpha|$ transform under all gauge parameters $\Lambda^\beta$ with form degree up to $|\alpha|$, i.e.~for all $\beta$ with $|\beta| \leq |\alpha|+1$ (cf.~Egs.~(\ref{eq:transf}) below).

The field strengths defined in \cite{Samtleben:2011fj} are of the form 
\begin{align}% expression des field strengths
	\C{H}^{a} 		&= \dd A^{a}-\frac{1}{2}f_{bc}{}^{a}A^{b}\wedge A^{c} + h^{a}_{I}B^{I}\;,\label{fieldstrengthsH}\\
	\C{H}^{I} 		&= \dd B^{I}+(h^{s}_{K}d^{I}_{as}-g^{Is}b_{Ks a})A^{a}\wedge B^{K}+\frac{1}{6}f_{[ab}{}^{s}d^{I}_{c]s}A^{a}\wedge A^{b}\wedge A^{c} +d^{I}_{ab}A^{a}\wedge \C{H}^{b}+g^{It}C_{t}\;,\nonumber\\
	\C{H}^{(4)}_{t} 	&= \dd C_{t}+(f_{at}{}^{s}-d^{J}_{at}h^{s}_{J})A^{a}\wedge C_{s} +\frac{1}{2}h^{s}_{I}b^{\phantom{}}_{Jt s}B^{I}\wedge B^{J}+\frac{1}{3}h^{v}_{I}b^{\phantom{}}_{Kta}d^{K}_{bv}A^{a}\wedge A^{b}\wedge B^{I}\nonumber\\
 					&\quad{}-\frac{1}{12}b^{\phantom{}}_{Kta}f_{bc}{}^s d^{K}_{ds}A^{a}\wedge A^{b}\wedge A^{c}\wedge A^{d}-\frac{1}{3}b^{\phantom{}}_{Kta}d^{K}_{bc}A^{a}\wedge A^{b}\wedge \C{H}^{c}-b_{It a}B^{I}\wedge \C{H}^{a} +k^{\lambda}_{t}D_{\lambda}\;.\nonumber
\end{align}
Note that in these formulas we assumed already the existence of 4-form gauge fields $D_\lambda$, added at the end of the 4-form field strength with a new set of parameters $k_t^\lambda$ which is, however, supposed to disappear in the contraction with  $g^{It}$, governing the consistent truncation mentioned above. This implies 
\begin{equation}\label{eq:gk}
g^{It}k_t^\lambda =0
\end{equation}
as \emph{one} of the equations to be imposed on these new parameters (for further ones cf.~Eq.~(\ref{extra}) below). 
So, for the truncated system, one should replace $\C{H}^{(4)}_{t} = \dd C_t + \ldots$ by its projection $g^{It}\C{H}^{(4)}_{t} = \dd 
(g^{It}C_t) + \ldots\equiv \dd \tilde{C}^I + \ldots$. Since, as mentioned above, the gauge fields $\tilde{C}^I$ are constrained (not independent), it is, however, often useful to work with the full tower of gauge fields, considered to exist up to possibly arbitrarily high nontrivial order without impeding the lower orders. We will come back to this below.

For the field strengths (\ref{fieldstrengthsH}) we use a different notation  ${\cal H}^\alpha$ as opposed to the 
${\cal F}^\alpha$ introduced in the previous section, since their definition contains at each form degree also lower-degree field strengths. While such terms do not obstruct the analysis of the previous section, we still want to distinguish them by notation from the field strengths ``corrected'' for such contributions in what follows: 
\begin{eqnarray}%mathematical field strengths
	\C{F}^{a}		&=&\C{H}^{a}\;,\label{FH}\\
	\C{F}^{I}		&=&\C{H}^{I}-d^{I}_{ab}A^{a}\wedge\C{H}^{b}\;,\nonumber\\
	\C{F}^{(4)}_{t}	&=&\C{H}^{(4)}_{t}+b_{Jta}\C{H}^{a}\wedge B^{J}+\frac{1}{3}b^{\phantom{}}_{Kt[a}d^{K}_{b]c}A^{a}\wedge A^{b}\wedge\C{H}^{c}\;.\nonumber
\end{eqnarray}
These expressions are now indeed of the form (\ref{eq:Fa}).

The full set of gauge transformations up to the (unconstrained) 3-forms reads
\begin{align}% gauge transformations 
\label{eq:transf}
	\delta A^{a} &= D\Lambda^{a}-h^{a}_{J}\Lambda^{J}\;,\\
	\delta B^{I} &= D\Lambda^{I}+d^{I}_{ab}A^{a}\wedge\delta A^{b}-2d^{I}_{ab}\Lambda^{a}\C{H}^{b}-g^{It}\Lambda_{t}\;,\nonumber\\
	\delta C_{t} &= D\Lambda_{t}+b_{Jt a}B^{J}\wedge\delta A^{a}+\frac{1}{3}b^{\phantom{}}_{Jt[a}d^{J}_{b]c}A^{a}\wedge A^{b}\wedge\delta A^{c}+b_{Jt a}\Lambda^{a}\C{H}^{J}+b_{Jt a}\Lambda^{J}\wedge \C{H}^{a}-k^{\lambda}_{t}\Lambda_{\lambda}\;.\nonumber
\end{align}
where $\Lambda_\lambda$ are the 3-form gauge parameters appearing in the 
gauge transformations of $D_\lambda$ via an appropriately defined\footnote{This is not yet defined since for it we need to specify the action of the Leibniz algebra on the set of 4-form gauge fields according to Eq.~(\ref{eq:D}). This will be done below only.} ``covariant derivative'': $\delta D_\lambda = D \Lambda_\lambda$; again this contribution disappears under the projection, as it should for consistency of the truncation.

\begin{remarque}
	Notice that the parametrization of the gauge transformations \eqref{eq:transf} is redundant, i.e.~there is a ``gauge symmetry'' of the gauge parameters (which would give rise to ghosts for ghosts in a BV-formalism). In particular, a change of parameters according to  
	\begin{align}\label{jauge}% transformation de jauge des parametres de jauge
	\Lambda^{a}			&\mapsto \Lambda^{a}	+ h^{a}_{I}\mu^{I}\;,
	&\Lambda^{I}			&\mapsto \Lambda^I +D\mu^{I}+g^{It}\mu_{t}\;,
	&\Lambda_{t}			&\mapsto \Lambda_t -b_{Jta}\mu^{J}\C{F}^{a}+D\mu_{t}\; 
\end{align}
for arbitrary $\mu^{I}\in\C{C}^{\infty}(\Sigma)$ and $\mu_{t}\in\Omega^{1}(\Sigma)$ does not change the transformations of the fields $A^a$, $B^I$ and $C_t$ above.
	\end{remarque}

One could have chosen another set of gauge parameters,
\begin{equation}\label{eq:changeLambda}
\Lambda^{r}\longmapsto\widetilde{\Lambda}^{r}\,,\quad \Lambda^{I}-d^{I}_{ab}\Lambda^{b}\wedge A^{a}\longmapsto\widetilde{\Lambda}^{I}\,,\quad \Lambda_{t}+b_{Jta}\Lambda^{a}B^{J}+\frac{1}{3}b^{\phantom{}}_{Jt[b}d^{J}_{c]a}\Lambda^{a}A^{b}\wedge A^{c}\longmapsto
\widetilde{\Lambda}_{t} \, ,
\end{equation}
such that in this new basis one has
\begin{align}% gauge transformations
	\widetilde{\delta} A^{a}		&=\dd \widetilde{\Lambda}^{a} +f_{bc}{}^{a}\widetilde{\Lambda}^{b}A^{c} -h^{a}_{I}\widetilde{\Lambda}^{I}\;,\label{eq:deltaAnew}
	\\
	\widetilde{\delta} B^{I}		&=\dd \widetilde{\Lambda}^{I} +\left(g^{Is}b_{Ksa} -h^{s}_{K}d^{I}_{as} \right)\left(\widetilde{\Lambda}^{a}B^{K}-A^{a}\wedge \widetilde{\Lambda}^{K}\right) -\frac{1}{2}d^{I}_{s[a}f^{\phantom{}}_{bc]}{}^{s}\widetilde{\Lambda}^{a}A^{b}\wedge A^{c} -g^{It}\widetilde{\Lambda}_{t}-d^{I}_{rs}\widetilde{\Lambda}^{r}\C{F}^{s}\;,\nonumber\\
	\begin{split}
	\widetilde{\delta} C_{t}		&= \dd   \widetilde{\Lambda}_t+ \left( f_{at}{}^{r}-d^{J}_{at}h^{r}_{J}\right)\left(A^{a}\wedge \widetilde{\Lambda}_{r}-\widetilde{\Lambda}^{a} C_{r}\right)  -\frac{1}{3}h^{v}_{I}b^{\phantom{}}_{Kt[a}d^{K}_{b]v}\left(2\widetilde{\Lambda}^{a} A^{b}\wedge B^{I}+A^{a}\wedge A^{b}\wedge \widetilde{\Lambda}^{I}\right)\nonumber\\
					&\hspace{2cm}-h^{s}_{(J}b^{\phantom{}}_{K)ts}\widetilde{\Lambda}^{I}\wedge B^{J}+ \frac{1}{3}b^{\phantom{}}_{Kt[a}f^{\phantom{}}_{bc}{}^{s}d^{K}_{d]s}\widetilde{\Lambda}^{a}\wedge A^{b} \wedge A^{c}\wedge A^{d}-k^{\lambda}_{t}\widetilde{\Lambda}_{\lambda}\\
					&\hspace{4cm}+\left(b_{Jts}\widetilde{\Lambda}^{J}
					+\frac{2}{3}b^{\phantom{}}_{Jt[u}d^{J}_{v]s}\widetilde{\Lambda}^{u}A^{v}\right)\wedge \C{F}^{s}
					\; .
	\end{split}
\end{align}
The transformation of parameters (\ref{eq:changeLambda}) was lead by the following principle: Whenever a gauge transformation for a gauge field $A^\alpha$ carries in addition to the contribution $\dd \Lambda^\alpha$ terms of the form $\dd \Lambda^\beta A^\gamma \ldots$, where the dots can contain constant but also further field dependent contributions, we can rewrite this term according to $\dd \Lambda^\beta A^\gamma \ldots= \dd(\Lambda^\beta A^\gamma \ldots) - (-1)^{|\beta|} \Lambda^\beta \dd( A^\gamma \ldots)$. The first of these two terms, the exact one, combines into a (field-dependent) redefinition of $\Lambda^\alpha$, the second one contains no derivatives on $\Lambda$ anymore. Note that this transformation is always an invertible one for the gauge parameters (for reasons of form degree!) and thus all derivatives on gauge parameters can be reabsorbed into the standard first contribution like above. While in general we cannot get rid of field strength contributions to the gauge transformations in this manner, we can always get rid of terms that contain derivatives on the gauge parameters multiplied by some fields. 

An invertible change of parameters like in (\ref{eq:changeLambda}) corresponds to a change of the \emph{generating} set of gauge transformations (cf., e.g., \cite{Henneaux:1992ig}). In general it changes the algebra of gauge transformations, however, it does not change the feature of such an algebra being closed or ``open''. Geometrically speaking the generators of gauge transformations form a distribution in the space of fields. A closed algebra corresponds to an involutive distribution, while an open one signifies that one needs to add further generators or at least symmetry transformations; these can be also ``trivial ones'' (cf.~\cite{Henneaux:1992ig} for the terminology), i.e.~gauge transformations that vanish on the set of solutions to the field equations, but in that case also they constrain the functionals that can have the symmetries (in form of their field equations). A change of generators of a distribution corresponds to a different choice of a basis for the distribution only. This still changes their algebra: if the distribution is involutive, locally there even always exists an abelian choice. 

Comparing with the previous section, \emph{any} field strength contribution can be added to gauge transformations of the form (\ref{eq:deltaAbar}) without changing the fact if the ideal ${\cal I}$ generated by the field strengths is left stable or not. Thus, \emph{a priori}, i.e.~from the perspective of the definition of field strengths and thus the Q-structure, there is no good reason to favor the gauge symmetry generators (\ref{eq:transf}) or (\ref{eq:deltaAnew}) from, e.g., 
\begin{align}% gauge transformations
	\delta_\lambda A^{a}		&=\dd \lambda^{a} +f_{bc}{}^{a}\lambda^{b}A^{c} -h^{a}_{I}\lambda^{I}\;,\label{eq:deltaAnewohneF}
	\\
	\delta_\lambda B^{I}		&=\dd \lambda^{I} +\left(g^{Is}b_{Ksa} -h^{s}_{K}d^{I}_{as} \right)\left(\lambda^{a}B^{K}-A^{a}\wedge \lambda^{K}\right) -\frac{1}{2}d^{I}_{s[a}f^{\phantom{}}_{bc]}{}^{s}\lambda^{a}A^{b}\wedge A^{c} -g^{It}\lambda_{t}\;,\nonumber\\
	\begin{split}
	\delta_\lambda C_{t}		&= \dd   \lambda_t+ \left( f_{at}{}^{r}-d^{J}_{at}h^{r}_{J}\right)\left(A^{a}\wedge \lambda_{r}-\lambda^{a} C_{r}\right)  -\frac{1}{3}h^{v}_{I}b^{\phantom{}}_{Kt[a}d^{K}_{b]v}\left(2\lambda^{a} A^{b}\wedge B^{I}+A^{a}\wedge A^{b}\wedge \lambda^{I}\right)\nonumber\\
					&\hspace{2cm}-h^{s}_{(J}b^{\phantom{}}_{K)ts}\lambda^{I}\wedge B^{J}+ \frac{1}{3}b^{\phantom{}}_{Kt[a}f^{\phantom{}}_{bc}{}^{s}d^{K}_{d]s}\lambda^{a}\wedge A^{b} \wedge A^{c}\wedge A^{d}-k^{\lambda}_{t}\lambda_{\lambda}
					\; .
	\end{split}
\end{align}
which differs from (\ref{eq:deltaAnew}) by simply dropping the field strength terms. While superficially this seems to simplify life, one needs to be aware of the fact that this step can change the nature of the constraint algebra, from closed to open. It is a much more drastic transition than the one from (\ref{eq:transf}) to (\ref{eq:deltaAnew}).  We will come back to this quesion in more detail in section 5 below.

With respect to the gauge transformations (\ref{eq:transf}) and (\ref{eq:deltaAnew}) 
the field strengths $\C{H}^\alpha$ have the remarkable property that they transform according to the respective representation of the Leibniz algebra:
\begin{equation}\label{eq:deltaH}
\delta\C{H}^{\alpha} =  -(X_{a})_{\beta}{}^{\alpha}\Lambda^{a}{\C{H}}^{\beta}\;.
\end{equation}
So they do \emph{not} transform with respect to the ``higher'' gauge parameters $\Lambda^I$, $\Lambda_r$ at all, they are strictly invariant with respect to these transformations. And with respect to the original Leibniz algebra, they follow the respective representation. This is of great advantage if one wants to construct invariant action functionals.\footnote{While there still exist examples of invariant functionals if such a condition is not satisfied, cf., e.g., \cite{Kotov:2010wr} and section \ref{sec:Action} below.}

While the definition of the first field strength  in (\ref{fieldstrengthsH}) can be motivated also e.g.~by $\frac{1}{2}[D,D]\Lambda^\alpha = \C{H}^a (X_{a})_\beta^\alpha \Lambda^\beta$, 
it is precisely the Chern-Simons-like contributions to the higher field strengths that ensures this property and, at the same time, makes them differ from their counterparts (\ref{FH}) that do not contain such terms. The 2-form part contribution in the 2-form field strength is also easily seen to be necessary for a non-Lie Leibniz algebra, if one decides for $\delta A^a = D\Lambda^a$:  the ``standard'' Yang-Mills field strength $\bar{\C{H}}^{a} \equiv \dd A^{a}-\frac{1}{2}f_{bc}{}^{a}A^{b}\wedge A^{c}$ transforms according to $\delta\bar{\C{H}}^{c} = -X_{ab}{}^{c}\Lambda^{a}\bar{\C{H}}^{b}-X_{(ab)}{}^{c}(A^{a}\wedge\delta A^{b}-2\Lambda^{a}\bar{\C{H}}^{b})$, i.e.~non-covariant in the very general sense of definition~\ref{def:covariant}; this is cured by the addition of the $B$-term to $\C{H}^a$ with the simultaneous requirement of how $B^I$ transforms w.r.t.~$\Lambda^a$.

Similarly, the Bianchi identities satisfied by this set of field strengths show a more particular structure as that one can extract from the general definition \ref{def:Bianchi}: ``naked'' gauge fields only appear linear in combination with the exterior derivative $\dd$ so as to combine into the above defined ``covariant derivatives'' $D$. In particular, one finds
\begin{align}% Bianchi identities
	D\C{H}^{a} 				&= h^{a}_{I}\C{H}^{I}\;, \label{eq:BianchiH}\\
	D\C{H}^{I} 				&= d^{I}_{ab}\C{H}^{a}\wedge \C{H}^{b}+g^{It}\C{H}^{(4)}_{t}
	\;,\nonumber\\
	D\left(g^{It} \C{H}^{(4)}_{t}\right)	 &= -g^{It}\,b_{Jt a}\,\C{H}^{a}\wedge \C{H}^{J}
\;.\nonumber
\end{align}
The rather strong covariance properties and Bianchi identities of the field strengths~(\ref{fieldstrengthsH})
turn out to be crucial for the construction of supersymmetric gauge invariant action functionals for these six-dimensional models~\cite{Samtleben:2011fj,Samtleben:2012fb,Bandos:2013jva}. 

Let us finally note, that the above system of forms $\{A^a, B^I, g^{Ir}C_{r}\}$ can be embedded
into a larger system of forms $\{A^a, B^I, C_{r}, k_r^\lambda D_\lambda\}$ in which the full
unprojected set of 3-forms $C_{r}$ appears together with 4-forms $D_\lambda$, which enter under
projection with the matrix $k_r^\lambda$ as in the definition of the field strengths (\ref{fieldstrengthsH}).
The Bianchi identities of this larger system are given by
\begin{align}% Bianchi identities
	D\C{H}^{a} 				&= h^{a}_{I}\C{H}^{I}\;,\\
	D\C{H}^{I} 				&= d^{I}_{ab}\C{H}^{a}\wedge \C{H}^{b}+g^{It}\C{H}^{(4)}_{t}
	\;,\nonumber\\
	D\C{H}^{(4)}_{t}			&= -b_{Jt a}\C{H}^{a}\wedge \C{H}^{J}+k^{\lambda}_{t}\C{H}^{(5)}_{\lambda}\;,\nonumber\\
	D \left(k_r^\lambda \C{H}^{(5)}_{\lambda} \right)
		&= k_r^\lambda\,c_{\lambda IJ}\,\C{H}^{I}\wedge\C{H}^{J}
		-k_r^\lambda\,c_{\lambda a}^{t}\,\C{H}^{a}\wedge\C{H}^{(4)}_{t}\;.
		\nonumber
\end{align}
Note that an eventual 6-form field strength does not appear on the r.h.s.~of the last equation since it disappears under the contraction with $k_r^\lambda$ effecting the trunctation one level up. Also, when contracted with $k_r^\lambda$, the action of $D$ on the l.h.s.~is understood as acting on an object with a lower $r$-index.
The new parameters appearing on the r.h.s.\ of the Bianchi identities are 
constrained to satisfy the conditions
\begin{align}
        4d^J_{ab} \,c_{\lambda\,IJ} &= b_{Ira} \,c_{\lambda b}^r + b_{Irb} \,c_{\lambda a}^r  \;,\label{extra}\\
	k^{\lambda}_{r}c_{\lambda IJ}&=h^{s}_{[I}b_{J]rs}\;, \nonumber\\
	k^{\lambda}_{r}c^{t}_{\lambda s}&=f_{rs}{}^{t}-b_{Irs}g^{It}+d^{I}_{rs}h_{I}^{t}\nonumber
\end{align}
together with (\ref{eq:gk}).
Despite first appearance, these conditions do not imply any further constraints on the
constants appearing in (\ref{lol}). E.g.\ combining the second  equation of (\ref{extra}) and Eq.~(\ref{eq:gk}) gives rise to the condition
\begin{equation}
g^{Kr}\,h^{s}_{[I}\,b_{J]rs} = 0\;,
\end{equation}
which can be shown to be a consequence of the previous set of equations (\ref{lol}).
In this sense, the system $\{A^a, B^I, g^{Ir}C_{r}\}$ is a consistent
truncation of the extended system $\{A^a, B^I, C_{r}, k_r^\lambda D_\lambda\}$.
In the next section, we will present a more systematic understanding of such truncations.

%--------------------------------------------------------4eme section--------------------------------------------------------------------%
\newpage

\section{\texorpdfstring{$L_{\infty}$}{L infinity}-algebras and canonical truncations of Lie \texorpdfstring{$n$}{n}-algebras}\label{section4}\label{sec:Linfty}

In the present section we first want to recall the definition of an $L_\infty$-algebra and its equivalence with pointed Q-manifolds. Those Q-manifolds of relevance for the physics under consideration are always positively graded. If the highest non-zero degree of this socalled NQ-manifold is $p$, this corresponds to a $p$-term $L_\infty$-algebra or what we call a Lie $p$-algebra. After providing some of the details on this correspondence, where we will also partially follow the conventions of \cite{Mehta:2012} and  \cite{Voronov2005}, we address the canonical truncation of a Lie $p$-algebra to a Lie $q$-algebra for any $p \geq q\geq 2$ as it plays an important role in the tensor hierarchy.

Let $\C{V}=\oplus_{i\in\B{Z}}\C{V}_{i}$ be a graded vector space, and let us call $(\Lambda\C{V},\wedge)$ the free graded commutative algebra generated by $\C{V}$; let us remark that this is a purely polynomial algebra, without any completion. Given $j\geq1$, let $S_{j}$ be the permutation group of $j$ elements. A $(k,j-k)$-\emph{unshuffle} is a permutation $\sigma\in S_{j}$ such that $\sigma(1)<...<\sigma(k)$ and, if $k$ is strictly bigger than $j$, in addition $\sigma(k+1)<...<\sigma(j)$. We denote by $Un(j,k)$ the set of unshuffles. If one takes $j$ homogeneous elements $v_{1},...,v_{j}$ and some $\sigma\in S_{j}$, the \emph{Koszul sign} $\epsilon(\sigma)$ is defined as:
\begin{equation}% Koszul sign 1
	q_{1}\wedge...\wedge q_{j}=\epsilon(\sigma)q_{\sigma(1)}\wedge...\wedge q_{\sigma(j)}
\end{equation}
where the elements $q_{k}$ are assumed to be homogeneous. Another sign of relevance below is $\chi(\sigma):=\text{sgn}(\sigma)\epsilon(\sigma)$. Then we can define an $L_\infty$- or, more generally, an $L_{n}$-algebra as follows:
%
%\bigskip
%
\begin{definition}\label{def:Linfty}
% definition L-infinity algebra
	An $L_{n}$-algebra for some fixed $n \in \mathbb{N} \cup \{\infty\}$ is a graded vector space $\C{V}$ equipped with a collection of linear maps $[...]_{j} : \Lambda^{j}\C{V}\longrightarrow \C{V}$ of degree $2-j$, for $1\leq j\leq n+1\leq\infty$, such that for all homogeneous elements $q_{1},...,q_{m}\in\C{V}$:
	\begin{align}
		\mathrm{antisymmetry}\hspace{1cm}	& \forall\ \sigma\in S_{m}\hspace{1.5cm} [q_{1},...,q_{m}]_{m}=\chi(\sigma)[q_{\sigma(1)},...,q_{\sigma(m)}]_{m}\\
		\mathrm{Jacobi\ identity}\hspace{1cm}	& \sum_{k=1}^{m}(-1)^{k(m-k)}\sum_{\sigma\in Un(m,k)}\chi(\sigma)\big[[q_{\sigma(1)},...,q_{\sigma(k)}]_{k},q_{\sigma(k+1)},...,q_{\sigma(m)}\big]_{m-k+1}=0\, . \label{genJac}
	\end{align}
\end{definition}
This definition is similar to the one used in \cite{Mehta:2012,Lada:1992wc}, but differs for instance from \cite{Lada:1994mn}, in which the degree of the bracket $[...]_{j}$ is $j-2$; this is equivalent to the above definition upon reversal of the grading of $\C{V}$. In some cases, the notation $l_{j}(\ldots)$ is used  for the $j$-bracket, instead of $[\ldots]_{j}$. Note that any $L_n$-algebra is also trivially a (partially degenerate) $L_{n+m}$-algebra for any $m \in \mathbb{N}$ (the higher brackets vanish identically starting from $j=n+2$). 

The ``antisymmetry'' and ``Jacobi identity'' are to be understood in a generalized sense certainly. It is inspired by the case of an ordinary Lie algebra, which results from the above as follows: It is an $L_{1}$-algebra where all elements have degree 0 and where $[\cdot]_{1}$  vanishes identically. One then is left with the binary bracket $[\cdot,\cdot]_{2}$, which is antisymmetric and satisfies the Jacobi identity in the usual sense. Similarly, a $\mathbb{Z}$-graded Lie algebra is an $L_{1}$-algebra with $[\cdot]_{1}\equiv 0$.

The 1-bracket $[\cdot]_{1}$  equips $\C{V}$ with the structure of a complex due to the first Jacobi identity,  $[[\cdot]]_{1}=0$. We will denote this coboundary map also by $t$, or if the degree is to be specified, by
\begin{equation}\label{t}
%\forall\ p>-M\quad 
t_{(p)} := [\cdot]_{1}|_{\C{V}_{-p}} \colon \C{V}_{-p}\to \C{V}_{-p+1} \, .
\end{equation}
It satisfies $t_{(p-1)} \circ t_{(p)} = 0$. A general $L_1$-algebra is then a differential graded Lie algebra, the compatibility of the bracket with the differential following from \eqref{genJac} for $m=2$. 

Note that an $L_n$-algebra has vanishing brackets starting from $j=n+2$, $[ \ldots ]_{n+2} \equiv 0$, but still the generalized Jacobi identities provide restrictions up to $m=n+2$.  For an $L_1$ algebra, it is the condition \eqref{genJac} for $m=3$ that gives the Jacobi identity (in that case only the terms with $k=2$ contribute). Similarly, it is this condition that controls the violation of the standard Jacobi identity of the 2-bracket for an $L_2$ algebra, where in this case there are also contributions with $k=3$ and $k=1$; the former ones correspond to terms which are of the form $t([\ldots]_3)$, the latter ones are of the form $[t(\cdot), \cdot, \cdot]_3$. The condition \eqref{genJac} for $m=4$ then expresses the fact that the 3-bracket has to obey some "Jacobiator identity" (see \cite{Baez:2003fs} for the case of Lie 2-algebras).

There is also a useful, equivalent version of the above definition, resulting from a shift of degree:
\begin{definition}% definition L-infinity[1] algebra
	An $L_{n}[1]$-algebra, $n \in \mathbb{N} \cup \{\infty\}$, is a graded vector space $\C{W}$ equipped with a collection of linear maps $\llbracket . , \ldots , . \rrbracket_{j} : \Lambda^{j}\C{W}\longrightarrow \C{W}$ of degree $1$, for $1\leq j\leq n+1\leq\infty$, such that for all homogeneous elements $q_{1},...,q_{m}\in\C{W}$:
	\begin{align}
		\mathrm{symmetry}\hspace{1cm}	& \forall\ \sigma\in S_{m}\hspace{1.5cm} \llbracket q_{1},\ldots ,q_{m}\rrbracket_{m}=\epsilon(\sigma) \llbracket
q_{\sigma(1)},\ldots ,q_{\sigma(m)} \rrbracket_{m} \label{sym'}\\
		\mathrm{Jacobi\ identity}\hspace{1cm}	& \sum_{k=1}^{m}\ \sum_{\sigma\in Un(m,k)}\epsilon(\sigma)\big\llbracket \llbracket q_{\sigma(1)},\ldots ,q_{\sigma(k)}\rrbracket_{k},q_{\sigma(k+1)},\ldots,q_{\sigma(m)}\big\rrbracket_{m-k+1}=0\,.\label{Jac'}
	\end{align}
\end{definition}

There is a natural bijection between an $L_{n}$-algebra structure on $\C{V}$ and an $L_{n}[1]$-algebra structures on $\C{W}=\C{V}[1]$, where $\C{V}[1]$ is the graded vector space modelled after $\C{V}$ such that $(\C{V}[1])_{i}=\C{V}_{i+1}$. If we write $[-1]$ for the \emph{shift functor} which shifts the degree of a homogeneous element of $\C{V}$ by $-1$ (i.e $[-1]\C{V}=\C{V}[1]$), then for all homogeneous elements $q_{1},...,q_{j}\in\C{V}$ we have the following relationship between $\llbracket \ldots \rrbracket_{j}$ and $[\ldots]_{j}$:
	\begin{equation}\label{crochetsbrackets}% isomorphisme L_n et L_n[1]
		[-1][q_{1},\ldots,q_{j}]_{j}=(-1)^{(j-1)|q_{1}|+(j-2)|q_{2}|+\ldots+|q_{j-1}|}  \cdot \llbracket[-1] q_{1},\ldots,[-1] q_{j}\rrbracket_{j}   \, .
	\end{equation}
This version is better adapted to a direct comparison with (a priori $\mathbb{Z}$-graded) Q-manifolds. We will be more explicit on this in what follows.

Given a graded vector space $\C{W}$, there is a bijection between $L_{n}[1]$-algebra structures on $\C{W}$ and Q-structures on $\C{W}$ which are pointed, i.e.~which vanish at the origin $q^\alpha=0$,
\begin{equation}\label{Q0}
Q|_0 = 0 \, .
\end{equation}
The identification works as follows: In addition to the $\mathbb{Z}$-grading of $\C{W}$, the polynomials on this space have an \emph{independent} homogeneity degree (taking values in $\mathbb{N}_0$). Due to the condition \eqref{Q0}, this implies that we can decompose the vector field $Q$ according to $Q=\sum_{j=1}^\infty Q_{(j)}$ where $Q_{(j)}$ is a homogeneous vector field with a polynomial coefficient of homogeneity degree $j$. Note each $Q_{(j)}$ still has degree +1 and decomposes into several parts of different degrees in derivatives. To be more explicit, let  $\{q_{\alpha}\}$ denote  a basis of $\C{W}$, where each of the  $\{q_{\alpha}\}$s has a particular degree $-|\alpha| \in \mathbb{Z}$,  then a dual basis  $\{q_{\alpha}^{\ast}\equiv : q^\alpha\}$ induces coordinates $q^\alpha$ of degree $|\alpha|$ on $\C{W}$. The homological vector field $Q$ on $\C{W}$ can now be expressed in terms of the formal power series in this dual basis:
\begin{equation}\label{Qsum}
Q=  \sum_{j=1}^\infty Q_{(j)} \; , \qquad Q_{(j)}\equiv \frac{1}{j!} C_{\alpha_1 \ldots \alpha_j}^\beta q^{\alpha_1} \cdots q^{\alpha_j} \frac{\partial}{\partial q^\beta}\, ,
\end{equation}
where $C_{\alpha_1 \ldots \alpha_j}^\beta \in \mathbb{R}$ (and, as usually, the sum over repeating indices is understood).
Now, $Q_{(j)}$ corresponds precisely to the $j$-th bracket of the $L_\infty[1]$-algebra, and thus also to the corresponding $L_\infty$-algebra. In fact, 
\begin{equation}\label{derivedbracket}% derived bracket
	\llbracket q_{\alpha_{1}},\ldots,q_{\alpha_{j}}\rrbracket_{j}=C^{\beta}_{\alpha_{1}\ldots \alpha_{j}}q_{\beta}\: .
\end{equation}
Now it is straightforward to verify that the generalized Jacobi identity \eqref{Jac'} is equivalent to $Q^2=0$, while \eqref{sym'} already follows from the definitions \eqref{Qsum}, \eqref{derivedbracket}.

The Q-manifolds $\C{M}$ that appear in the present context are NQ-vector spaces, i.e.~Q-manifolds where the coordinate functions have positive degrees only. Since coordinates are elements in the dual, this implies that the corresponding graded vector space $\C{M}\equiv \C{W}=\C{V}[1]$ underlying the $L_\infty[1]$-algebra is concentrated in negative degrees, and the vector space of the $L_\infty$-algebra in non-positive degrees, $\C{V}=\oplus_{i=0}^\infty\C{V}_{-i}$.\footnote{In another possible convention, mentioned right after definition \ref{def:Linfty}, this would correspond to non-negative degrees. We stick to the present convention, however, since it is closer to the one used for NQ-manifolds, where then the coordinates have strictly positive degrees.}
\begin{definition} We call an $L_\infty$-algebra on $\C{V}=\oplus_{i=0}^{n-1}\C{V}_{-i}$ an \emph{$n$-term $L_\infty$-algebra} or simply a  \emph{Lie~$n$-algebra}. \end{definition}

Note that for degree reasons a Lie $n$-algebra is always an $L_n$-algebra (but certainly not vice versa). For example, a Lie 2-algebra is an $L_\infty$ structure defined on \begin{equation} \label{Lie2}
\C{V}_{-1} \stackrel{t}{\to} \C{V}_0 \, . 
\end{equation} Since $l_j \equiv [ \ldots ]_j$ has degree $2-j$ the only non-trivial brackets can be $l_1 \equiv t$, $l_2 \equiv [ \cdot , \cdot ]_2$ as well as a 3-bracket that is non-trivial when fed by three elements of degree 0, taking values in degree minus one then; thus here $l_3$ corresponds to an element of $\Lambda^3 \C{V}_{0}^* \otimes 
\C{V}_{-1}$. For vanishing $l_3$ this corresponds to a crossed module of Lie algebras or what is called a strict Lie 2-algebra in \cite{Baez:2003fs}, while a general Lie 2-algebra in the terminology used here is called semi-strict there (cf.~also \cite{Gruetzmann-Strobl} for more details on Lie 2-algebras and Lie 2-algebroids).  

From the above considerations we conclude that 
\begin{proposition} 
\label{propositiontruncation}
An NQ-vector space $(\C{W},Q)$ of highest coordinate degree $n$ is in\\ 1:1 correspondence with a Lie $n$-algebra $(\C{V}\equiv \oplus_{i=0}^{n-1}\C{V}_{-i}, [ \ldots ]_{j=1..n+1})$.
\end{proposition}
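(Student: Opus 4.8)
The plan is to obtain the proposition by composing the two bijections already established in this section: first, the correspondence between pointed Q-structures on a graded vector space $\C{W}$ and $L_\infty[1]$-algebra structures on it, under which $Q^2=0$ is equivalent to the generalized Jacobi identity \eqref{Jac'} (and \eqref{sym'} is automatic); and second, the degree-shift bijection \eqref{crochetsbrackets} relating $L_\infty[1]$-structures on $\C{W}$ to $L_\infty$-structures on $\C{V}$ with $\C{W}=\C{V}[1]$. The only preliminary point needed to invoke the first bijection is that on an NQ-vector space the pointedness condition \eqref{Q0} holds automatically: a constant term $c^{\beta}\frac{\partial}{\partial q^{\beta}}$ in $Q$ (with $c^{\beta}\in\B{R}$) would carry degree $-|\beta|$, which is negative whenever all coordinate degrees satisfy $|\beta|\geq 1$, and hence can never occur in a degree-$+1$ vector field; thus $Q$ vanishes at the origin and the expansion \eqref{Qsum} starts at $j=1$.

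Next I would track the degrees carefully, keeping apart the three gradings in play: the intrinsic $\B{Z}$-grading of $\C{W}$, the polynomial homogeneity degree $j$ labelling $Q_{(j)}$ in \eqref{Qsum}, and the number of bracket arguments. That the highest coordinate degree equals $n$ means the $q^{\alpha}$ occupy degrees $1,\dots,n$, i.e.\ the dual basis spans $\C{W}=\C{W}_{-1}\oplus\cdots\oplus\C{W}_{-n}$. Using $\C{W}_i=\C{V}_{i+1}$ this is precisely $\C{V}=\oplus_{i=0}^{n-1}\C{V}_{-i}$, the graded vector space underlying a Lie $n$-algebra, and the brackets are those read off from $Q_{(j)}$ via \eqref{derivedbracket} and transported to $\C{V}$ through \eqref{crochetsbrackets}.

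The genuinely new point, and the one I expect to require the most care, is to show that the resulting $L_\infty$-structure truncates to an $L_n$-algebra, i.e.\ that only the brackets $[\ldots]_{j}$ with $1\leq j\leq n+1$ can be nonvanishing, matching the range asserted in the statement. This follows by degree counting: the $j$-bracket has degree $2-j$ and its arguments lie in $\C{V}$ whose top degree is $0$, so its output has degree at most $2-j$; since $\C{V}$ is concentrated in degrees $\geq -(n-1)$, a nonzero output forces $2-j\geq -(n-1)$, that is $j\leq n+1$. The surviving structure equations are exactly the homogeneity components of the single identity $Q^2=0$, which by the first bijection reproduce the generalized Jacobi identities \eqref{Jac'}, nontrivial only up to $m=n+2$.

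Finally I would conclude bijectivity: both component maps are bijections for a fixed underlying graded vector space, the degree ranges match under the shift $\C{W}=\C{V}[1]$, and pointedness is automatic, so the assignment $(\C{W},Q)\mapsto(\C{V},[\ldots]_{j})$ and its inverse are well defined. There is no isolated hard computation here; rather, the main obstacle is purely organizational---ensuring that the truncation $j\leq n+1$ is \emph{forced} by the degree argument rather than imposed by hand, and that the three distinct gradings are never conflated---after which the proposition is an immediate consequence of the equivalences already proven above.
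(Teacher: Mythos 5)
Your proposal is correct and follows essentially the same route as the paper: the correspondence between pointed Q-structures and $L_\infty[1]$-algebras via \eqref{Qsum}--\eqref{derivedbracket}, the degree shift \eqref{crochetsbrackets} to pass to $\C{V}=\C{W}[-1]$, and the degree count showing that brackets with $j>n+1$ arguments vanish automatically (the paper's remark that a Lie $n$-algebra is always an $L_n$-algebra). Your explicit observation that pointedness \eqref{Q0} is forced on an NQ-vector space, since a constant term in a degree $+1$ vector field would have negative degree, is left implicit in the paper but is exactly the right justification for invoking the first bijection.
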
 

For the sequel it will be important to observe that any Lie $n$-algebra can be truncated in a canonical way to a Lie $m$-algebra with $2 \leq m \leq n$.\footnote{A similar statement is also true for Lie $n$-algebroids, or even the more general $L_n$-algebras/algebroids, but these generalizations are of no relevance for the present considerations.} This is most easily seen in the Q-language: 
\begin{lemme} \label{lemmatrunc} Any NQ-vector space $(\C{W},Q)$ of highest coordinate degree $n\geq 2$ projects in a canonical way to an NQ-vector space $(\widetilde{\C{W}},\widetilde{Q})$ of highest coordinate degree $m$, $2 \leq m \leq n$. Up to degree $m-1$ the ring of functions on $\C{W}$ and $\widetilde{\C{W}}$ agrees, while $C^\infty(\widetilde{\C{W}})$ is completed by adding the image of the degree $m-1$ functions under the action of $Q_{(1)}$.\end{lemme}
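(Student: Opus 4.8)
The plan is to carry out the construction and all verifications on the level of the function algebra, where the statement is cleanest and where nilpotency of the truncated vector field comes for free. Recall from \eqref{Qsum} that $Q=\sum_{j\geq 1}Q_{(j)}$ with $Q_{(1)}$ the linear part, dual to the $1$-bracket $t=[\cdot]_{1}$. I would \emph{define} $\widetilde{\C{W}}$ by declaring its function algebra to be the graded subalgebra $\C{A}\subseteq C^{\infty}(\C{W})$ generated by all coordinate functions $q^{\alpha}$ of degree $|\alpha|\leq m-1$ together with the degree-$m$ elements $Q_{(1)}(q^{\gamma})$, $|\gamma|=m-1$. As a graded vector space this amounts to $\widetilde{\C{W}}_{-i}=\C{W}_{-i}$ for $1\leq i\leq m-1$ and $\widetilde{\C{W}}_{-m}\cong \C{W}_{-m}/\ker\!\big(t|_{\C{W}_{-m}}\big)$, the coordinates of the last summand being a basis of $\mathrm{im}\,Q_{(1)}$ restricted to the span of the $q^{\gamma}$ with $|\gamma|=m-1$. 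By construction the two function rings agree in all degrees $\leq m-1$, and $\C{A}$ is freely generated, since the new degree-$m$ generators are linear in the $q^{\alpha}$ and hence independent of products of lower-degree ones; this realises precisely the description in the statement, and via Proposition \ref{propositiontruncation} identifies $\widetilde{\C{W}}$ with a Lie $m$-algebra.

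The whole lemma then reduces to a single claim: $\C{A}$ is preserved by $Q$, i.e.~$Q(\C{A})\subseteq\C{A}$. Granting this, $\widetilde{Q}:=Q|_{\C{A}}$ is a degree $+1$ derivation of $\C{A}$, and $\widetilde{Q}^{2}=Q^{2}|_{\C{A}}=0$ is inherited at once from $Q^{2}=0$, so that $(\widetilde{\C{W}},\widetilde{Q})$ is an NQ-vector space of highest coordinate degree $m$. Since $\C{A}$ is attached intrinsically to $(\C{W},Q)$ and $m$ (no choice enters beyond picking a basis of the image), the projection is canonical. To prove invariance I would check it on the algebra generators. For $|\alpha|\leq m-2$ the element $Q(q^{\alpha})=Q^{\alpha}$ has degree $\leq m-1$, so every monomial occurring in it involves only coordinates of degree $\leq m-1$ and thus lies in $\C{A}$. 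For $|\alpha|=m-1$ the element $Q^{\alpha}$ has degree $m$; its linear part $Q_{(1)}(q^{\alpha})$ is by definition one of the new generators, while each higher part $Q_{(j)}(q^{\alpha})$ with $j\geq 2$ is a product of at least two positive-degree factors of total degree $m$, hence of factors of degree $\leq m-1$, which again lie in $\C{A}$.

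The one step that is not mere degree bookkeeping, and the place where $Q^{2}=0$ is essential, is invariance on the new degree-$m$ generators $\xi:=Q_{(1)}(q^{\gamma})$ with $|\gamma|=m-1$: naively $Q(\xi)$ sits in degree $m+1$ and there is no a priori reason for it to lie in $\C{A}$. Here I would use nilpotency directly: writing $Q(q^{\gamma})=\xi+R$ with $R:=\sum_{j\geq 2}Q_{(j)}(q^{\gamma})$, the identity $0=Q^{2}(q^{\gamma})=Q(\xi)+Q(R)$ yields $Q(\xi)=-Q(R)$. Now $R$ is a polynomial in coordinates of degree $\leq m-1$ (a sum of products of at least two positive-degree factors of total degree $m$), so $R\in\C{A}$; applying the derivation $Q$ to such a polynomial produces, by the Leibniz rule, a sum of terms of the form (element of $\C{A}$)$\cdot Q(q^{\beta})$ with $|\beta|\leq m-1$, each of which lies in $\C{A}$ by the two cases already settled. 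Hence $Q(\xi)=-Q(R)\in\C{A}$, which completes the verification of $Q$-invariance and with it the proof.

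As a concluding remark I would observe that this procedure is the concrete incarnation of a cohomology-preserving truncation of the complex $(\C{V},t)$: it leaves the $t$-cohomology in degrees $0,\dots,-(m-1)$ untouched and renders the induced $t$ injective on the new top degree, so that no spurious cohomology is created by the truncation. The expected main obstacle is precisely the degree-$m$ generator case of the previous paragraph, since it is the only point at which the naive filtration by coordinate degree breaks down and the full force of $Q^{2}=0$ must be invoked; everything else is degree counting within the free graded-commutative algebra.
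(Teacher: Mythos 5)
Your proposal is correct and takes essentially the same route as the paper: both define $C^\infty(\widetilde{\C{W}})$ as the subalgebra of $C^\infty(\C{W})$ generated by the coordinates of degree $\leq m-1$ together with the $Q_{(1)}$-images of the degree $m-1$ functions, and both rest on the observations that the higher parts $Q_{(j)}$, $j\geq 2$, applied to degree $m-1$ coordinates land in the algebra generated by lower-degree coordinates, and that $Q^{2}=0$ handles closure at the top. Your explicit Leibniz-rule check on the new degree-$m$ generators is exactly the detail the paper compresses into its remark that acting with $Q_{(1)}$ yields the same total algebra as acting with all of $Q$, whence closure under the differential becomes immediate.
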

\begin{proof}
Since coordinates on $\C{W}$ have positive degrees, the action of the part $Q_{(1)}$ of $Q$ in \eqref{Qsum} results in the same total algebra as if we act with all of $Q$ (all remaining parts of $Q$ acting on the degree $m-1$ functions are already generated by lower degree coordinates).
In this way $C^\infty(\widetilde{\C{W}})$ is obviously a subring of $C^\infty({\C{W}})$ closed under the differential and the inclusion of differential graded algebras corresponds to a projection of $Q$-manifolds. 
\end{proof}
The operator  $Q_{(1)}$ (when restricted to homogeneous coordinates) is \emph{dual} to the map 
$\llbracket \cdot \rrbracket_1$ and, by restriction to the degree $m-1$ coordinates on $\C{W}\equiv \C{V}[1]$, which are a dual basis of $\C{W}_{-m+1}$, this corresponds to the map $t_{(m-1)}$, cf.~eqs.~\eqref{t} and \eqref{crochetsbrackets} as well as the following diagram: 
\begin{align}
&\C{V}_{-m+1}\quad{\overset{t_{(m-1)}}{\xrightarrow{\hspace*{2cm}}}}\quad\C{V}_{-m+2}\nonumber \\
&[-1] \Big\downarrow\hspace{3.3cm}\Big\downarrow[-1]\nonumber \\
&\C{W}_{-m}\quad{\overset{\llbracket . \rrbracket_{1}}{\xrightarrow{\hspace*{2cm}}}}\quad\C{W}_{-m+1}\label{diagram}\\
&\hspace{0.35cm}\ast \Big\downarrow\hspace{3.3cm}\Big\downarrow\ast\nonumber\\
&\C{W}_{-m}^{\ast}\quad{\overset{Q_{(1)}}{\xleftarrow{\hspace*{2cm}}}}\quad\C{W}_{-m+1}^{\ast}\nonumber
\end{align}

In this way we see that the original complex of the Lie $n$-algebra, 
\begin{equation} \label{original}
\C{V}_{-n+1} \stackrel{t_{(n-1)}}{\to} \C{V}_{-n+2} \stackrel{t_{(n-2)}}{\to} \ldots \stackrel{t_{(m)}}{\to} \C{V}_{-m+1} \stackrel{t_{(m-1)}}{\to}\C{V}_{-m+2} \stackrel{t_{(m-2)}}{\to} \ldots \stackrel{t_{(2)}}{\to}\C{V}_{-1}\stackrel{t_{(1)}}{\to}\C{V}_{0} \, , \end{equation}
is shifted to
\begin{equation}
\C{W}_{-n} \stackrel{\llbracket.\rrbracket_{1}}{\to} \C{W}_{-n+1} \stackrel{\llbracket.\rrbracket_{1}}{\to} \ldots \stackrel{\llbracket.\rrbracket_{1}}{\to} \C{W}_{-m} \stackrel{\llbracket.\rrbracket_{1}}{\to}\C{W}_{-m+1} \stackrel{\llbracket.\rrbracket_{1}}{\to} \ldots \stackrel{\llbracket.\rrbracket_{1}}{\to}\C{W}_{-2}\stackrel{\llbracket.\rrbracket_{1}}{\to}\C{W}_{-1} \, , \end{equation}
so that the corresponding truncated NQ-vector space is most easily seen from the dual picture to become
\begin{equation}\C{W}_{-m}^{\ast}\supset\R{Im}(Q_{(1)}\big|_{\C{W}_{-m+1}^{\ast}})\overset{Q_{(1)}}{\longleftarrow}\C{W}_{-m+1}^{\ast}\overset{Q_{(1)}}{\longleftarrow} \ldots \overset{Q_{(1)}}{\longleftarrow}\C{W}_{-2}^{\ast}\overset{Q_{(1)}}{\longleftarrow}\C{W}_{-1}^{\ast} \,  .
\label{W*complex}
\end{equation}
It is an easy exercise in linear algebra to verify that for any map $f \colon \mathbb{V} \to \mathbb{W}$ between finite dimensional vector spaces the image $\mathrm{Im}(f^*)$ of the dual map $f^* \colon \mathbb{W}^* \to \mathbb{V}^*$ can be identified with $\left( \mathbb{V}/ \mathrm{ker}f\right)^*$. Applying this to our situation, we thus obtain from the above $\R{Im}(Q_{(1)}\big|_{\C{W}_{-m+1}^{\ast}})=\left(\C{W}_{-m} /\R{Ker}(\llbracket.\rrbracket_{1})\right)^{\ast}$, which implies that the 
 complex \eqref{W*complex} is dual to 
\begin{equation}
\bigslant{\C{W}_{-m}}{\R{Ker}(\llbracket.\rrbracket_{1})}\stackrel{\llbracket.\rrbracket_{1}}{\to}\C{W}_{-m+1} \stackrel{\llbracket.\rrbracket_{1}}{\to} \ldots \stackrel{\llbracket.\rrbracket_{1}}{\to}\C{W}_{-2}\stackrel{\llbracket.\rrbracket_{1}}{\to}\C{W}_{-1}\, .\end{equation}
Summing up, we see that together with Proposition \ref{propositiontruncation} the above Lemma implies that any  Lie $n$-algebra defined on the complex \eqref{original} can be truncated canonically to a Lie $m$-algebra defined on the complex 
\begin{equation}\bigslant{\C{V}_{-m+1}}{\mathrm{Ker}(t_{(m-1)})}\stackrel{\widetilde{t}_{(m-1)}}{\to} \C{V}_{-m+2} \stackrel{t_{(m-2)}}{\to} \ldots \stackrel{t_{(2)}}{\to}\C{V}_{-1}\stackrel{t_{(1)}}{\to}\C{V}_{0}\, ,
\label{quotient}\end{equation}
where ${\widetilde{t}_{(m-1)}}$ is the canonical descendant of the map ${t}_{(m-1)}$ to the quotient. Denote by $\tau$ the map from \eqref{original} to \eqref{quotient}, which, between degree 0 and degree $-m+2$ is the identity map, at degree $-m+1$ maps any element to its equivalence class generated by elements in the kernel of $ {t}_{(m-1)}$, and at all higher degrees is the zero map. Then the brackets $\widetilde{l}_{j}$ on the truncated 
Lie $m$-algebra are related to the original brackets on the Lie $n$-algebra $(\C{V},l_{j})$ by means of 
\begin{equation} \label{down}
\widetilde{l}_{j}\circ\tau^{\otimes j}=\tau\circ l_{j}\, .
\end{equation}
In other words, 
\begin{proposition}\label{proptrunc}
For any Lie $n$-algebra $(\C{V},l_{j})$, 
\begin{equation} \label{original2}
\C{V}_{-n+1} \stackrel{t_{(n-1)}}{\to} \C{V}_{-n+2} \stackrel{t_{(n-2)}}{\to}  \ldots \stackrel{t_{(2)}}{\to}\C{V}_{-1}\stackrel{t_{(1)}}{\to}\C{V}_{0} \, , \end{equation}
the brackets $l_j$ factor through the canonical projection of \eqref{original2} to \eqref{quotient} for any $2 \leq m \leq n \leq \infty$. Thus there are canonically defined multi-linear maps $\widetilde{l}_{j}$ satisfying \eqref{down} which equip \eqref{quotient} with the structure of a Lie $m$-algebra.
\end{proposition}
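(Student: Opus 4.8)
The plan is to verify that the prescription \eqref{down} defines the maps $\widetilde{l}_{j}$ unambiguously, and then to transport the generalized Jacobi identities \eqref{genJac} of $(\C{V},l_{j})$ along the surjection $\tau$. Recall that $\tau$ is the identity on $\C{V}_{0},\ldots,\C{V}_{-m+2}$, the canonical projection $\C{V}_{-m+1}\to\C{V}_{-m+1}/\R{Ker}(t_{(m-1)})$ in degree $-m+1$, and zero on $\C{V}_{-i}$ for $i\ge m$; in particular $\tau$ is degree preserving and
\[
\R{Ker}(\tau)=\R{Ker}(t_{(m-1)})\oplus\bigoplus_{i\ge m}\C{V}_{-i}\, .
\]
Since $\tau$ is surjective, so is $\tau^{\otimes j}$, and its kernel is spanned by the wedge monomials having at least one factor in $\R{Ker}(\tau)$. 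Hence \eqref{down} determines $\widetilde{l}_{j}$ uniquely \emph{provided} $\tau\circ l_{j}$ annihilates every such monomial; this is the property I must establish first.

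Using the graded antisymmetry of $l_{j}$ I would reduce to monomials whose \emph{first} argument $q_{1}$ lies in $\R{Ker}(\tau)$. The output $l_{j}(q_{1},\ldots,q_{j})$ sits in degree $D=|q_{1}|+\sum_{i\ge 2}|q_{i}|+2-j$, and since all $|q_{i}|\le 0$ one has $D\le |q_{1}|+2-j$. A short case distinction according to where $q_{1}$ sits shows that $\tau$ kills the result for degree reasons in almost all cases: if $|q_{1}|\le -m-1$, or if $|q_{1}|=-m$ with $j\ge 2$, or if $q_{1}\in\R{Ker}(t_{(m-1)})$ with $j\ge 3$, one finds $D\le -m$, a degree on which $\tau$ vanishes. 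The $j=1$ instances are immediate: $l_{1}$ on $\R{Ker}(t_{(m-1)})$ gives $0$, while $l_{1}=t_{(m)}$ on $\C{V}_{-m}$ lands in $\R{Ker}(t_{(m-1)})$ by $t_{(m-1)}\circ t_{(m)}=0$ and is thus annihilated by $\tau$.

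The single genuinely nontrivial case, which I expect to be the crux, is $j=2$ with $q_{1}\in\R{Ker}(t_{(m-1)})\subset\C{V}_{-m+1}$ and $q_{2}\in\C{V}_{0}$: here $D=-m+1$ exactly, so $\tau$ does not vanish by degree alone and one must show $l_{2}(q_{1},q_{2})\in\R{Ker}(t_{(m-1)})$. This follows from the compatibility of the differential with the binary bracket, i.e.\ from \eqref{genJac} with two entries ($M=2$ in the notation there, where I write $M$ for the number of arguments so as not to clash with the truncation order $m$), which gives $t\big(l_{2}(q_{1},q_{2})\big)=\pm\, l_{2}\big(t q_{1},q_{2}\big)\pm\, l_{2}\big(q_{1},t q_{2}\big)$. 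Both terms on the right vanish, since $t q_{1}=t_{(m-1)}(q_{1})=0$ by assumption and $t q_{2}=0$ because $\C{V}_{0}$ is the top degree. Thus $l_{2}(q_{1},q_{2})$ is $t_{(m-1)}$-closed, hence annihilated by $\tau$, which completes the factoring and therefore the well-definedness of all $\widetilde{l}_{j}$.

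It remains to check that the $\widetilde{l}_{j}$ assemble into a Lie $m$-algebra. Graded antisymmetry and the degree $2-j$ of $\widetilde{l}_{j}$ are inherited directly from $l_{j}$ through \eqref{down}, because $\tau$ preserves degrees; and for $j\ge m+2$ the map $\widetilde{l}_{j}$ vanishes for degree reasons on the $m$-term complex \eqref{quotient}, so that one indeed obtains an $L_{m}$-structure. For the generalized Jacobi identities I would apply $\tau$ to \eqref{genJac} for the $l_{j}$ and repeatedly invoke \eqref{down} in the form $\tau\circ l_{k}=\widetilde{l}_{k}\circ\tau^{\otimes k}$ to rewrite each nested term $\tau\big(l_{M-k+1}(l_{k}(\cdots),\cdots)\big)$ as $\widetilde{l}_{M-k+1}\big(\widetilde{l}_{k}(\widetilde{q}_{\sigma(1)},\ldots),\widetilde{q}_{\sigma(k+1)},\ldots\big)$ with $\widetilde{q}_{i}=\tau(q_{i})$. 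Since $\tau$ is linear and annihilates the (vanishing) left-hand side, while the prefactors $\chi(\sigma)$ and $(-1)^{k(M-k)}$ coincide for the $\widetilde{q}_{i}$ and the $q_{i}$ (again because $\tau$ preserves degrees), this yields the full identity \eqref{genJac} for the $\widetilde{l}_{j}$ on arguments in the image of $\tau$; surjectivity of $\tau$ then extends it to all of $\widetilde{\C{V}}$. This establishes \eqref{quotient} as a Lie $m$-algebra and completes the proof.
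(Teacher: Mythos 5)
Your proof is correct, but it takes a genuinely different route from the paper's. You argue directly in the language of multiple brackets: well-definedness of the $\widetilde{l}_{j}$ reduces, by graded antisymmetry, to showing that $\tau\circ l_{j}$ annihilates monomials whose first factor lies in $\R{Ker}(\tau)$, which you settle by the degree estimate $D\le |q_{1}|+2-j$ in all cases except one; the single surviving case ($j=2$, one argument in $\R{Ker}(t_{(m-1)})$, the other in $\C{V}_{0}$) follows from the two-argument instance of \eqref{genJac} (compatibility of $l_{1}$ with $l_{2}$), and the generalized Jacobi identities are then transported along the degree-preserving surjection $\tau$ using \eqref{down} together with surjectivity of $\tau$ in each degree and the fact that Koszul signs depend only on degrees. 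The paper instead proves the statement dually, in the Q-manifold picture of Lemma \ref{lemmatrunc}: the functions on $\C{W}=\C{V}[1]$ generated by coordinates of degree $\le m-1$, completed by the $Q_{(1)}$-image of the degree $m-1$ functions, form a subring of $\C{C}^{\infty}(\C{W})$ that is closed under all of $Q$ (the nonlinear parts of $Q$ contribute only terms already generated by lower-degree coordinates), hence an inclusion of differential graded algebras, i.e.\ a projection of Q-manifolds; nilpotency of the truncated $\widetilde{Q}$ is then automatic, and dualizing via $\mathrm{Im}(f^{*})\cong\left(\mathbb{V}/\mathrm{ker}\,f\right)^{*}$ produces exactly the complex \eqref{quotient} with the descended brackets, so that Proposition \ref{propositiontruncation} finishes the job. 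What each approach buys: the paper's argument is essentially one line once the Q-correspondence is in place, since all Jacobi identities are packaged into $Q^{2}=0$; your argument is more elementary and self-contained, and it supplies in full generality precisely the bracket-level verification that the paper declares ``not so straightforward to verify in the original language of multiple brackets'' and carries out only in examples (the $n=m=2$ case in the text and $n=3$, $m=2$ in Appendix \ref{AppendixB}), showing that it in fact collapses to degree bookkeeping plus a single application of the $l_{1}$--$l_{2}$ compatibility.
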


Let us illustrate this at the simplest example where $n=m=2$, i.e.~the truncation of a given Lie 2-algebra $(\C{V}_\bullet\equiv \C{V}_{-1} \oplus \C{V}_{0}, l_1, l_2, l_3)$, defined on  \eqref{Lie2}, to a simpler one defined on 
\begin{equation}
\widetilde{\C{V}}_{-1}\stackrel{\widetilde{t}}{\to} \C{V}_0 \, ,
\end{equation}
where $\widetilde{\C{V}}_{-1} \equiv \C{V}_{-1}/\mathrm{Ker}( t)$. For homogeneous elements we write $x\in\C{V}_{-1}$, $\widetilde{x}\in\widetilde{\C{V}}_{-1}$, and $y\in\C{V}_{0}$, where therefore $\widetilde{x}$ are equivalence classes in $\C{V}_{-1}$ ($x_1, x_2$ $\in \widetilde{x}$ $\Leftrightarrow$ $\exists x_3 \in \mathrm{Ker}(t)$ s.t. $x_1 = x_2 + x_3$). In the previous notation, $\tau(x) \equiv \widetilde{x}$, while we did not distinguish elements in $ \C{V}_0 $ on which $\tau$ is just the identity. We need to convince ourselves that all the three brackets descend to the quotient. For the 1-bracket $\widetilde{l}_1 \equiv \widetilde{t}$ this is trivial:  by definition $\widetilde{t}(\widetilde{x}) := t(x)$ which is well-defined since in the notation from above we have 
$t(x_1)=t(x_2+x_3)=t(x_2)$. For the 2-bracket we only need to check the case when one element is of degree -1 and one of degree 0 (otherwise the bracket either vanishes identically or remains unmodified by the quotient). For this we need to check that the difference between the result of two elements $x_1, x_2$ $\in \widetilde{x}$ in $[x_i, y]_2$ lies in the kernel of the map $t$. This is, however, a consequence of \eqref{genJac} 
for $m=2$: $t([x_1-x_2,y]_2) \equiv t([x_3,y]_2) = [t(x_3),y]_2 = 0$, since the difference between two such elements is an element $x_3$ in the kernel of $t$. For the 3-bracket, on the other hand, nothing is to be checked since only $[y_1,y_2,y_3]\equiv l_3(y_1,y_2,y_3)$ is non-vanishing and we just need to take the equivalence class of the result so as to define $\widetilde{l}_3(y_1,y_2,y_3)$.

In Appendix \ref{AppendixB} we study also the case of a truncation of a Lie 3-algebra $(\C{V}_\bullet\equiv \C{V}_{-2} \oplus \C{V}_{-1} \oplus \C{V}_{0}, l_1, l_2, l_3, l_4)$  to a Lie 2-algebra  $(\widetilde{\C{V}}_\bullet\equiv \widetilde{\C{V}}_{-1} \oplus \C{V}_{0}, \widetilde{l}_1, \widetilde{l}_2, \widetilde{l}_3)$ in some detail. Here one does not only need to check that the brackets descend correspondingly to the quotient, but also that on the quotient the simplified Jacobi identities are satisfied (i.e.~here in particular that \eqref{genJac} for $m=4$ on $\C{V}_\bullet$ with a non-vanishing $l_4$ indeed descends to an equation with a vanishing 4-bracket $\widetilde{l}_4 \equiv 0$). 
It is here where the first time it also plays a role that the image of $t_{(2)}$ lies in the kernel of the subsequent map $t_{(1)}$.

In fact, we could have also stayed with the image af the preceding map in the truncation. There is a likewise truncation of a Lie $n$-algebra \eqref{original2} to a Lie $m$-algebra
defined over 
\begin{equation}\bigslant{\C{V}_{-m+1}}{\mathrm{Im}(t_{(m)})}\stackrel{\widehat{t}_{(m-1)}}{\to} \C{V}_{-m+2} \stackrel{t_{(m-2)}}{\to} \ldots \stackrel{t_{(2)}}{\to}\C{V}_{-1}\stackrel{t_{(1)}}{\to}\C{V}_{0}\, ,
\label{quotient2}\end{equation}
which, for a non-vanishing cohomology of the complex \eqref{original2} at level (degree) $-m+1$, yields a slightly bigger Lie $m$-algebra than the one of \eqref{original}. Here 
$\widehat{t}_{(m-1)}$ is the map induced by ${t}_{(m-1)}$ on the quotient (which is as before well-defined, since the image of  ${t}_{(m)}$ is lying in the kernel of  ${t}_{(m-1)}$). This corresponding proposition that one can formulate in analogy to Proposition \ref{proptrunc} follows from a lemma similar to \ref{lemmatrunc}, by replacing the image of $Q_{(1)}$ by the kernel of $Q_{(1)}$ inside $\C{W}^{\ast}$ at degree $m$. We also comment on this at the example of $n=3$, $m=2$ in Appendix \ref{AppendixB}. However, for the present application to the tensor hierarchy in six dimensions it is Proposition \ref{proptrunc} that we will need.

It is evident that a general Lie $n$-algebra cannot be truncated \emph{naively} to a Lie $m$-algebra (by just setting to zero the vector spaces of degrees starting from $-m$ and correspondingly with the higher brackets excluded in this way), simply since Jacobi identities do not hold on the nose for the lower brackets, but receive corrections from the higher ones that cannot just be dropped. The main lesson of this section is that this truncation works to all orders, cf.~Proposition \ref{proptrunc}, if in degree $-m+1$ we take the quotient with respect to the kernel of the subsequent chain map $t$. While this is not so straightforward to verify in the original language of multiple brackets, it is rather evident from the dual Q-language, cf.~Lemma \ref{lemmatrunc}. 

%--------------------------------------------------------5eme section-------------------------------------------------------------------%

\section{Gauge theory description of the truncated tensor hierarchy}
\label{sec:principal}
%\TS{!!Feldstaerken mit f* erklaeren und auch Bianchi an einem Beispiel!}

%-----------------------1ere subsection--------------------------%

\subsection{The Q-structure and Bianchi identities}
\label{The Q-structure and Bianchi identities}
We will now apply the Q-formalism defined in Section \ref{section2} to the particular tensor hierarchy described in Section \ref{section3}. In this example, the space-time is six-dimensional, so there will not be any $p$-form field of degree higher than 6. The Q-formalism should  apply to the entire content of the theory, but to keep the presentation transparent, we will limit ourselves to the 1-form $A^{a}$, 2-form $B^{I}$ and the projected 3-form $g^{It}C_{t}$. These are the forms that appear in the Lagrangian formulation of the dynamics, cf.~\cite{Bandos:2013jva}. 
The results of section \ref{section4} show that the Q-formalism can be consistently truncated from the full set of fields to the subset of fields $A^{a}, B^{I}, g^{It}C_{t}$. Moreover, for convenience we will not always spell out the explicit projection $g^{It}C_{t}$ on the 3-forms. Within this section all expressions involving $C_{t}$ etc  are to be understood as contracted with $g^{It}$.

We start from the graded vector space $\C{W}$ whose basis vectors are $\{q_{a},q_{I},q^{t},q^{\lambda}...\}$, with degrees $-1$, $-2$, $-3$, $-4$, $\ldots$, respectively and a degree preserving map $a\colon T[1]\Sigma\longrightarrow\C{W}$ such that $p$-form fields are defined as in (\ref{defa}) according to  $A^{\alpha}=a^{\ast}(q^{\alpha})$. 
In the spirit of the tensor hierarchy, we assume that there is a possibly infinite Q-structure, i.e.~that the vector field 
\begin{equation}% definition general Q
	\begin{aligned}\label{eq:Qagain}
	Q = (-\frac{1}{2}C_{ab}^{c}q^{a}q^{b}	&+t^{c}_{I}q^{I})\frac{\partial}{\partial q^{c}}
		+(-\Gamma^{I}_{aJ}q^{a}q^{J}+\frac{1}{6}H^{I}_{abc}q^{a}q^{b}q^{c}+t^{It}q_{t})\frac{\partial}{\partial q^{I}}\\										&+(A^{s}_{at}q^{a}q_{s}+B_{IJt}q^{I}q^{J}+D_{abIt}q^{a}q^{b}q^{I}+E_{abcdt}q^{a}q^{b}q^{c}q^{d}+t^{\lambda}_{t}q_{\lambda})\frac{\partial}{\partial q_{t}}+\ldots
	\end{aligned}
\end{equation}
has some extension with higher degree coordinates such that it squares to zero. To relate the above coefficients with those present in the tensor hierarchy we  only need to compare the formulas for the field strengths~(\ref{eq:Fa}), (\ref{fieldstrengthsH}), and (\ref{FH}), which yields\footnote{Equivalently, and apparently simpler, we can compare the gauge transformations \eqref{eq:deltaAnewohneF} with \eqref{eq:deltaA}. Note, however, that to obtain the former formulas we needed to know the precise form of the field strengths as well, so as to arrive from \eqref{eq:deltaAnew} to \eqref{eq:deltaAnewohneF}. The situation may change, however, if one finds gauge transformations of the form \eqref{eq:deltaAnewohneF}, i.e.~without any derivatives except for a leading $\dd \lambda^\alpha$, directly!}
\begin{align}%coefficients of Q
	C^{c}_{ab}		&= -f_{ab}{}^{c}\;,
	& t^{c}_{I}		&= -h^{c}_{I}\;,
	& \Gamma^{I}_{aK}	&= h^{s}_{K}d^{I}_{as}-g^{Is}b_{Ksa}\;,\label{translate}\\[1ex]
	H^{I}_{abc}		&= -d^{I}_{s[a}f^{\phantom{}}_{bc]}{}^{s}\;,
	& t^{It}			&= -g^{It}\;,
	& A^{r}_{at}		&= -f_{at}{}^{r}+d^{J}_{at}h^{r}_{J}\;,\nonumber\\
	B_{JKt}			&= -\frac{1}{2}h^{s}_{(J}b^{\phantom{}}_{K)ts}\;,
	& D_{abJt}			&= -\frac{1}{3}h^{v}_{J}b^{\phantom{}}_{Kt[a}d^{K}_{b]v}\;,
	& E_{abcdt}		&= \frac{1}{12}b^{\phantom{}}_{Kt[a}f^{\phantom{}}_{bc}{}^{s}d^{K}_{d]s}\;.
	\nonumber
\end{align}
Note that in general the naive truncation at any level $p$ of the vector field corresponding to the tensor hierarchy recalled in section 3 does \emph{not} square to zero. So, if in particular, we omit all the terms indicated by $\ldots$ in the above $Q$, it will \emph{not} square to zero, $Q^2 \neq 0$. 
The question of the extendability of a given $Q$-structure to the next order is an interesting mathematical question that we intend to address elsewhere. Here, we either assume that such an extension exists (which may, e.g., give a lower non-vanishing bound on the dimension of the subsequent vector space to be added and spanned by the next level of coordinates), or, equally well, we may directly focus on the truncated version (corresponding to the truncated system of gauge fields $A^{a}$, $B^{I}$, and $g^{It}C_{t}$, which is the main focus of this paper. 

The point is that the Bianchi identities in the tensor hierarchy are assumed to hold only up to field strengths of the next level always; only when truncating appropriately at the last level, one obtains a closed system of Bianchi identities from which one can conclude the nilpotency of the vector field by means of Theorem \ref{thm:QBianchi}. On the level of the graded manifolds, the truncation was subject of the previous section, cf.~in particular Lemma \ref{lemmatrunc} and Proposition \ref{proptrunc}. 
We will now describe the truncated homological vector field explicitly, restricting the coordinates to degree 3. 

For technical reasons we embedded this  $\widetilde{\C{W}}$ into  a degree 3 graded N-manifold $\widehat{\C{W}}$ of the form 
\begin{equation}\label{What}
\widehat{\C{W}}_\bullet : = {\C{W}_{-2}}[1] \stackrel{}{\to}\C{W}_{-2}\stackrel{}{\to}\C{W}_{-1} \, ,
\end{equation}
equipped with the homogeneous coordinates  $\{\widehat{q}^{k},\widehat{q}^{I},\widehat{\widehat{q}}^I\}$ of degrees 1, 2, and 3, respectively. Let $\phi\colon \C{W}\longrightarrow\widehat{\C{W}}$ be the degree preserving map defined implicitly by means of:
\begin{align}
\phi^*(\widehat{q}^{k})&=q^{k}\, ,\nonumber \\
\phi^*(\widehat{q}^{I})&=q^{I}\, , \label{tildetilde}\\
\phi^*(\widehat{\widehat{q}}^{I})&= Q_{(1)}(q^{I})= - g^{It} q_t \, .\nonumber 
\end{align}
The last equation shows that $\phi^{\ast}$ is a surjection from $\C{W}_{-2}[1]^{\ast}$ to $\R{Im}(Q_{(1)}\big|_{\C{W}_{-2}^{\ast}})=\left(\C{W}_{-3} /\R{Ker}(\llbracket.\rrbracket_{1})\right)^{\ast}$, which can be identified with $\left(\R{Im}(\llbracket.\rrbracket_{1})\right)^{\ast}\subset(\C{W}_{-2})^{\ast}$ (except for the shifted grading). Thus the restriction of $\phi^{\ast}$ to $\C{W}_{-2}[1]^{\ast}$ acts as an isomorphism (of graded vector spaces) between $\left(\R{Im}(\llbracket.\rrbracket_{1})[1]\right)^{\ast}$ and $\R{Im}(Q_{(1)}\big|_{\C{W}_{-2}^{\ast}})$, and following section \ref{sec:Linfty}, it implies that
\begin{equation}
\R{Im}(\phi)  \cong \bigslant{\C{W}_{-3}}{\R{Ker}(\llbracket.\rrbracket_{1})} \oplus\C{W}_{-2}\oplus\C{W}_{-1} \, .
\end{equation}
Then on $\widehat{\C{W}}$ we define the vector field:
\begin{eqnarray}% definition projected general Q
	\widehat{Q} &=& (-\frac{1}{2}C_{ab}^{c}\widehat{q}^{a}\widehat{q}^{b}	 +t^{c}_{I}\widehat{q}^{I})\frac{\partial}{\partial \widehat{q}^{c}}
		+(-\Gamma^{I}_{aJ}\widehat{q}^{a}\widehat{q}^{J}+\frac{1}{6}H^{I}_{abc}\widehat{q}^{a}\widehat{q}^{b}\widehat{q}^{c}+\widehat{\widehat{q}}^{I})\frac{\partial}{\partial \widehat{q}^{I}}\label{Qhatneu}\\
&&{}+
g^{It}(b_{Jta}\widehat{q}^{a}\widehat{\widehat{q}}^{J}-B_{KJt}\widehat{q}^{K}\widehat{q}^{J}-D_{abJt}\widehat{q}^{a}\widehat{q}^{b}\widehat{q}^{J}-E_{abcdt}\widehat{q}^{a}\widehat{q}^{b}\widehat{q}^{c}\widehat{q}^{d})\frac{\partial}{\partial \widehat{\widehat{q}}^{I}}
\;.\nonumber
\end{eqnarray}
Let us remark that $(\widehat{\C{W}},\widehat{Q})$ is in general \emph{not} (yet) the truncated Q-manifold $(\widetilde{\C{W}},\widetilde{Q})$ of Lemma \ref{lemmatrunc}. The technical difficulty here is that in general the linear map on the r.h.s.~of the third equation \eqref{tildetilde} is not a surjection. In other words, the coordinates $\widehat{\widehat{q}}^{I}$ form an overcomplete set of (graded) coordinates on the manifold of our interest.  This (graded) manifold is $\widetilde{\C{W}} := \mathrm{Im} (\phi) \subset \widehat{\C{W}}$. On  $\widetilde{\C{W}}$ the coordinates of degree 3 satisfy the constraints $v_I \widehat{\widehat{q}}^{I} \equiv 0$ for every $v$ such that $v_I g^{It} =0$. The vector field $\widehat{Q}$ restricts to the submanifold  $\widetilde{\C{W}} \subset \widehat{\C{W}}$, however: this becomes obvious from the fact that on $\widehat{\C{W}}$ one has $\widehat{Q}(v_I \widehat{\widehat{q}}^{I}) \equiv 0$. Let us call $\widehat{Q}|_{\widetilde{\C{W}}} := \widetilde{Q}$. Let us denote the surjective map, induced by $\phi$,  from $\C{W}$ to $\widetilde{\C{W}} \subset \widehat{\C{W}}$ by $\widetilde{\tau}$. 
\emph{Assuming} that $(\C{W},Q)$ is a (possibly infinite dimensional) Q-manifold (corresponding to a Lie $\infty$-algebra), the map $\widetilde{\tau} \colon ({\C{W}},{Q}) \to (\widetilde{\C{W}},\widetilde{Q})$ is the Q-morphism of Lemma \ref{lemmatrunc} corresponding to the truncation of the original ``big'' Lie $n$-algebra to the Lie 3-algebra of the interest for the six-dimensional model under consideration. $\widetilde{\tau}$ would then be a Q-morphism, which in the language of $L_\infty$-algebras corresponds to an $L_\infty$-morphism (cf.~the map $\tau$ in the previous section). The situation is summarized schematically in the figure~\ref{figure1}.

\begin{figure}[bt]
   \centering
   \includegraphics[width=7cm]{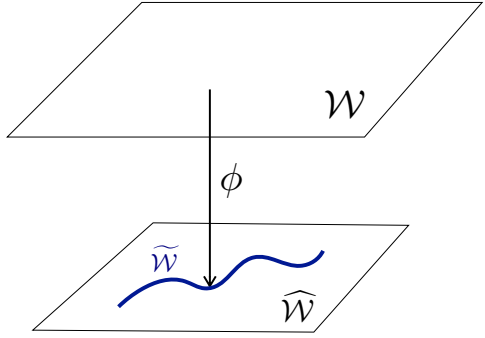}
   \caption{{\small $({\C{W}},{Q})$ is the NQ-manifold corresponding to a ``huge Lie $n$-algebra'' (possibly $n=\infty$) and  $\phi$ the Q-morphism onto   $(\widetilde{\C{W}},\widetilde{Q})$ corresponding to the Lie 3-algebra of relevance for the tensor hierarchy in the six dimensional theory. For technical reasons we embedded $(\widetilde{\C{W}},\widetilde{Q})$ into the graded N-manifold $(\widehat{\C{W}},\widehat{Q})$, which is also of maximal degree 3, but not a Q-manifold; in general  $\widehat{Q}^2\neq 0$.}}
   \label{figure1}
\end{figure}

While we do not address the existence of the ``big'' Lie $\infty$-algebra, as mentioned in the beginning of this section, for the present purposes it is sufficient to prove that 
$(\widetilde{\C{W}},\widetilde{Q})$ is a Q-manifold, yielding a Lie 3-algebra, which subsequently will be made explicit in the following subsection.

One of the main results of the present paper is then summarized in the following theorem, giving the Q-structure governing the six-dimensional tensor hierarchy:

\begin{theoreme}\label{propositionhenning}% proposition Q^2=0
The system of algebraic equations (\ref{lol}) ensures that 
the degree one vector field (\ref{Qhatneu})  with the coefficients given in (\ref{translate}) above define a Q-structure $\widetilde{Q}$ on $\widetilde{\C{W}} \equiv \R{Im}(\phi)\subset\widehat{\C{W}}$, i.e.:
\begin{equation}\label{eq:Imphi}
\widetilde{Q}^2 =0
	\;.
\end{equation}
\end{theoreme}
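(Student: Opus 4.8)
The plan is to verify $\widetilde{Q}^2=0$ directly on generators, exploiting that $\widetilde{Q}^2=\frac{1}{2}[\widetilde{Q},\widetilde{Q}]$ is again a (degree $+2$) vector field, hence vanishes iff it annihilates the coordinate functions $\widehat{q}^a$, $\widehat{q}^I$, $\widehat{\widehat{q}}^I$. Since $\widehat{Q}$ was shown to be tangent to $\widetilde{\C{W}}=\R{Im}(\phi)$ (because $\widehat{Q}(v_I\widehat{\widehat{q}}^I)\equiv 0$ whenever $v_Ig^{It}=0$), restriction commutes with composition and $\widetilde{Q}^2=(\widehat{Q}^2)|_{\widetilde{\C{W}}}$. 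I would therefore compute $\widehat{Q}^2$ on all of $\widehat{\C{W}}$ and show that the outcome lies in the ideal cutting out $\widetilde{\C{W}}$ --- that is, that every surviving term is either identically zero by virtue of (\ref{lol}) or carries an uncontracted factor $\widehat{\widehat{q}}^I$ that is killed by the constraint $v_I\widehat{\widehat{q}}^I=0$ (equivalently, by the $g$-projection built into (\ref{tildetilde})). This last possibility is exactly the mechanism by which $\widehat{\C{W}}$ fails to be a Q-manifold while its submanifold $\widetilde{\C{W}}$ succeeds.

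Concretely, I would first dispatch the two lowest generators. Acting twice with $\widehat{Q}$ on $\widehat{q}^c$ and collecting by monomial type produces a term cubic in $\widehat{q}^a$ (from the $CC$ and $tH$ pieces), a mixed term $\propto\widehat{q}^a\widehat{q}^I$ (from $Ct$ and $t\Gamma$), and a linear term $t^c_I\widehat{\widehat{q}}^I=-h^c_I\widehat{\widehat{q}}^I$. Using the dictionary (\ref{translate}), the cubic part reduces to the ``non-Jacobiator'' combination $f_{[pq}{}^uf_{r]u}{}^s-\tfrac{1}{3}h^s_Id^I_{u[p}f_{qr]}{}^u$ and vanishes by the third equation of (\ref{lol}); the mixed part collapses, after discarding a $g^{Is}b_{Jsa}$ contribution through the fourth equation $h^r_Ig^{It}=0$, to $-(f_{rs}{}^th^r_I-d^J_{rs}h^t_Jh^r_I)$ and vanishes by the fifth equation; and the remaining $-h^c_I\widehat{\widehat{q}}^I$ vanishes on $\widetilde{\C{W}}$ again by the fourth equation. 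I would then repeat the bookkeeping for $\widehat{Q}^2(\widehat{q}^I)$ (degree $4$), now organized into the monomial types $\widehat{q}^a\widehat{q}^b\widehat{q}^c\widehat{q}^d$, $\widehat{q}^a\widehat{q}^b\widehat{q}^J$, $\widehat{q}^I\widehat{q}^J$ and $\widehat{q}^a\widehat{\widehat{q}}^J$; here the $g$-contracted blocks of $\widehat{Q}(\widehat{\widehat{q}}^I)$ (the $B$-, $D$-, $E$- and $b$-terms) interlock with the $\Gamma\Gamma$, $\Gamma H$ and $\Gamma\widehat{\widehat{q}}$ contributions, and I expect each coefficient to be annihilated by one of the first, second, sixth, seventh and eighth equations of (\ref{lol}) --- the symmetry constraint $b_{Jr(s}d^J_{uv)}=0$ being responsible for the total symmetrization in the quartic block.

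Finally I would treat the top generator $\widehat{Q}^2(\widehat{\widehat{q}}^I)$ (degree $5$), where only vanishing modulo the constraint ideal is required, so that every term may be contracted with $g^{It}$ and the relations $h^r_Ig^{It}=0$ invoked freely; I expect the remaining combinations to close on the seventh equation of (\ref{lol}) together with $b_{Jr(s}d^J_{uv)}=0$ and the symmetry of $d^I_{ab}$. The main obstacle is not conceptual but combinatorial: keeping the Koszul signs straight under the graded Leibniz rule (the degree-$1$ $\widehat{q}^a$ anticommute while the degree-$2$ coordinates are even), and matching each collected tensor coefficient, after relabelling dummy indices and (anti)symmetrizing, to the precise normalization of one of the eight equations of (\ref{lol}) --- a step made delicate by the overcompleteness of the $\widehat{\widehat{q}}^I$ discussed around (\ref{tildetilde}).

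As a cross-check and conceptual backbone, one can also argue more economically: the field strengths (\ref{eq:Fa}) encoded by $\widetilde{Q}$ through Definition \ref{def:field} satisfy the closed system of generalized Bianchi identities (\ref{eq:BianchiH}) --- crucially, the $g^{It}$-projected $3$-form identity closes without reference to any $4$-form --- and these follow from (\ref{lol}). Since $\dim\Sigma=6\geq m+2=5$, Theorem \ref{thm:QBianchi} then yields $\widetilde{Q}^2=0$ at once. I would nevertheless present the direct computation as the primary argument, both because it is self-contained in (\ref{lol}) and because Theorem \ref{thm:QBianchi} is phrased for a plain NQ-vector space rather than for the embedded, overcomplete model $\widetilde{\C{W}}\subset\widehat{\C{W}}$ used here.
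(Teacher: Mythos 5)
Your proposal follows the paper's own two routes: your primary direct computation is precisely the strategy of the paper's Appendix~\ref{AppendixA} (compute $\widehat{Q}^2$ on $\widehat{\C{W}}$, organize by generator and monomial type, and show each coefficient either vanishes by (\ref{lol}) or carries an uncontracted $\widehat{\widehat{q}}^{I}$ killed by the $g^{It}$-projection, i.e.\ by $\phi^{\ast}$), and your cross-check via the Bianchi identities (\ref{eq:BianchiH}) and Theorem~\ref{thm:QBianchi} is exactly the paper's stated ``second option''. The only caveat is bookkeeping, not validity: in the paper's appendix the degree-4 and degree-5 blocks end up invoking more of the equations (\ref{lol}) than your sketch anticipates (in particular the first, second, third and eighth, packaged as ``gauge-invariance'' identities of the composite tensors such as $H^{B}_{abc}$, $B_{ABt}$, $D_{abIt}$, $E_{abcdt}$), whereas your low-degree checks on $\widehat{q}^{c}$ match the paper's verification exactly.
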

\begin{remarque} One does \emph{not} have $\widehat{Q}^2=0$ as such. One crucially needs to restrict to the image of the map $\phi$. Eq.~(\ref{eq:Imphi}), i.e.~$ (\widehat{Q}\big|_{\R{Im}(\phi)})^{2}=0$,  is equivalent to $\phi^* \circ \widehat{Q}^2=0$.
\end{remarque}

\begin{proof} 
Either by a lengthy, direct calculation that is deferred to Appendix~\ref{AppendixA} \emph{or} by proving the Bianchi identities (\ref{eq:BianchiH}) and applying Theorem \ref{thm:QBianchi}. We remark that using the ``covariant derivatives'' $D$ introduced in section \ref{section3} and identities on their square, the second option is significantly simpler.
\end{proof}

By construction, now the field strengths $\C{F}^\alpha$ are obtained by means of $\C{F}^\alpha = f^* (\bar{\dd} q^\alpha)$. But also the field strengths $\C{H}^\alpha$ can be obtained in a similar way (cf.~Eqs.~(\ref{FH})); one has \begin{equation}
\C{H}^\alpha = f^* (\omega^\alpha)
\end{equation} for some 1-forms on the fibers $T[1]\C{W}$ of the total bundle, where
\begin{equation}\label{eq:omega}
\omega^a = \bar{\dd} q^a \, , \quad \omega^I= \bar{\dd} q^I +d^{I}_{ab}q^{a} \bar{\dd} q^b \, , \quad \omega_t = \bar{\dd} q_t -b_{Jta}q^J\bar{\dd} q^a -\frac{1}{3}b^{\phantom{}}_{Kt[a}d^{K}_{b]c}q^{a} q^{b}\bar{\dd} q^c \, .
\end{equation}
Except for the first of these 1-forms, none of them is exact. For example,
$d^{I}_{ab}$ is \emph{symmetric} in the lower indices while $q^a$ and $q^b$ are anti-commuting; if it were  antisymmetric, on the other hand, 
$\omega^I$ would be exact. If the 1-forms $\omega^\alpha$ were exact, moreover, we could perform a (graded) change of coordinates and present the field strengths in the simpler form of the second equation of  (\ref{eq:fdef}) (in the new coordinates). However, this is not the case.
It is in fact precisely these terms that provide Chern-Simons type of contributions and it is only here where quantities like $d^{I}_{ab}$ appear ``naked'' for the first time (since in Q-structure some of them never do, cf.~Eqs.~(\ref{translate})). The fact that the omegas are not exact is intimately related to the fact that Chern-Simons type of terms cannot be expressed in terms of field strengths. It would be interesting to see, if such terms in the tensor hierarchy correspond to characteristic classes along the lines of \cite{Kotov:2007nr}.

Both ways of proving the Theorem \ref{propositionhenning} indicated are essentially equivalent according to Theorem \ref{thm:QBianchi}. The second path may appear shorter at first sight; this is because some of the computation has been transferred into proving properties of the operator $D$ and the representations of the Leibniz algebra by means of the underlying set of algebraic equations (\ref{lol}). (For example, one can show that $\frac{1}{2}[D,D] \Lambda^\alpha = \C{H}^a X_{a\beta}{}^\alpha \Lambda^\beta$ holds true). However, if $Q$ squaring to zero is once established, one can use this fact to find the Bianchi identities in a most efficient way. All information about Bianchi, also its concrete form, is contained in the formula $\dd \C{F}^\alpha = -\C{F}(Q^{\alpha})$, or, equivalently, in the equation 
(\ref{eq:Qmorph}). 

We illustrate this at the lowest two levels, where there is no difference between $Q$ and $\widetilde{Q}|_{\R{Im}(\phi)}$. We use both versions mentioned above. We start with the 2-form field strength:
\begin{equation}
\dd\C{H}^{a}	\equiv \dd\C{F}^a=-\C{F}(Q^{a})\equiv -\C{F}(
\frac{1}{2}f_{bc}{}^{a}q^{b}q^{c}	-h^{a}_{I}q^{I}))= 
f_{bc}{}^{a}A^{b}\wedge\C{F}^{c}+h^{a}_{I}\C{F}^{I}
\end{equation}
which is in fact the first equation of (\ref{eq:BianchiF}) (note that for the last equality we used the fact that $\C{F}$ satisfies a Leibniz-type property \cite{Bojowald:2004wu}, which follows directly from its definition Def.~\ref{def:field}). We still have to change the $\C{F}^I$ field strength on the rhs for $\C{H}^I$ so as to arrive at the first Bianchi identity in the form of (\ref{eq:BianchiH}). To do so we can employ the equations (\ref{FH}) as well as the definition 
(\ref{eq:D}) together with that of the relevant representation (\ref{gen1}), which then immediately yields the first equation of  (\ref{eq:BianchiH}). We could proceed equally with the higher form degree field strengths, specializing the equations (\ref{eq:BianchiF}) and translating to (\ref{eq:BianchiH}) like in the special case just exposed.

It is illustrative, however, to proceed in a slightly different form here that reveals some of the underlying structure. First we note that the equations (\ref{eq:omega}) correspond to a change of basis of the 1-forms on the graded manifold under consideration, from a holonomic basis $\bar \dd q^\alpha$ to the non-holonomic one $\omega^\alpha\equiv \omega^\alpha_\beta(q)\bar \dd q^\beta$. Let us denote by $\nu^\alpha_\beta$ the inverse matrix to $\omega^\alpha_\beta$,  $\bar \dd q^\alpha= \nu^\alpha_\beta\omega^\beta$. With this we have in full generality
\begin{equation}\label{eq:dHalpha}
\dd \C{H}^\alpha \equiv \dd f^* (\omega^\alpha) =  f^*\left( (\bar\dd + \C{L}_Q)(\omega^\alpha_\beta(q)\bar \dd q^\beta ) \right) \, .
\end{equation}
If $\omega^\alpha$ were an exact basis, the $\bar \dd$ would not act and we could (anti)commute the Lie-derivative along $Q$ through that differential so as to act as just as $Q$ on the primitive. But it is a non-holonomic basis, so that several further terms remain. Before continuing the general discussion, we illustrate this at the 3-form field strength: 
\begin{equation}
\label{eq:exI}
\dd \C{H}^I =  f^*\left( (\bar\dd + \C{L}_Q)(\bar{\dd} q^I +d^{I}_{ab}q^{a} \bar{\dd} q^b  ) \right) = f^*(-\bar{\dd} Q^I +d^{I}_{ab} \bar{\dd}q^{a} \bar{\dd} q^b+
d^{I}_{ab}Q^{a} \bar{\dd} q^b-d^{I}_{ab}q^{a} \bar{\dd} Q^b)
\, .
\end{equation}
Now one needs to read off the components $Q^I$ and $Q^a$ from equations (\ref{eq:Qagain}), (\ref{translate}), change back to the non-holonomic basis by means of $\bar \dd q^a=\omega^a$, $\bar \dd q^I =  \omega^I - d^I_{ab}q^a \omega^b$, and finally use the fact that $f^*$ is a morphism of algebras such that e.g.~$ f^*(d^{I}_{ab} \omega^{a} \omega^b)=d^{I}_{ab} \C{H}^{a} \wedge \C{H}^b$ or $f^*(-d^I_{ab}h^a_J q^J \omega^b)=-d^I_{ab}h^a_J B^J\wedge \C{H}^b$. Like this one ends up with the second equation in (\ref{eq:BianchiH}), the first of the two example terms appearing on the r.h.s.~of the resulting equation, the second term cancelling against the term in $\bar \dd Q^I$ containing $\Gamma$ (cf.~Eq.~(\ref{translate})). Although with this non-holonomic basis the calculation is considerably longer than with the holonomic one, it is important to note that \emph{none} of the algebraic identities (\ref{lol}) need to be used anymore, their content enters already in the first equality of (\ref{eq:exI}), expressing that $f^*$ is a chain map. 

We conclude with the calculation in full generality (valid for \emph{every} basis $\omega^\alpha$ in \emph{every} higher gauge theory):
the rhs of Eq.~(\ref{eq:dHalpha}) gives 
\begin{align}
\begin{split}
&f^*\left( (\bar\dd q^\gamma + Q^\gamma)\frac{\partial \omega^\alpha_\beta(q) }{\partial q^\gamma}\bar \dd q^\beta + (-1)^{|\alpha| - |\beta|} \omega^\alpha_\beta(q)(-\bar \dd Q^\beta) \right)\nonumber\\
&= f^*\left( (\nu^\gamma_\delta(q) \omega^\delta + Q^\gamma(q))\frac{\partial \omega^\alpha_\beta(q) }{\partial q^\gamma}\nu^\beta_\mu(q) \omega^\mu - (-1)^{|\alpha| - |\beta|} \omega^\alpha_\beta(q)\nu^\gamma_\delta(q) \omega^\delta  \frac{\partial Q^\beta(q)}{\partial q^\gamma} \right) \, , 
\end{split}
\end{align}

\noindent where we displayed all the $q$-dependences for clarity. The only effect of $f^*$ is now to replace all $q^\alpha$ by $A^\alpha$ and all $\omega^\alpha$ by $\C{H}^\alpha$, always keeping the given order of objects. So the general Bianchi identities for any choice of basis of 1-forms $\omega^\alpha$ have the form
\begin{equation}
\dd \C{H}^\alpha = (\nu^\gamma_\delta \C{H}^\delta + Q^\gamma)\omega^\alpha_{\beta,\gamma}\nu^\beta_\mu \C{H}^\mu - (-1)^{|\alpha| - |\beta|} \omega^\alpha_\beta\nu^\gamma_\delta \C{H}^\delta  Q^\beta_{,\gamma} \, , 
\end{equation}
where we did no more display the $A$-dependence explicitly and we abbreviated the left-derivatives by a comma w.r.t.~the corresponding coordinate. We leave it as an exercise to the reader to reproduce (\ref{eq:BianchiH}) by specializing the above equation.

\subsection{The structural \texorpdfstring{$L_{3}$}{L3}-algebra}
\label{structural}
We will now apply the discussion of section \ref{sec:Linfty} to the present case.\footnote{
The results of this subsection have some overlap with what
was found independently in~\cite{Palmer:2013pka}. 
}
 We formally have an NQ-manifold $\C{W}$ with homogeneous coordinates $\{q^{r}, q^{I}, q_{t}, q_{\lambda}, \ldots\}$, with an associated Lie $n$-algebra (with $n$ sufficiently big, possibly $n=\infty$):
\begin{equation}
\ldots\stackrel{}{\to}\C{V}_{-3} \stackrel{t_{(3)}}{\to}\C{V}_{-2} \stackrel{t_{(2)}}{\to}\C{V}_{-1}\stackrel{t_{(1)}}{\to}\C{V}_{0} \, . \end{equation}
We have seen in the last section in Theorem (\ref{propositionhenning}) that the NQ-manifold $\C{W}$ can be projected to another NQ-manifold $\widetilde{\C{W}} = \R{Im}(\phi)$ of degree 3,
\begin{equation}\label{Wker}
\widetilde{\C{W}}_\bullet  \cong \bigslant{\C{W}_{-3}}{\R{Ker}(\llbracket.\rrbracket_{1})} \stackrel{\llbracket.\rrbracket_{1}}{\to}\C{W}_{-2}\stackrel{\llbracket.\rrbracket_{1}}{\to}\C{W}_{-1} \, , 
\end{equation}
i.e.~$\widetilde{\C{W}}_{-1} \cong \C{W}_{-1}$,  $\widetilde{\C{W}}_{-2} \cong \C{W}_{-2}$, and  $\widetilde{\C{W}}_{-3} \cong \bigslant{\C{W}_{-3}}{\R{Ker}(\llbracket.\rrbracket_{1})}$, which corresponds to the truncation of the above Lie $n$-algebra to a Lie 3-algebra:
\begin{equation}
\bigslant{\C{V}_{-2}}{\R{Ker}(t_{(2)})} \stackrel{t_{(2)}}{\to}\C{V}_{-1}\stackrel{t_{(1)}}{\to}\C{V}_{0} \, .
\end{equation}
For clarity, we relate \eqref{Wker} more explicitly into the Q-language: $\widetilde{Q}$ is a vector field on $\widetilde{\C{W}}$, it thus acts on functions on $\widetilde{\C{W}}$. The set of homogeneous linear functions on $\widetilde{\C{W}}$ corresponds to the complex dual to \eqref{Wker}, yielding (cf.~Eqs.~\eqref{Qsum}, \eqref{diagram}, and \eqref{W*complex})
\begin{equation}
\mathrm{Im}(\widetilde{Q}_{(1)}) \stackrel{\widetilde{Q}_{(1)}}{\longleftarrow}\C{W}^*_{-2}\stackrel{\widetilde{Q}_{(1)}}{\longleftarrow}\C{W}^*_{-1} \, ,
\end{equation}
where $\widetilde{Q}_{(1)}$ is the part of $\widetilde{Q}$ linear in the coordinates. It was convenient to embed $\widetilde{\C{W}}_\bullet$ into $\widehat{\C{W}}_\bullet$, cf.~Equations \eqref{What} as well as figure \ref{figure1}, even if the corresponding vector field  \eqref{Qhatneu} squared to zero only on the restriction to $\widetilde{\C{W}}$. 

We will now specify coordinates on $\widehat{\C{W}}$ adapted to the truncation determined by $\phi$. It is an elementary fact in linear algebra that the following exact sequence
\begin{equation}\label{basischoice}
0\longrightarrow\R{Im}(\llbracket.\rrbracket_{1}) \longrightarrow\C{W}_{-2}\longrightarrow\bigslant{\C{W}_{-2}}{\R{Im}(\llbracket.\rrbracket_{1})}\longrightarrow 0 \, , 
\end{equation}
admits a splitting, that is: $\C{W}_{-2}$ is isomorphic (as a vector space) to the direct sum $\R{Im}(\llbracket.\rrbracket_{1})\oplus\bigslant{\C{W}_{-2}}{\R{Im}(\llbracket.\rrbracket_{1})}$. Then given any set of degree 2 coordinates $\{q^{A}\}$ on $\R{Im}(\llbracket.\rrbracket_{1})$, and any set of degree 2 coordinates $\{q^{R}\}$ on the quotient $\bigslant{\C{W}_{-2}}{\R{Im}(\llbracket.\rrbracket_{1})}$, the reunion (of their image in $\C{W}_{-2}$) form a set of coordinates on $\C{W}_{-2}$. We will denote this reunion as $\{q^{I}\}=\{q^{A},q^{R}\}$, without further notice to the inclusion maps. The letter $A$ (resp. $R$) from the beginning (resp. the end) of the alphabet has been chosen to emphasize the fact that the set of elements labelled by this letter is a basis of $\R{Im}(\llbracket.\rrbracket_{1})$ (resp. a supplementary subspace isomorphic to the quotient $\bigslant{\C{W}_{-2}}{\R{Im}(\llbracket.\rrbracket_{1})}$). With respect to this basis, the dual map $Q_{(1)}:\R{Im}(\llbracket.\rrbracket_{1})^{\ast}\longrightarrow\R{Im}(Q_{(1)}\big|_{\C{W}_{-2}^{\ast}})$ is an isomorphism (cf the discussion following Equation $(\ref{tildetilde})$). We will denote by $\{\widehat{q}^{I}\}=\{\widehat{q}^{A},\widehat{q}^{R}\}$ the coordinates on $\C{W}_{-2}$ when seen as a subspace of $\widehat{W}$. Their suspension $\widehat{\widehat{q}}^{I}=[1]\widehat{q}^{I}$ gives a set of coordinates on $\C{W}_{-2}[1]$ adapted to the truncation, as explained below. If one picks up a set of degree 1 coordinates $\{\widehat{q}^{k}\}$ on $\C{W}_{-1}$, we ends up with a complete set of coordinates for $\widehat{\C{W}}$, with respect to which the action of the map $\phi$ is easily defined:
\begin{align}
\phi^*(\widehat{q}^{k})&=q^{k}\, ,\nonumber \\
\phi^*(\widehat{q}^{I})&=q^{I}\, , \label{tildetildebis}\\
\phi^*(\widehat{\widehat{q}}^{A})&= Q_{(1)}(q^{A})=- g^{At} q_t \, .\nonumber\\
\phi^*(\widehat{\widehat{q}}^{R})&= 0 \, .\nonumber 
\end{align}
By construction, and following the above discussion, $\phi^{\ast}$ acts as an isomorphism between $\R{Im}(\llbracket.\rrbracket_{1})[1]^{\ast}$ and $\R{Im}(Q_{(1)}\big|_{\C{W}_{-2}^{\ast}})$. That is, $\{\widehat{q}^{k},\widehat{q}^{I},\widehat{\widehat{q}}^{A}\}$ is a set of adapted coordinates to $\widetilde{\C{W}}=\R{Im}(\phi)$, so that their dual $\{\widehat{q}_{k},\widehat{q}_{I},\widehat{\widehat{q}}_{A}\}$ form a basis of $\widetilde{\C{W}}$. The corresponding basis elements in the unshifted truncated algebra $\widetilde{\C{V}}$ will be respectively noted $\{v_{r}\}$, $\{v_{I}\}=\{v_{A},v_{R}\}$ and $w_{A}$, of respective degrees $0,-1,-2$. Note that the above equations implies that with this choice of particular basis in $\C{W}_{-2}$, we have that $g^{Rt}=0$ and $h_{A}^{a}=0$.

In this setup, the homological vector field $\widetilde{Q}$ is given by (it is the same structure as in equation $(\ref{Qhatneu})$, but written directly in adapted coordinates on $\R{Im}(\phi)$):
\begin{eqnarray}% definition projected general Q
	\widetilde{Q} &=& \Big(-\frac{1}{2}C_{ab}^{c}\widehat{q}^{a}\widehat{q}^{b}	 -h^{c}_{I}\widehat{q}^{I}\Big)\frac{\partial}{\partial \widehat{q}^{c}}
		+\Big(-\Gamma^{A}_{aJ}\widehat{q}^{a}\widehat{q}^{J}+\frac{1}{6}H^{A}_{abc}\widehat{q}^{a}\widehat{q}^{b}\widehat{q}^{c}+\widehat{\widehat{q}}^{A}\Big)\,\frac{\partial}{\partial \widehat{q}^{A}}\nonumber\\
		&&+\Big(-\Gamma^{R}_{aJ}\widehat{q}^{a}\widehat{q}^{J}+\frac{1}{6}H^{R}_{abc}\widehat{q}^{a}\widehat{q}^{b}\widehat{q}^{c}\Big)\,\frac{\partial}{\partial \widehat{q}^{R}}\label{eq:tildeQ}\\
&&{}+
g^{At}\Big(b_{Bta}\widehat{q}^{a}\widehat{\widehat{q}}^{B}-B_{KJt}\widehat{q}^{K}\widehat{q}^{J}-D_{abJt}\widehat{q}^{a}\widehat{q}^{b}\widehat{q}^{J}-E_{abcdt}\widehat{q}^{a}\widehat{q}^{b}\widehat{q}^{c}\widehat{q}^{d}\Big)\,\frac{\partial}{\partial \widehat{\widehat{q}}^{A}}
\;,\nonumber
\end{eqnarray}

We can read the structure of the Lie 3-algebra on this homological vector field, and using Equations (\ref{crochetsbrackets},\ref{derivedbracket}) we obtain:
\begin{align}% brackets L_3 structure
	[v_{I}]_{1}				&= -h^{a}_{I}v_{a}\;,\\
	[w_{A}]_{1}				&= v_{A}\;,\nonumber\\
	[v_{a},v_{b}]_{2}				&= f_{ab}{}^{c}v_{c}\;,\nonumber\\
	[v_{a},v_{K}]_{2}			&= -(h^{s}_{K}d^{I}_{as}-g^{Is}b_{Ksa})v_{I}\;,\nonumber\\
	[v_{a},w_{A}]_{2}			&= g^{Bt}b_{Ata}w_{B}\;,\nonumber\\
	[v_{J},v_{K}]_{2}	&= -h^{s}_{(J}b^{\phantom{}}_{K)ts}g^{At}w_{A}\;,\nonumber\\
	[v_{a},v_{b},v_{c}]_{3}			&= -d^{I}_{s[a}f^{\phantom{}}_{bc]}{}^{s}v_{I}\;,\nonumber\\
	[v_{a},v_{b},v_{J}]_{3}	&= \frac{2}{3}h^{v}_{J}b^{\phantom{}}_{Kt[a}d^{K}_{b]v}g^{At}w_{A}\;,\nonumber\\
	[v_{a},v_{b},v_{c},v_{d}]_{4}	&= -2b^{\phantom{}}_{Kt[a}f^{\phantom{}}_{bc}{}^{s}d^{K}_{d]s}g^{At}w_{A}\;.\nonumber
\end{align}
In Section \ref{sec:Linfty}, the definition of the Jacobi identity in the context of $L_{n}$-algebras was given. We can check them on the $L_{3}$-algebra given above. Since the equation $\widetilde{Q}^{2}=0$ corresponds to 12 equations, there will be 12 Jacobi identities which are not identically trivial, and for example we are giving the Jacobi identities for $j=1,2,3$. They can be compared to Equations (\ref{bobbis1}) in the Appendix~\ref{AppendixA} (having in mind that $g^{Rt}=0$ and $h_{A}^{a}=0$):
\begin{align}% Jacobi identities
	&[[w_{A}]_{1}]_{1}=-h_{A}^{a}v_{a}=0\\
	&[[v_{a},v_{I}]_{2}]_{1}+[[v_{I}]_{1},v_{a}]_{2}=h^{s}_{I}(f_{sa}{}^{b}-h_{J}^{b}d^{J}_{as})v_{b}=0\nonumber\\
	&[[v_{a},w_{A}]_{2}]_{1}+[[w_{A}]_{1},v_{a}]_{2}=g^{Bt}b_{Ata}v_{B}+\Gamma^{B}_{aA}v_{B}=0\nonumber\\
	&[[v_{I}]_{1},v_{J}]_{2}+[[v_{J}]_{1},v_{I}]_{2}-[[v_{I},v_{J}]_{2}]_{1}=2h^{a}_{(I|}\Gamma^{K}_{a|J)}v_{K}+h^{s}_{(I}b^{\phantom{}}_{J)ts}g^{At}v_{A}=0\nonumber\\
	&[[v_{a},v_{b}]_{2},v_{c}]_{2}+[[v_{b},v_{c}]_{2},v_{a}]_{2}-[[v_{a},v_{c}]_{2},v_{b}]_{2}+[[v_{a},v_{b},v_{c}]_{3}]_{1}\nonumber\\
	&\hspace{4.5cm}=\left(-3f_{[ab}{}^{s}f_{c]s}{}^{d}+h_{I}^{d}d^{I}_{s[a}f_{bc]}{}^{s}\right)v_{d}=0\nonumber\\
	&[[v_{a},v_{b}]_{2},v_{I}]_{2}+[[v_{b},v_{I}]_{2},v_{a}]_{2}-[[v_{a},v_{I}]_{2},v_{b}]_{2}+[[v_{a},v_{b},v_{I}]_{3}]_{1}+[[v_{I}]_{1},v_{a},v_{b}]_{3}\nonumber\\
	&\hspace{4.5cm}=\left(-f_{ab}{}^{s}\Gamma^{J}_{sI}-h_{I}^{s}H^{J}_{sab}+2\Gamma^{K}_{[a|I}\Gamma^{J}_{|b]K}-2g^{Jt}D_{abIt}\right)v_{J}=0\nonumber\\
	&[[v_{a},v_{b}]_{2},w_{A}]_{2}-[[v_{a},w_{A}]_{2},v_{b}]_{2}+[[v_{b},w_{A}]_{2},v_{a}]_{2}+[[w_{A}]_{1},v_{a},v_{b}]_{3}\nonumber\\
	&\hspace{4.5cm}=g^{Bt}\left(f_{ab}{}^{d}b_{Atd}+2g^{Cs}b_{At[a|}b_{Ct|b]}-2D_{abAt}\right)w_{B}=0\nonumber\\
	&[[v_{a},v_{J}]_{2},v_{K}]_{2}+[[v_{J},v_{K}]_{2},v_{a}]_{2}+[[v_{a},v_{K}]_{2},v_{J}]_{2}-[[v_{J}]_{1},v_{a},v_{K}]_{3}-[[v_{K}]_{1},v_{a},v_{J}]_{3}\nonumber\\
	&\hspace{4.5cm}=g^{At}\left(-\Gamma^{L}_{a(J}B_{K)L t}+4h^{s}_{(J|}D_{as|K)t}-2B_{JKs}g^{Bs}b_{Bta}\right)w_{A}=0\;.\nonumber
\end{align}

\begin{remarque}
The truncation of this Lie 3-algebra to a Lie 2-algebra is deferred to Appendix \ref{AppendixB}.
\end{remarque}

%-----------------------2eme subsection--------------------------%

\subsection{The gauge transformations as inner vertical automorphisms}
\label{subsec:inner}
We now turn to the gauge transformations in the present formalism, providing them with a geometrical interpretation. We already prepared the stage for these considerations in the previous sections so that it mostly amounts to collect the respective formulas. 

However, we will now perform a slight change of paradigm: Before we \emph{assumed} that there is a Q-structure \eqref{eq:Qagain} on a bigger graded manifold $\C{W}$ that then projects down by a Q-morphism $\phi$ to $(\widetilde{\C{W}},\widetilde{Q})$, cf.~Eq.~\eqref{eq:tildeQ} and Figure \ref{figure1}, that is relevant for the tensor hierarchy in six dimensional space-times $\Sigma$. In this article we only \emph{prove} that $(\widetilde{\C{W}},\widetilde{Q})$ is a Q-manifold, cf.~Theorem \ref{propositionhenning}. We also showed that if we truncate the vector field \eqref{eq:Qagain} before the dots and the graded manifold $\C{V}$ at degree smaller than minus three, then the map $\phi$ defined by means of \eqref{tildetilde} preserves the vector field $Q$ (even if possibly $Q^2 \neq 0$ by the truncation!). In other words, we want to use $(\widetilde{\C{W}},\widetilde{Q})$ as the target Q-manifold (since neither $({\C{W}},{Q})$ nor even $(\widehat{\C{W}},\widehat{Q})$ are guaranteed to be Q-manifolds, cf.~figure \ref{figure1}). On the other hand, we do not want to introduce particular coordinates on $\widetilde{\C{W}}$, but, in some sense, use those of ${\C{W}}$ with their  corresponding fields $A^a$, $B^I$, and $C_t$. For the degree 1 and degree 2 coordinates and fields this does not pose any problem. Still, when using $q_t$ and $C_t$ one needs to be careful: Only those combinations of $q_t$ can be used that project down to $\widetilde{\C{W}}$, i.e.~which can be obtained as the pullback of functions on $\widetilde{\C{W}}$. A complete generating set of these degree 3 coordinates is provided by $g^{It}q_t$. Likewise, it is only the combination $g^{It}C_t$ of the 3-form fields that may be considered in this section. 

Similarly we need to be careful with what vector fields on ${\C{W}}$ we use. They must as well project down to $\widetilde{\C{W}}$. In other words, they must not leave the set of ``admissible'' functions on ${\C{W}}$. In this context it will be useful to note that \emph{every} degree $-1$ vector field has this property (simply by degree reasons) and the degree $+1$ vector field $Q$ has this property in view of Lemma \ref{lemmatrunc} (as one may certainly also verify with the explicit expressions). Since gauge transformations  are generated by commutators of such vector fields, we do not encounter any problems using the Q-formalism (with its essential defining property $Q^2=0$---that holds true on the set of functions considered!). We thus can consider recovering gauge transformations of the form 
\eqref{eq:deltaAnewohneF} simply by contracting both sides of its last equation by $g^{It}$ (this also has the effect of annihilating the very last term, $k_{t}^{\lambda} \lambda_\lambda$, which is the only appearance of the ``unwanted'' next-level contribution of the gauge parameters  $\lambda_\lambda$).

We now start with these gauge transformations, also since they are the simplest ones in the present formalism. Using the vector field $Q$ of Equations \eqref{eq:Qagain} together with \eqref{translate}, we observe that the gauge transformations \eqref{eq:deltaAnewohneF} are precisely of the form \eqref{eq:deltaA}. It thus remains to specialise Equation \eqref{offen} to the present context, taking care of the contraction issue of the 3-form fields mentioned above, however. Using \eqref{translate}, this yields 
\begin{eqnarray}
[\delta_{\lambda},\delta_{\lambda'}]\, A^a &=& \delta_{\widehat{\lambda}}\,A^a \nonumber \, , \\ {}
[\delta_{\lambda},\delta_{\lambda'}] \,B^I &=& \delta_{\widehat{\lambda}}\,B^I -d^{I}_{p[a}f_{bc]}{}^{p}\lambda^{a}\lambda'^{b}\C{F}^{c}\label{doffen} \, ,\\ {} [\delta_{\lambda},\delta_{\lambda'}]\,
\left(g^{It} C_t \right)&=& \delta_{\widehat{\lambda}}\, \left(g^{It}C_t\right) + 2g^{It}b_{Jt[a}f_{bc}{}^{p}d_{d]p}^{J}\lambda^{a}\lambda'^{b}\C{F}^{c}A^{d}\nonumber \\
&&\hspace{2cm}-\frac{2}{3}g^{It}h_{J}^{p}b_{Kt[a}d^{K}_{b]p}(\lambda^{a} \lambda'^{b}  \C{F}^{J}-\lambda^{a} \C{F}^{b}  \lambda'^{J}+\lambda'^{a}  \C{F}^{b}\lambda^{J})\, ,\nonumber
\end{eqnarray}
where the new parameters $\widehat{\lambda}$ are given by (cf.~Equation \eqref{hutepsilon}):
\begin{eqnarray}
\widehat{\lambda}^a &=&   -f_{bc}{}^{a}\lambda^{b}\lambda'^{c} \, , \nonumber \\
\widehat{\lambda}^I &=& (h^{s}_{K}d^{I}_{as}-g^{Is}b_{Ksa})(\lambda^{a}\lambda'^{K}-\lambda'^{a}\lambda^{K})+d^{I}_{s[a}f^{\phantom{}}_{bc]}{}^{s}\lambda^{a}\lambda'^{b}A^{c} \, , \label{something} \\
\widehat{\lambda}_t &=&
		-(-f_{at}{}^{s}+d^{J}_{at}h^{s}_{J})(\lambda^{a}\lambda'_{s}-\lambda'^{a}\lambda_{s})-h^{s}_{(I}b^{\phantom{}}_{J)ts}\lambda^{I} \lambda'^{J}
	-b^{\phantom{}}_{Kt[a}f^{\phantom{}}_{bc}{}^{s}d^{K}_{d]s}\lambda^{a}\lambda'^{b} A^{c}  A^{d}\;\nonumber\\
								&&\hspace{2cm}-\frac{2}{3}h^{v}_{I}b^{\phantom{}}_{Kt[a}d^{K}_{b]v}(\lambda^{a} A^{b}  \lambda'^{I}-\lambda'^{a}  A^{b}  \lambda^{I}-\lambda^{a} \lambda'^{b}  B^{I})   \, . \nonumber
\end{eqnarray}
Note that certainly $g^{It}$ does not change under gauge transformations since these are mere constants. However, this projection is \emph{needed} to derive the last of the  three equations of \eqref{offen}  by means of the derived bracket constructions outlined in section \ref{subsection22} since this construction relies essentially on $Q^2=0$. On the other hand in the last equation of \eqref{something} such a projection is \emph{not} needed, as the corresponding vector field is of negative degree, as argued already above.

We observe that the gauge symmetries \eqref{eq:deltaAnewohneF} do not close. Still, by construction, the terms obstructing closure lie inside the ideal $\C{I}$ generated by the field strengths (since each individual gauge transformation \eqref{eq:deltaAnewohneF} or \eqref{eq:deltaA} of the commutator was required to have this property---at least when respecting the projection by $g^{It}$ in the second case). Still, the commutators \eqref{doffen} were derived from a picture in which they correspond to infinitesimal inner automorphisms of a Q-bundle---which, as such, certainly form a closed algebra, as illustrated by figure~\ref{Figure:2} (cf.~also \cite{Bojowald:2004wu,Kotov:2010wr,Gruetzmann-Strobl} for further details).
%The transformations $\delta A^\alpha$ in this picture correspond to the restriction of the vector field $V_\epsilon$ to the section $a$ that corresponds to the given $A^\alpha$.
The transformations $\delta A^\alpha$ in this picture correspond to the restriction of the vector field $V_\epsilon$ to the section $a$ that corresponds to the given $A^\alpha$. $V_\epsilon$ is an (infinitesimal) automorphism since it is ad$_{Q_\C{N}}$-closed, it is inner, since it is even ad$_{Q_\C{N}}$-exact. As such, also when restricted to the vertical ones, the set $\{V_\epsilon\}$  certainly forms a ((closed)) Lie algebra with respect to the commutator as the bracket.

\begin{figure}[bt]
   \centering
   \includegraphics[width=8cm]{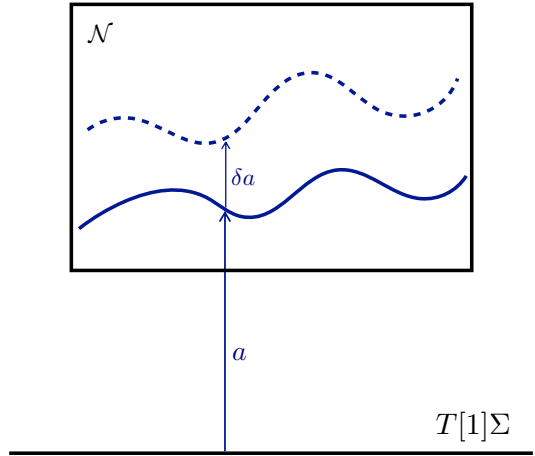}
   \caption{{\small Schematic picture of the Q-bundle $(\C{N},Q_{\C{N}})\to (T[1]\Sigma,\mathrm{d})$ with typical  fiber  $(\widetilde{\C{W}},\widetilde{Q})$. The collection of gauge fields $(A^\alpha)=(A^a,B^I,C_\cdot)$ is encoded into a section $a \colon T[1]\Sigma \to \C{N}$. In general, this is a section in the category of graded manifolds only; it has vanishing field strengths $F^\alpha$ precisely if it is a section in the category of Q-manifolds also. The joint infinitesimal gauge variations  $\delta A^\alpha$ correspond to an infinitesimal variation $\delta a$ of the given section. This variation is generated by a vertical vector field  $V_\epsilon\equiv [Q_{\C{N}},\epsilon]$, the restriction of which to the image of this section $a$ gives 
$\delta a$.  $V_\epsilon$ generates an inner vertical automorphism of the Q-bundle $\C{N}$.}
  \label{Figure:2}}
\end{figure}

For any fixed choice of $a$ or $A^\alpha$, the gauge transformations \eqref{eq:deltaAnewohneF} are in one-to-one correspondence with infinitesimal vertical inner automorphisms $V_\epsilon$ with an $\epsilon$ that is ``\emph{constant}'' along the fibers. Here ``constant'' is to be understood in a particular coordinate system along the fibers or, 
equivalently, by means of an auxiliary connection chosen on the fibers. The constant value chosen at each fiber over $(\sigma, \mathrm{d}\sigma) \in T[1]\Sigma$ is determined by the value of $\lambda$ at this point by requiring 
that $V_\epsilon$ restricted to the point $a(\sigma, \mathrm{d}\sigma)$ generates precisely the given infinitesimal transformation 
$\delta_{\lambda} A^\alpha$ (by means of $a^*(V_\epsilon(q^\alpha))$). Now, the 
main point is that the set of such ``constant automorphisms'' forms a subspace 
of the Lie algebra of all (infinitesimal vertical inner) automorphisms, but not a Lie subalgebra.
  
The fact that an eventual symmetry algebra is what is called ``open'' restrains the possible functionals $S$ for which they can be symmetries. Clearly, the set of \emph{all}  
(also appropriately restricted, like ``local'') symmetries of a functional $S$ form a (generically infinite dimensional) Lie group and infinitesimally its corresponding Lie algebra. Why one still considers ``open algebras of symmetries'' is related to the fact that \emph{any} (local) functional has so-called ``trival'' gauge symmetries, which are symmetries of the functional $S$ leaving any solution to its Euler-Lagrange equation invariant (cf.~also \cite{Henneaux:1992ig} for further details). Talking about a gauge theory implies implicitly that $S$ has non-trivial gauge symmetries. While the set of \emph{all} infinitesimal gauge symmetries of $S$ again forms a Lie algebra, a subset of generators does not necessarily: The trivial ones form an ideal that can be factored out, but the non-trivial ones do not form a Lie subalgebra in general. Correspondingly, when choosing representatives of the generators in each class, their Lie brackets do not necessarily close and, even worse, if the exact sequence of Lie groups or algebras
\begin{align}
\text{``trivial}\hspace{1.6cm}&&\text{``all}\hspace{1.9cm}&&\text{``non-trivial}\hspace{1cm} \nonumber\\
 \text{gauge} \hspace{1.6cm}&\longrightarrow&   \text{gauge} \hspace{1.6cm} &\longrightarrow&\text{gauge symmetries}\hspace{0.3cm}\\
\text{symmetries''} \hspace{1cm}&&  \text{symmetries''}\hspace{1cm}&& \text{modulo trivial ones''}\nonumber
\end{align}
does not split, there even does not exist \emph{any choice} of generators such that the symmetries close.\footnote{This then is the situation where for a quantization standard BRST-methods are no more sufficient, but one needs to rely on the Batalin-Vilkovisky formalism \cite{Batalin:1981jr,Henneaux:1992ig}.} 

So, if one wants to use the transformations \eqref{eq:deltaAnewohneF} as (infinitesimal) gauge symmetries of a functional $S$, the choice of such a functional $S$ is necessarily restricted by the terms on the right-hand-side of the commutator algebra \eqref{doffen}. In particular, any such a functional $S$ \emph{must} have
\begin{eqnarray}
d^{I}_{p[a}f_{bc]}{}^{p}\C{F}^{c} &=& 0 \, , \nonumber \\
g^{It}b_{Jt[a}f_{bc}{}^{p}d_{d]p}^{J}\C{F}^{c}A^{d} &=& 0 \, , \label{contractedF}\\
g^{It}h_{J}^{p}b_{Kt[a}d^{K}_{b]p}\C{F}^{b} &=& 0 \, , \nonumber \\
g^{It}h_{J}^{p}b_{Kt[a}d^{K}_{b]p} \C{F}^{J}&=& 0 \,\nonumber 
\end{eqnarray}
among its field equations (where some of the first three equations may be redundant). Sufficient would be certainly that both 2-form and 3-form field strengths vanish on-shell\footnote{In the literature, often equalities that hold true only on behalf of the Euler-Lagrange equations are denoted by $\approx$.}
\begin{eqnarray}
\C{F}^a  &\approx & 0 \, , \label{F}\\
  \C{F}^I &\approx & 0 \, . \nonumber
\end{eqnarray}

There may be occasions, where restrictions of the form \eqref{contractedF} or \eqref{F} do not pose any problem and may be even welcome in some sense. If, for some reason, however, these constraints seem unphysical in a particular context, one needs to find other transformations, which impose weaker constraints or even no constraints---as in the case of a closed algebra.
Actually, in the context of the six-dimensional tensor hierarchy studied in the present paper, such a set of generators forming a closed algebra \emph{is} known: As we will find below, the transformations \eqref{eq:transf} do form a Lie algebra---and thus also the transformations \eqref{eq:deltaAnew}, since they only differ by a change of gauge parameters from the former ones\footnote{Consider a foliation on a manifold $M$ generated by vector fields $\rho_a$. The generators are thus involutive: $[\rho_a,\rho_b]=C^{c}_{ab}\rho_c$, or,
 for any constants $\epsilon^a \in \mathbb{R}$,  $[\epsilon^a \rho_a,\epsilon^b\rho_b]=\left(C^{c}_{ab}\epsilon^a\epsilon^b\right)\rho_c$. If one generalizes the last equality by permitting  $\epsilon^a$ to be functions on $M$, 
 the algebra changes by derivatives of those functions, while it evidently still will close. Essentially this option corresponds to a change of generators for a fixed foliation.\label{footnote}}
%In the Q-formalism, we have the following relation $[\partial_{\alpha},\partial_{\beta}]=C^{\gamma}_{\alpha\beta}\partial_{\gamma}$ where $C^{\gamma}_{\alpha\beta}=- (-1)^{(|\alpha|+1)(|\beta|+1)}\partial_{\beta}\partial_{\alpha}Q^{\gamma}$. And then $[\Lambda_{1}^{\alpha}(q)\partial_{\alpha},\Lambda_{2}^{\beta}(q)\partial_{\beta}]^{A}_{Q_{\C{N}}}=D^{\gamma}\partial_{\gamma}$ for any $\Lambda^{\alpha}$ of degree $|\alpha|-1$ where $D^{\gamma}=Q_{\C{N}}(\Lambda_{[1}^{\delta}\partial_{\delta}\Lambda_{2]}^{\gamma})+2\Lambda_{[1}^{\delta}\partial_{\delta}Q_{\C{N}}(\Lambda_{2]}^{\gamma})+\Lambda_{[1}^{\alpha}\partial_{\alpha}(\Lambda_{2]}^{\beta})\partial_{\beta}(Q^{\gamma})$ because $(-1)^{|\beta|(|\alpha|-1)}\Lambda_{[1}^{\alpha}\Lambda_{2]}^{\beta}C^{\gamma}_{\alpha\beta}=0$}. 
In the following we want to verify this on the one hand, but, on the other hand, we also want to apply the formalism of section \ref{section2} for the following two reasons: First, a geometric interpretation such as in figure \ref{Figure:2} above is conceptually interesting and illuminating. Second, the use of the derived bracket construction may yield pronounced technical advantages for the explicit computation.\footnote{In the general setting this is certainly the case since the heavy use of identities such as those of \eqref{lol} are replaced by the simple condition $Q^2=0$. In the concrete case here, one may retrieve some of those properties encoded into these equations by means of properties satisfied by the ``covariant derivatives'' \eqref{eq:D}, the use of which can likewise simplify the computation that otherwise would need the repeated use of \eqref{lol}.}

The symmetries \eqref{eq:deltaAnew} differ from the transformations \eqref{eq:deltaAnewohneF} only by adding some (particularly chosen) $F^\alpha$-dependent contributions. As such they also stay inside the ideal $\C{I}$  by construction (here we implicitly use the fact that $\dd  \C{I} \subset \C{I}$). 

We now have to address the question, if the gauge symmetries \eqref{eq:deltaAnew}  can be written as inner automorphisms as in figure \ref{Figure:2} above. Here two cases need to be distinguished: First, if the parameters $\widetilde{\Lambda}^\alpha$ are assumed to be field-independent. Second, if they are permitted to be field-dependent. This second option is not just academic or artificial since we saw above that even if one starts with field-independent parameters, the commutator of two such transformations may very well be field-dependent, cf., e.g., Equation (\ref{something}).

To clarify the situation properly, we first recall that for the degree minus one vertical vector field $\epsilon \equiv \epsilon^\alpha(\sigma, \dd \sigma, q) \partial_\alpha$ and a section $a \colon T[1]\Sigma \to \C{N}$ one has (in similarity with the consideration yielding the second equality in \eqref{ll'})
\begin{equation}
a^*\left( [Q_{\C{N}},\epsilon] q^\alpha \right) = \dd \left(a^* \epsilon^\alpha \right)+ a^{\ast}({\epsilon}^\beta) a^*( \partial_\beta(Q^{\alpha})) -  F^\beta\, a^*(\partial_\beta{\epsilon}^{\alpha}) \, . \label{astar}
\end{equation}
Let us compare this to gauge transformations of the form 
\begin{equation}
	{\delta} A^{\alpha}=\dd \widetilde{\Lambda}^{\alpha}
+\widetilde{\Lambda}^{\beta}{V}^{\alpha}_{\beta} + F^\beta N_{\beta\gamma}^\alpha \widetilde{\Lambda}^\gamma , \label{deltaA}
\end{equation}
where the coefficients $V^{\alpha}_{\beta}$ and $N_{\beta\gamma}^\alpha$ are polynomials in $A^\alpha$ (of the corresponding degrees so as to make the above equation homogenous in form degree). Both, the symmetries \eqref{eq:deltaAnewohneF} as well as the symmetries \eqref{eq:deltaAnew} are of the above form, the former ones with vanishing coefficients $N_{\beta\gamma}^\alpha$, the latter ones with non-vanishing coefficients. 

Let us now first assume that the gauge parameters $\widetilde{\Lambda}$ are field-independent. The comparison of the two equations \eqref{astar} and \eqref{deltaA} show that $\epsilon^\alpha$ has to be q-independent.\footnote{We do not consider the exotic option of permitting $\epsilon^\alpha$ to depend on both, $A^\alpha$ and $q^\alpha$, so that under $a^*$ these two contributions could cancel against one another.} This, however, implies that the $F^\beta$-contributions vanish in \eqref{astar}, while we know them to be present in the transformations \eqref{eq:deltaAnew} under consideration. And this evidently is a general result: Whenever the parameters in gauge transformations \eqref{deltaA} are not permitted to depend on the fields and, at the same time, not all the coefficients $ N_{\beta\gamma}^\alpha$ vanish identically, the symmetries cannot be cast into the form \eqref{astar}.

The situation is more intricate, however, if the parameters  $\widetilde{\Lambda}^\alpha$ are permitted to depend also on the fieds, i.e.~there are field-independent parameters $\mu^\beta\in \Omega^\bullet(\Sigma)$ such that
\begin{equation}
\widetilde{\Lambda}^\alpha = K^\alpha_\delta \mu^\delta \, , \label{Lambda}
\end{equation}
where $K^\alpha_\beta$ are polynomials in $A^\gamma$ of appropriate degrees. Denote by $k^\alpha_\beta$ the corresponding function depending on $q$, i.e.~$K^\alpha_\beta = a^*(k^\alpha_\beta)$. Then comparison of the first terms in \eqref{astar} and \eqref{deltaA} show that necessarily $\epsilon^\alpha = k^\alpha_\beta(q) \mu^\beta$. But then comparison of the \emph{last} term of each of the two equations implies that $\partial_\beta k^\alpha_\delta = -n^\alpha_{\beta \gamma} k^\gamma_\delta$, where $N^\alpha_{\beta \gamma}=a^*(n^\alpha_{\beta \gamma})$, or, equivalently, that the coefficient matrix $K$ in \eqref{Lambda} satisfies the likewise differential equation
\begin{equation}
\partial_\beta K^\alpha_\delta +N^\alpha_{\beta \gamma} K^\gamma_\delta=0\;, \label{covconst}
\end{equation}
for the coefficients $N$ appearing in \eqref{deltaA}. We note that in this equation the index delta does not play an important role. Considering it to be fixed (or having it ``disappear'' by permitting some $K^\alpha$ to depend (linearly) on these parameters), we observe that
the equation has the form of a vector $K^\alpha$ being ``covariantly constant'' for the connexion coefficients $N^\alpha_{\beta \gamma}$. Since, moreover, as a consequence of \eqref{Lambda} we need that $K^\alpha$ is non-vanishing for \emph{all} values of alpha, we see that this ``connexion'' $N$ needs to be flat, i.e.~that it needs to satisfy the consistency condition of a (graded) Riemann tensor $R_{\eta\beta\gamma}{}^{\alpha}$ of this connexion to vanish:\footnote{One just differentiates Equation \eqref{Lambda} one more time, uses the equation to eliminate first derivatives and then (graded) anti-symmetrizes in the index $\beta$ and the new one from the derivative. The resulting equation then takes the form $R_{\eta\beta\gamma}{}^{\alpha} K^{\gamma}_{\epsilon} =0$ with the expression of $R$ given in Equation \eqref{zerocurvature}. Since we do not want to restrict the variations of $\widetilde{\Lambda}^\alpha$, moreover, we can drop the contraction with $K^{\alpha}_{\epsilon}$ and obtain the condition $R_{\eta\beta\gamma}{}^{\alpha}=0$ from this.}
\begin{equation}
\partial_{\eta}(N^{\alpha}_{\beta\gamma})+(-1)^{(|\alpha|-|\gamma|)|\beta|}N^{\alpha}_{\eta\delta}N^{\delta}_{\beta\gamma}-(-1)^{|\eta||\beta|}\left(\partial_{\beta}(N^{\alpha}_{\eta\gamma})+(-1)^{(|\alpha|-|\gamma|)|\eta|}N^{\alpha}_{\beta\delta}N^{\delta}_{\eta\gamma}\right)=0\;.\label{zerocurvature}
\end{equation}

The vanishing of this tensor is a \emph{necessary} condition for casting the symmetries \eqref{deltaA} into the form \eqref{astar}. It is remarkable that it does \emph{not} yet depend on the choice of $Q$ itself. But this is also why the condition is not sufficient, at least in the case that we assume $Q$ to be known already. However, if not yet known, it may be a way to read off $Q$ from the symmetries without prior investigation of other equations. 

Let us now check the condition \eqref{zerocurvature} for the transformations \eqref{eq:deltaAnew}. While the $N$ of the second line poses no problem, the last line  gives the curvature 
\begin{equation}
R_{abc\ t}=2b_{Jtc}d^{J}_{ab}
\;,
\end{equation}
which in general is non-vanishing. This implies that we cannot use the formalism underlying \eqref{astar} for those symmetries. We thus look at the ``lifted'' picture explained at the end of section~\ref{section2}. 

We thus apply Theorem \ref{thm:gauge} to the symmetries \eqref{eq:deltaAnew}. In particular, we can read off from \eqref{eq:tildeepsilon} the degree minus vector field generating these symmetries by means of the lifted inner automorphism \eqref{eq:tildedelta}. One easily finds: 
\begin{align}\label{tildeepsilon}
	\widetilde{\epsilon}	&=\widetilde{\Lambda}^{a}\frac{\partial}{\partial q^{a}} + \widetilde{\Lambda}^{I}\frac{\partial}{\partial q^{I}} + \widetilde{\Lambda}_{t}\frac{\partial}{\partial q_{t}}-d^{I}_{ab}\widetilde{\Lambda}^{a}\bar\dd q^{b}\frac{\partial}{\partial \bar\dd q^{I}}\\
	&\hspace{1cm}+\Big[b_{Jta}\widetilde{\Lambda}^{J}\bar\dd q^{a}+\frac{2}{3}b_{Jt[a}d^{J}_{b]c}\widetilde{\Lambda}^{a}q^{b}\bar\dd q^{c}\Big]\frac{\partial}{\partial \bar\dd q_{t}}\; ,\nonumber
\end{align}
where the first three terms are nothing but $\C{L}_{\epsilon}$ (note that there are no ``tangent derivatives'' in this case since $w$ is the Kronnecker delta and $\widetilde \Lambda^\alpha$ is assumed to be field-independent for now) and the contributions from $\bar{v}^\alpha_\beta - {v}^\alpha_\beta$ is linear in $\bar{\dd}q^\gamma$ and, up to a sign, thus agrees with the ``connexion coefficients'' $n^\alpha_{\beta \gamma}\bar{\dd}q^\gamma$ introduced above. 

There is one more consistency consideration to perform at the stage: We argued above, in the non-lifted picture, that the vector fields of degree $-1$ cannot lead out of the admissible functions---recall that for technical reasons we stick to the coordinates on $\C{W}$, which, in the lifted picture, has to be replaced by $T[1]\C{W}$. Admissible coordinates on $T[1]\C{W}$ are $q^a$, $q^I$, $g^{It}q_t$ as well as  $\bar{\dd}q^a$, $\bar{\dd}q^I$, $g^{It}\bar{\dd}q_t$ of degree 1, 2, 3, and 2, 3, 4, respectively. Here we now need to check that the derivative with respect to $\bar{\dd}q_t$ does not have a ``naked'' $q_s$---but if $q_s$ appears, which now for degree reasons \emph{is} permitted, it has to appear in a contracted way, $g^{Js}q_s$. Indeed, the $\bar{\dd}q_t$-derivative part of $\widetilde{\epsilon}$ in \eqref{tildeepsilon} does not contain \emph{any} $q_s$ variable. We are thus perfectly permitted to use the formalism also before the projection. 
In order to obtain the commutator of two gauge transformations \eqref{eq:deltaAnew}, we may now simply apply the derived bracket construction for them, in analogy to the discussion following \eqref{deltaAalpha}. One finds\footnote{We anticipate the notation $(\bar{\C{N}}, Q_{\bar{\C{N}}})$ for the Q-bundle with the new ``tangent fiber'', cf.~Figure \ref{Figure:3} below.}
\begin{align} \label{commtildeps}
	[\widetilde{\epsilon},\widetilde{\epsilon}']_{Q_{\bar{\C{N}}}}	&= \widehat{\widetilde{\Lambda}}^{a}\frac{\partial}{\partial q^{a}} + \widehat{\widetilde{\Lambda}}^{I}\frac{\partial}{\partial q^{I}} + \widehat{\widetilde{\Lambda}}_{t}\frac{\partial}{\partial q_{t}}-d^{I}_{ab}\widehat{\widetilde{\Lambda}}^{a}\bar\dd q^{b}\frac{\partial}{\partial \bar\dd q^{I}}\\
	&\hspace{1cm}+\Big[b_{Jts}\widehat{\widetilde{\Lambda}}^{J}\bar\dd q^{s}+\frac{2}{3}b_{Jt[u}d^{J}_{v]s}\widehat{\widetilde{\Lambda}}^{u}q^{v}\bar\dd q^{s}+k_{t}^{\alpha}(\ldots)_{\alpha}\Big]\frac{\partial}{\partial \bar\dd q_{t}}\; ,\nonumber
\end{align}
where  
\begin{align}\label{closedalgebra}
% [\epsilon,\epsilon']_Q
	\widehat{\widetilde{\Lambda}}^{a}=[\widetilde{\epsilon},\widetilde{\epsilon}']_{Q_{\bar{\C{N}}}}^{a}	&=	C_{bc}^{a}\widetilde{\Lambda}^{b}\widetilde{\Lambda}'^{c}\nonumber\\
	\widehat{\widetilde{\Lambda}}^{I}=[\widetilde{\epsilon},\widetilde{\epsilon}']_{Q_{\bar{\C{N}}}}^{I}	&=	\Gamma^{I}_{aK}(\widetilde{\Lambda}^{a}\widetilde{\Lambda}'^{K}-\widetilde{\Lambda}'^{a}\widetilde{\Lambda}^{K})-H^{I}_{abc}\widetilde{\Lambda}^{a}\widetilde{\Lambda}'^{b}q^{c}\;,\\
	\widehat{\widetilde{\Lambda}}_{t}=[\widetilde{\epsilon},\widetilde{\epsilon}']_{Q_{\bar{\C{N}}}\ t}	&=	-A_{at}^{s}(\widetilde{\Lambda}^{a}\widetilde{\Lambda}'_{s}-\widetilde{\Lambda}'^{a}\widetilde{\Lambda}_{s})+2B_{IJt}\widetilde{\Lambda}^{I}\widetilde{\Lambda}'^{J}-12E_{abcdt}\widetilde{\Lambda}^{a}\widetilde{\Lambda}^{b} q^{c} q^{d}\;,\nonumber\\
								&\hspace{1cm}-2D_{abIt}(\widetilde{\Lambda}^{a} \widetilde{\Lambda}'^{b} q^{I}-\widetilde{\Lambda}^{a} q^{b} \widetilde{\Lambda}'^{I}+\widetilde{\Lambda}'^{a} q^{b} \widetilde{\Lambda}^{I})-\frac{2}{3}b^{\phantom{}}_{Jt[b}d^{J}_{c]a}\widetilde{\Lambda}^{b}\widetilde{\Lambda}'^{c}\bar\dd q^{a}\;.\nonumber
\end{align}
We did not specify the terms proportional to $k_t^\alpha$ in the last line of \eqref{commtildeps} as they will drop out in what follows.
Note also that this bracket is already antisymmetric under the exchange of  $\widetilde{\epsilon}$ and $\widetilde{\epsilon}'$. 

Now, although the new parameters depend on fiber coordinates, here the algebra does close. Indeed one obtains
\begin{align}
[\delta_{\widetilde{\Lambda}},\delta_{\widetilde{\Lambda}'}]A^a 
&=\delta_{\widehat{\widetilde{\Lambda}}} A^a \nonumber \\
[\delta_{\widetilde{\Lambda}},\delta_{\widetilde{\Lambda}'}]B^I
&=\delta_{\widehat{\widetilde{\Lambda}}} B^I \label{commtildeLambda} \\
[\delta_{\widetilde{\Lambda}},\delta_{\widetilde{\Lambda}'}]\left( g^{It} C_t\right) 
&=\delta_{\widehat{\widetilde{\Lambda}}} \left( g^{It} C_t\right) \, .\nonumber 
\end{align}
The $k$-contributions do not contribute in the last line because  Equation \eqref{eq:gk} implies that \\ $[Q_{\bar{\C{N}}},k_{t}^{\alpha}(\ldots)_{\alpha}\frac{\partial}{\partial \bar\dd q_{t}}] \left( g^{Is} q_s \right) = g^{It} k_t^\alpha 
(\dots )_\alpha \equiv 0$. 

The reason for the closure is two-fold: First we see that the derived bracket \eqref{commtildeps} of two vector fields \eqref{tildeepsilon} has a likewise form, albeit with the new parameters $\widehat{\widetilde{\Lambda}}$ being fiber-dependent. 
And that this does not play a role is the second reason: In the lifted picture the analogous formula to \eqref{astar} reads as follows
\begin{equation}
f^*\left( [Q_{\bar{\C{N}}},\widetilde{\epsilon}] q^\alpha \right) = \dd \left(f^* \widetilde{\epsilon}^\alpha \right)+ f^{\ast}({\widetilde{\epsilon}}(Q^{\alpha}+\bar{\dd} q^\alpha)) \, .   \label{fstar}
\end{equation}
In fact this is the \emph{same} formula as \eqref{astar}, but with $f$ replacing $a$ (and $\C{L}_{Q_{\C{N}}} + \bar{\dd}$ replacing $Q$ when acting on $q^\alpha$). The last term, proportional to the derivative of the parameters in \eqref{astar} is absent in \eqref{fstar} since $f$ is by construction a Q-morphism, cf.~Equation \eqref{eq:Qmorph}, so that the ``field strength of the field strength'' vanishes identically. 

We now want to also provide a geometrical interpretation of the symmetries generated by \eqref{tildeepsilon}. In fact, also they are vertical inner automorphisms, just of another bundle. We needed to replace the original fiber $\widetilde{\C{W}}$, corresponding to the Lie 3-algebra, by its (shifted) tangent space $T[1]\widetilde{\C{W}}$ (that this can be done also in a canonical way for non-trivial bundles is explained for instance in \cite{Gruetzmann-Strobl}). In this bundle the symmetries \eqref{eq:deltaAnew} are generated again as vertical inner automorphisms following Equation \eqref{fstar}. The difference to the previous situation is illustrated also in Figure \ref{Figure:3} (to be compared with the Figure \ref{Figure:2} before).
\begin{figure}[bt]
   \centering
   \includegraphics[width=11.6cm]{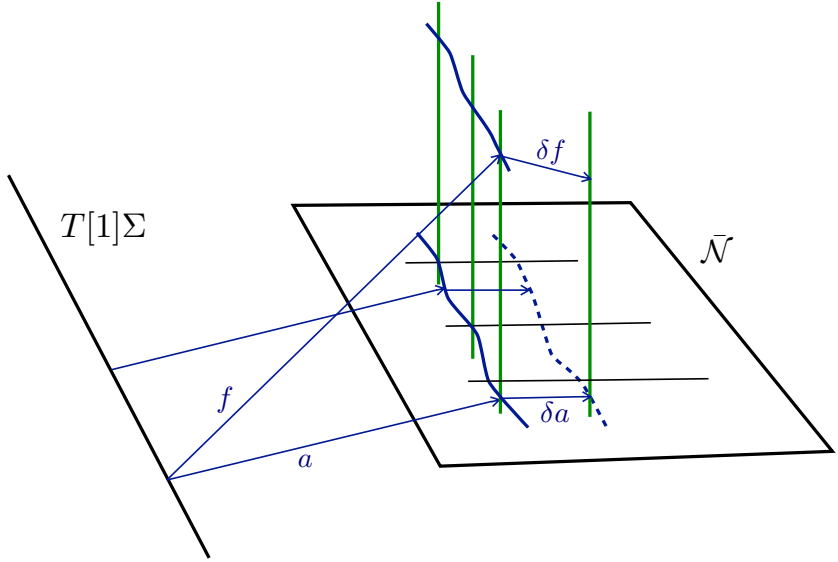}
   \caption{{\small  Schematic picture of the Q-bundle $(\bar{\C{N}},Q_{\bar{\C{N}}})\to (T[1]\Sigma,\mathrm{d})$ with typical  fiber  $(T[1]\widetilde{\C{W}},\C{L}_{\widetilde{Q}} +\bar{\mathrm{d}})$, which can be constructed canonically from the bundle $\C{N}$ of Figure \ref{Figure:2}.  
The vertical green lines are the tangent directions to $\widetilde{\C{W}}$, the bundle $\C{N}$ (together with the section $a$) being recovered by projection onto the horizontal plane in the figure.  $f$ now is the unique, canonically defined lift of $a$ which is a section of the extended bundle $(\bar{\C{N}},Q_{\bar{\C{N}}})$ in the category of Q-manifolds. The structural Lie-3 algebra of $\C{N}$ is turned into a Lie 4-algebra of $\bar{\C{N}}$ (cf.~also Appendix \ref{AppendixC}). Due to the extension, the variation $\delta f$ has many more degrees of freedom compared to the old variation $\delta a$. The gauge symmetries of the non-abelian models in six dimensions of \cite{Samtleben:2011fj}
cannot be described as vertical inner automorphisms of the simpler Q-bundle $\C{N}$ of Figure \ref{Figure:2}, but turn out to have such an interpretation in the lifted  bundle $(\bar{\C{N}},Q_{\bar{\C{N}}})$ of the present figure with its structural Lie 4-algebra.}}
   \label{Figure:3}
\end{figure}

What happens essentially is that the vertical vector field generating the symmetries does not only depend on the value of the section $a\sim A^\alpha$ at a given point in $T[1]\Sigma$, but also at its first ``germ'' $\sim \dd A^\alpha \sim F^\alpha$. The section $f$ is doing precisely this: It gives a canonical lift of the section $a$, where different first derivatives of $a$ give different values (``heights'' in the picture) in the tangent direction of the fibers. For the same value of $a$, the gauge symmetries are or can be different for different lifts. This is taken into account by the vector field generating an inner automorphism on the tangent of the old fiber. 

The symmetries in their original form \eqref{eq:transf} differ from the above ones by a field- or fiber-dependent redefinition of parameters. Concretely, we need just the 
transformation inverse to Equation \eqref{eq:changeLambda}, now written in terms of coordinates on the fiber, i.e.~
\begin{equation}\label{eq:changeLambdainv}
\widetilde{\Lambda}^{r}=\Lambda^{r} \,,\quad \widetilde{\Lambda}^{I}=\Lambda^{I}-d^{I}_{ab}\Lambda^{b}q^{a}\,,\quad 
\widetilde{\Lambda}_{t} = \Lambda_{t}+b_{Jta}\Lambda^{a}q^{J}+\frac{1}{3}b^{\phantom{}}_{Jt[b}d^{J}_{c]a}\Lambda^{a}q^{b}q^{c}\, .
\end{equation}
Implementing this into the vector field \eqref{tildeepsilon}, yields the vector field 
\begin{align} \label{main}
	\epsilon	&= \Lambda^{a}\frac{\partial}{\partial q^{a}} + (\Lambda^{I}-d^{I}_{ab}\Lambda^{a}q^{b})\frac{\partial}{\partial q^{I}} + (\Lambda_{t}+b_{Jta}\Lambda^{a}q^{J}+\frac{1}{3}b^{\phantom{}}_{Jt[b}d^{J}_{c]a}\Lambda^{a}q^{b}q^{c})\frac{\partial}{\partial q_{t}}\\
	&\hspace{1cm}-d^{I}_{ab}\Lambda^{a}\bar\dd q^{b}\frac{\partial}{\partial \bar\dd q^{I}}+\Big[b_{Jts}\Lambda^{J}\bar\dd q^{s}+\frac{4}{3}b_{Jt[u}d^{J}_{s]v}\Lambda^{u}q^{v}\bar\dd q^{s}\Big]\frac{\partial}{\partial \bar\dd q_{t}}\;,\nonumber
\end{align}
which indeed generates the gauge transformations \eqref{eq:transf} by means of 
\begin{align}
\delta_\Lambda A^a &= f^*([Q_{\bar{\C{N}}},\epsilon ]  q^a) \, ,\nonumber\\
\delta_\Lambda B^{I} &= f^*([Q_{\bar{\C{N}}},\epsilon ]  q^I) \, ,\\
\delta_\Lambda (g^{It}q_t) &= f^*([Q_{\bar{\C{N}}},\epsilon ]  (g^{It}q_t)) \, .\nonumber
\end{align}
This change of parameters can be compared with a change of generators of a given (possibly singular) foliation---here the foliation corresponds to gauge equivalence classes within the space of all fields. Vector fields generating a foliation are tangent to the foliation by definition and by $C^\infty$-linear combinations we can obtain any other vector field tangent to the foliation, at least locally. Clearly, this can and in general will change the structure functions, i.e.~the commutators between the generators, but it will not spoil the fact that there is a closed algebra between them (cf.~also footnote \ref{footnote}).\footnote{In the finite dimensional setting we even know that for any foliation we can even locally introduce coordinates and tangent generators such that the latter ones are holonomic, i.e.~of the form of ordinary partial derivatives, so that locally we can always bring such structure functions to zero by changes of generators. A similar result holds also true for gauge symmetries, at least in the conventional setting, and is called ``abelianization'', cf.~\cite{Henneaux:1992ig}.} We thus cannot just implement the parameter change  \eqref{eq:changeLambdainv} into the previous structure functions \eqref{closedalgebra}. Thus we need to reapply the derived bracket construction for the vector fields of the type \eqref{main}. 

Before doing so, however, we recall that in section \ref{section3} we mentioned  an equivalence of parameters in the above symmetries. This is in fact standard for higher gauge theories and qualitatively best visible when one considers abelian higher form degree gauge fields: 
$\delta B = \dd \Lambda$ leaves invariant the (abelian) 3-form field strength $H=\dd B$, but $\Lambda$ and $\Lambda + \dd \mu$ evidently generate the \emph{same} changes of $B$. It is comforting to see that the much more involved invariance \eqref{jauge} of the parametrization in the non-abelian context has a nice cohomological description in the present formulation. Indeed, $\mathrm{ad}_{Q_{\bar{\C{N}}}} \equiv [ Q_{\bar{\C{N}}}, \cdot ]$ is a differential on the space of (also vertical) vector fields. The parametrization enters the symmetries in terms of $\mathrm{ad}_{Q_{\bar{\C{N}}}}(\epsilon)$, with $\epsilon$ given by Equation \eqref{main}, and  \eqref{jauge} corresponds to merely adding  $\mathrm{ad}_{Q_{\bar{\C{N}}}}(\eta)$ to  $\epsilon$, where 
\begin{align}% transformation de jauge des parametres de jauge Q-formalism
	\eta	&=\mu^{I}\frac{\partial}{\partial q^{I}}+\mu_{t}\frac{\partial}{\partial q_{t}}+b_{Jta}\mu^{J}\bar{\dd}q^{a}\frac{\partial}{\partial \bar{\dd}q_{t}}\label{eta}
\end{align}
where, as before, $\mu^{I}$ and $\mu_{t}$ are 0-forms and  1-forms, respectively.

We are now left with determining the commutators of the gauge symmetries \eqref{main}. Here the derived bracket turns out to no more be automatically antisymmetric. But since we know that its symmetric part is $\mathrm{ad}_{Q_{\bar{\C{N}}}}$-exact (cf.~Equation \eqref{ex}), we only display its antisymmetric part (with respect to two vector fields $\epsilon$ and $\epsilon'$ having both the form \eqref{main}):
\begin{align}\label{maincomm}
	[\epsilon,\epsilon']^{A}_{Q_{\bar{\C{N}}}}	&= \widehat{\Lambda}^{a}\frac{\partial}{\partial q^{a}} + (\widehat{\Lambda}^{I}-d^{I}_{ab}\widehat{\Lambda}^{a}q^{b})\frac{\partial}{\partial q^{I}} + (\widehat{\Lambda}_{t}+b_{Jta}\widehat{\Lambda}^{a}q^{J}+\frac{1}{3}b^{\phantom{}}_{Jt[b}d^{J}_{c]a}\widehat{\Lambda}^{a}q^{b}q^{c})\frac{\partial}{\partial q_{t}}\\
	&\hspace{1cm}-d^{I}_{ab}\widehat{\Lambda}^{a}\bar\dd q^{b}\frac{\partial}{\partial \bar\dd q^{I}}+\Big[b_{Jts}\widehat{\Lambda}^{J}\bar\dd q^{s}+\frac{4}{3}b_{Jt[u}d^{J}_{s]v}\widehat{\Lambda}^{u}q^{v}\bar\dd q^{s}+k_{t}^{\alpha}(\ldots)_{\alpha}\Big]\frac{\partial}{\partial \bar\dd q_{t}}\;,\nonumber
\end{align}
where, after some calculations and writing $D$ for $\mathrm{d}-q^{a}X_{a}\cdot $ :
\begin{align}\label{mainhat}% explicit form of \widehat{\Lambda}
	\widehat{\Lambda}^{a}	&= -f_{bc}{}^{a}\Lambda^{b}\Lambda'^{c}\;,\nonumber\\
	\widehat{\Lambda}^{I} 	&= d^{I}_{ab}(\Lambda^{a}D\Lambda'^{b}-\Lambda'^{a}D\Lambda^{b}) +\frac{1}{2}g^{It}b_{Ktb}(\Lambda'^{b}\Lambda^{K}-\Lambda^{b}\Lambda'^{K})\;,\nonumber\\
	\widehat{\Lambda}_{t} 	&= -h_{(I}^r b_{J)\,tr}  \Lambda_{1}^I\wedge  \Lambda_{2}^J
+ b_{I\,tr}\,(\Lambda^I\wedge D \Lambda'^r-\Lambda'^{I}\wedge D\Lambda^{r})\\
						&\hspace{2cm} +2b_{Jt[p} d^J_{q]s} \,\Lambda^p\Lambda'^q \, \bar{\mathrm{d}}q^s+\frac{1}{2}D\big(b_{Ktb}(\Lambda'^{b}\Lambda^{K}-\Lambda^{b}\Lambda'^{K})\big)+k_{t}^{\lambda}(\ldots)_{\lambda}\;.\nonumber
\end{align}
The unspecified terms at the end of Equation \eqref{maincomm} and \eqref{mainhat} turn out to not enter the transformations of the (projected) fields. In the first case, the argument is the same as for \eqref{commtildeps} (given below \eqref{commtildeLambda}). In the second case we may refer to \eqref{fstar} with $q^{\alpha}$ replaced by $g^{It}q_{t}$:
\begin{equation}
f^*\left( [Q_{\bar{\C{N}}},\widehat{\epsilon}] (g^{It}q_{t}) \right) = \dd f^*\left( g^{It}\widehat{\Lambda}_{t} \right)+ f^{\ast}\left({\widehat{\epsilon}}(g^{It}Q_{t}+g^{It}\bar{\dd} q_{t})\right) \, ,
\end{equation}
where $\widehat{\epsilon}\equiv[\epsilon,\epsilon']^{A}_{Q_{\bar{\C{N}}}}$. The contraction of $\widehat{\Lambda}_{t}$ with $g^{It}$ in the first term implies that the last term in \eqref{mainhat} does not contribute there. However in the second term, all the $q_t$-dependence is contained in  $g^{It}Q_{t}=g^{It}A_{ta}^{s}q^{a}q_{s}+\ldots$ (cf.~Equation \eqref{eq:Qagain}). Using the last equation in \eqref{extra} as well as \eqref{eq:gk}, one has $g^{It}A_{ta}^{s}q^{a}\widehat{\Lambda}_{s}=g^{It}b_{Jta}g^{Js}q^{a}\widehat{\Lambda}_{s}$ so that the unspecified term $k_{t}^{\lambda}(\ldots)_{\lambda}$ in \eqref{mainhat} will also not contribute in this case.

We now observe that the last term of the second and third equations in \eqref{mainhat} are nothing but residual gauge transformations, as described in (\ref{jauge}). We thus can drop it as another $\mathrm{ad}_{Q_{\bar{\C{N}}}}$-exact contribution for the choice 
\begin{equation}
\eta = \frac{1}{2}g^{It}b_{Ktb}(\Lambda'^{b}\Lambda^{K}-\Lambda^{b}\Lambda'^{K})\frac{\partial}{\partial q_{t}}
\end{equation}
in Equation \eqref{eta}. This yields the simplified new parameters\footnote{So $\delta_{\widehat{\Lambda}} A^\alpha \equiv \delta_{{\Lambda''}} A^\alpha$.}
\begin{align}\label{mainhat2}% explicit form of \widehat{\Lambda}
	{\Lambda''}^{a}	&= -f_{bc}{}^{a}\Lambda^{b}\Lambda'^{c}\;,\nonumber\\
	{\Lambda''}^{I} 	&= d^{I}_{ab}(\Lambda^{a}D\Lambda'^{b}-\Lambda'^{a}D\Lambda^{b})\;,\nonumber\\
	{\Lambda''}_{t} 	&= -h_{(I}^r b_{J)\,tr}  \Lambda_{1}^I\wedge  \Lambda_{2}^J
+ b_{I\,tr}\,(\Lambda^I\wedge D \Lambda'^r-\Lambda'^{I}\wedge D\Lambda^{r})\\
						&\hspace{2cm} +2b_{Jt[p} d^J_{q]s} \,\Lambda^p\Lambda'^q \, \bar{\mathrm{d}}q^s+k_{t}^{\lambda}(\ldots)_{\lambda}\;.\nonumber
\end{align}
For these parameters we have:
\begin{align}
[\delta_{\Lambda},\delta_{\Lambda'}]A^a 
&=\delta_{{\Lambda''}} A^a \nonumber \\
[\delta_{\Lambda},\delta_{\Lambda'}]B^I 
&=\delta_{{\Lambda''}} B^I \\
[\delta_{\Lambda},\delta_{\Lambda'}]\left(g^{It}C_{t}\right) 
&=\delta_{{\Lambda''}} \left(g^{It}C_{t}\right) \nonumber
\end{align}

%------------------------------------------------------------Appendix-------------------------------------------------------------------%

\subsection{Invariant action functionals}
\label{sec:Action}
For physical applications, one of the main goals in the construction of higher gauge theories consists in finding action functionals with the searched-for field content and gauge invariance. The field content of the present model in six space-time dimensions consists of 1-forms $A^a$, 2-forms $B^I$ and (projected) 3-forms $g^{It}C_t$. 
Here the first complication is that the 3-forms have to enter a functional in the projected form.

An essential ingredient of the present approach is the choice of the ideal
$\C{I}$  of field strengths which was tantamount to the choice of the structural Lie $n$-algebra (due to Theorem \ref{thm:QBianchi} and Proposition \ref{propositiontruncation}). This, however, does not yet fix the symmetries completely, but only restricts them. We provided two examples for a different choice of the symmetries.\footnote{The interpolating transformation \eqref{eq:deltaAnew} differs from \eqref{eq:transf} just by a  field dependent transformation of parameters, which does not change the invariance of an action functional.}

The first one of these two options corresponds to inner vertical automorphisms of the simpler bundle $\C{N}$, cf.~Figure \ref{Figure:2}, where the fiber is directly the Q-manifold $\widetilde{\C{W}}$ corresponding to the structural Lie $3$-algebra. Infinitesimally, these are generated by the transformation formulas \eqref{eq:deltaAnewohneF} or, equivalently, by means of  $\delta_{\lambda} A^\alpha = a^* ([Q_{\C{N}},\epsilon])$ together with 
\begin{equation}
\epsilon = \lambda^a \frac{\partial}{\partial q^a} + \lambda^I \frac{\partial}{\partial q^I}+\lambda_t \frac{\partial}{\partial q_t} \, .
\end{equation}
The other option presented corresponds to inner vertical automorphisms of the bundle $\bar{\C{N}}$, cf.~Figure \ref{Figure:3}, where the fiber is the shifted tangent bundle $T[1]\widetilde{\C{W}}$, corresponding to a structural Lie 4-algebra (but one that arises from a canonical lift of a Lie 3-algebra, cf.~Appendix \ref{AppendixC}). One possible set of generators for them is given by \eqref{eq:deltaAnewohneF} or, equivalently, by means of  $\delta_{\lambda} A^\alpha = f^* ([Q_{\bar{\C{N}}},\epsilon])$ together with the vector field $\epsilon$ given in formula \eqref{main}.

The main difference between these two choices is the fact that the first option leads to an open algebra of gauge symmetries whereas the second one yields a closed algebra. This has important consequences for the construction of the action functionals as we are going to see now. In the first case, openness of the gauge algebra requires that the field equations include the vanishing of the contracted field strengths (\ref{contractedF}). A possible action functional is given by~\cite{Mayer:2009wf,Kotov:2010wr}
\begin{equation}\label{S1}
S[A^a,B^I,C_t,\beta_a,\beta_I] = \int_\Sigma \beta_a \wedge \C{F}^a + \beta_I \wedge \C{F}^I + \frac{1}{2} M^{st} \C{F}_s \wedge *\C{F}_t
\end{equation}
with Lagrange multipliers $\beta_a$, $\beta_I$ implementing the Equations \eqref{F}, and a constant symmetric tensor $M^{st}$ which 
satisfies
\begin{equation}
X_{ru}{}^{s} M^{ut} + X_{ru}{}^{t} M^{su} ~=~0
\;,
\label{xs}
\end{equation}
in order to ensure gauge invariance of the action (\ref{S1}). Here the constants $X_{ab}^c$ appearing in Equation \eqref{xs} are those of Equation \eqref{gen1}. 
Moreover, the tensor $M^{st}$ should factorize according to 
\begin{equation}\label{factor}
M^{st} = g^{sI} g^{tJ} m_{IJ}
\;,
\end{equation}
in order to guarantee that only the projected 3-form gauge potentials $g^{sI} C_s$ enter the action. This follows from the fact that the $C_t$-dependence of $\C{F}_t$ is given by $\mathrm{D} C_t$, with the ``covariant derivative'' $\mathrm{D}$ of Equation \eqref{eq:D}, which commutes with the contraction by the ``invariant tensor'' $g^{sI}$ (due to the last equation of \eqref{lolbis1bis}).
Note also that in (\ref{S1}) we could replace $\C{F}^\alpha$ by $\C{H}^\alpha$ due to the presence of the Lagrange multipliers, since the latter ones differ from the former ones only by the addition of combination containing $\C{F}^a$ and $\C{F}^I$, cf.~Equation \eqref{FH}. 

The case of a closed algebra of gauge symmetries on the other hand allows for 
more options of gauge invariant action functionals. Any action of the form
\begin{equation}\label{S2}
S[A^a,B^I,C_t] = \frac{1}{2} \int_\Sigma  N_{ab} \C{H}^a \wedge *\C{H}^b +M_{IJ} \C{H}^I \wedge *\C{H}^J +  M^{st} \C{H}_s \wedge *\C{H}_t
\end{equation}
is gauge invariant, provided the constants $N_{ab}$, $M_{IJ}$, and $M^{st}$ are the components of invariant tensors in the
respective product spaces, in generalization of Equation (\ref{xs}). In addition, as before, $M^{st}$ has to be of the form (\ref{factor}) in order
to guarantee the correct field content.

The action functionals of the supersymmetric models constructed in~\cite{Samtleben:2011fj,Samtleben:2012fb} are yet
of more general type. In this case minimal supersymmetry implies the presence of scalar fields $\phi^I$ joining with the 2-forms $B^I$
into a single supermultiplet. The matrices $N_{ab}$, $M_{IJ}$, and $M^{st}$ in (\ref{S2}) then may become functions of the scalar fields.
This paves the way for a generalisation of the presented framework using 
Lie $n$-algebroids instead of Lie $n$-algebras (cf, e.g., \cite{Mayer:2009wf,Kotov:2010wr,Gruetzmann-Strobl}).
E.g.\ supersymmetry dictates that 
\begin{equation}
N_{ab} = b_{I a b}\,\phi^I
\;,
\end{equation}
such that the relations (\ref{lol}) together with a proper gauge transformation law of the scalar fields
\begin{equation}
\delta_\lambda \phi^I =
-\lambda^a (X_a)_J{}^I\,\phi^J
\;,
\end{equation}
ensures separate gauge invariance of the first term in (\ref{S2}).
Moreover, supersymmetry imposes that $M_{IJ}$ is a constant invertible matrix and that the various constants from (\ref{lol})
defining the theory are in fact related by
\begin{equation}
h^r_I = M_{IJ} \,g^{J r}\;,\qquad b_{I rs} = 2\,M_{IJ} \,d_{rs}^J
\;.
\end{equation}
A final ingredient in the six-dimensional supersymmetric models is a topological term 
that is separately gauge invariant. This term is most compactly given as the boundary contribution
of a seven-dimensional bulk integral~\cite{Samtleben:2011fj}
\begin{equation}\label{CS}
S = \int_{\Sigma_7} M_{IJ} \left(2 d^I_{rs}\, {\cal H}^r \wedge {\cal 
H}^s 
+  { D} {\cal H}^I
\right) \wedge  {\cal H}^J 
\;,
\end{equation}
and necessary for supersymmetry of the complete action functional.
Note that the integrand of \eqref{CS} corresponds to a characteristic class of the 
Q-bundle $\C{N}$ of Figure \ref{Figure:2} according to the construction in \cite{Kotov:2007nr}.

\bigskip
\bigskip

\newpage

\section*{Appendix}

\appendix
\label{appendix}

%-----------------------1ere subsection--------------------------%

\section{Explicit proof of Theorem \ref{propositionhenning}}
\label{AppendixA}
\begin{proof}

The setup has been introduced in (\ref{What}) and (\ref{tildetilde}). 
We will show that $\phi^{\ast}\circ\widehat{Q}^{2}=0$ on $\C{C}^{\infty}(\widehat{\C{W}})$ 
implying that $(\widehat{Q}\big|_{\R{Im}(\phi)})^{2}=0$, which proves the theorem. 
Recall the exact form of $\widehat{Q}$ from (\ref{Qhatneu}) with the coefficients given by (\ref{translate}).
Let us first spell out $\widehat{Q}^{2}$:
\begin{eqnarray}
	\widehat{Q}^{2}	&=&
		\left(-\frac{1}{2}C^{e}_{[ab}C^{d}_{c]e}+\frac{1}{6}t^{d}_{D}H^{D}_{abc}\right)\widehat{q}^{a}\widehat{q}^{b}\widehat{q}^{c}\frac{\partial}{\partial \widehat{q}^{d}}
		+\left(-\Gamma^{D}_{cB}t^{a}_{D}-C^{a}_{bc}t^{b}_{B}\right)\widehat{q}^{c}\widehat{q}^{B}\frac{\partial}{\partial \widehat{q}^{a}}
		+t^{a}_{I}\widehat{\widehat{q}}^{I}\frac{\partial}{\partial \widehat{q}^{a}}
		\\&&{}-\left(g^{It}b_{Jta}+\Gamma^{I}_{aJ}\right)\widehat{q}^{a}\widehat{\widehat{q}}^{J}\frac{\partial}{\partial \widehat{q}^{I}}
		+\left(t^{a}_{(A|}\Gamma^{B}_{a|C)}+B_{ACt}g^{Bt}\right)\widehat{q}^{A}\widehat{q}^{C}\frac{\partial}{\partial \widehat{q}^{B}}
		\nonumber\\&&{}+\left(\frac{1}{2}C^{c}_{ab}\Gamma^{B}_{cC}+\frac{1}{2}t^{c}_{C}H^{B}_{cab}+\Gamma^{A}_{[a|C}\Gamma^{B}_{|b]A}-D_{abCt}g^{Bt}\right)\widehat{q}^{a}\widehat{q}^{b}\widehat{q}^{C}\frac{\partial}{\partial \widehat{q}^{B}}
		\nonumber\\&&{}-\left(\frac{1}{4}C^{s}_{[ab}H^{B}_{cd]s}+\frac{1}{6}H^{A}_{[abc}\Gamma^{B}_{d]A}+E_{abcdt}g^{Bt}\right)\widehat{q}^{a}\widehat{q}^{b}\widehat{q}^{c}\widehat{q}^{d}\frac{\partial}{\partial \widehat{q}^{B}}
		\nonumber\\&&{}-g^{It}\left(\frac{1}{2}C^{c}_{ab}b_{Jtc}-g^{Ks}b_{Js[a|}b_{Kt|b]}+D_{abJt}\right)\widehat{q}^{a}\widehat{q}^{b}\widehat{\widehat{q}}^{J}\frac{\partial}{\partial \widehat{\widehat{q}}^{I}}
		+g^{It}\left(t^{a}_{A}b_{Bta}-2B_{ABt}\right)\widehat{q}^{A}\widehat{\widehat{q}}^{B}\frac{\partial}{\partial \widehat{\widehat{q}}^{I}}
		\nonumber\\&&{}+g^{It}\left(2D_{ab(A|t}t^{b}_{|B)}+2\Gamma^{C}_{a(A}B_{B)Ct}+B_{ABs}g^{Cs}b_{Cta}\right)\widehat{q}^{a}\widehat{q}^{A}\widehat{q}^{B}\frac{\partial}{\partial \widehat{\widehat{q}}^{I}}
		\nonumber\\&&{}-g^{It}\left(2C^{s}_{[ab}E_{cde]st}+\frac{1}{6}H^{A}_{[abc}D_{de]At}-E_{[abcd|s}g^{Js}b_{Jt|e]}\right)\widehat{q}^{a}\widehat{q}^{b}\widehat{q}^{c}\widehat{q}^{d}\widehat{q}^{e}\frac{\partial}{\partial \widehat{\widehat{q}}^{I}}
		\nonumber\\&&{}-g^{It}\left(C^{s}_{[ab}D_{c]sBt}+4t^{s}_{B}E_{sabct}-\Gamma^{A}_{[a|B}D_{|bc]At}+\frac{1}{3}H^{A}_{abc}B_{ABt}-g^{Ks}b_{Kt[c}D_{|ab]Bs}\right)\widehat{q}^{a}\widehat{q}^{b}\widehat{q}^{c}\widehat{q}^{B}\frac{\partial}{\partial \widehat{\widehat{q}}^{I}}
		\nonumber
\end{eqnarray}
By definition $\C{C}^{\infty}(\widehat{\C{W}})$ is the set of formal power series in $\widehat{q}^{a},\widehat{q}^{I},\widehat{\widehat{q}}^{J}$. Then the equation $\phi^{\ast}\circ\widehat{Q}^{2}=0$ on $\C{C}^{\infty}(\widehat{\C{W}})$ is equivalent to the fact that the pullback by $\phi$ of each component of $\widehat{Q}^{2}$ vanishes. Consequently, $\phi^{\ast}\circ\widehat{Q}^{2}=0$ is equivalent to the following set of equations :
\begin{align}% equation Q^2=0
	\frac{1}{2}C^{e}_{[ab}C^{d}_{c]e}-\frac{1}{6}t^{d}_{D}H^{D}_{abc}							&=0\label{bobbis1}\;,\\
	-\Gamma^{D}_{cB}t^{a}_{D}-C^{a}_{bc}t^{b}_{B}												&=0\;,\nonumber\\
	t^{a}_{I}g^{It}																			&=0\;,\nonumber\\
	g^{It}b_{Jta}g^{Js}+\Gamma^{I}_{aJ}g^{Js}													&=0\;,\nonumber\\
	t^{a}_{(A|}\Gamma^{B}_{a|C)}+B_{ACt}g^{Bt}												&=0\;,\nonumber\\
	\frac{1}{2}C^{c}_{ab}\Gamma^{B}_{cC}+\frac{1}{2}t^{c}_{C}H^{B}_{cab}+\Gamma^{A}_{[a|C}\Gamma^{B}_{|b]A}-D_{abCt}g^{Bt}
																							&=0\;,\nonumber\\
	\frac{1}{4}C^{s}_{[ab}H^{B}_{cd]s}+\frac{1}{6}H^{A}_{[abc}\Gamma^{B}_{d]A}+E_{abcdt}g^{Bt}	&=0\;,\nonumber\\
	g^{It}\left(\frac{1}{2}C^{c}_{ab}b_{Jtc}g^{Ju}-g^{Ks}b_{Js[a|}b_{Kt|b]}g^{Ju}+D_{abJt}g^{Ju}\right)
																							&=0\;,\nonumber\\
	g^{It}\left(g^{Bs}t^{a}_{A}b_{Bta}-2g^{As}B_{ABt}\right)									&=0\;,\nonumber\\
	g^{It}\left(2D_{ab(A|t}t^{b}_{|B)}+2\Gamma^{C}_{a(A}B_{B)Ct}+B_{ABs}g^{Cs}b_{Cta}\right)	&=0\;,\nonumber\\
	g^{It}\left(2C^{s}_{[ab}E_{cde]st}+\frac{1}{6}H^{A}_{[abc}D_{de]At}-E_{[abcd|s}g^{Js}b_{Jt|e]}\right)
																							&=0\;,\nonumber\\
	g^{It}\left(C^{s}_{[ab}D_{c]sBt}+4t^{s}_{B}E_{sabct}-\Gamma^{A}_{[a|B}D_{|bc]At}+\frac{1}{3}H^{A}_{abc}B_{ABt}-g^{Ks}b_{Kt[c}D_{|ab]Bs}\right)																				&=0\;.\nonumber
\end{align}
Let us recall that the equations needed to establish the Bianchi identities are given by (\ref{lol}). Using that $h^{r}_{I}g^{It}=0$, these equations imply that
\begin{align}% gauge invariance of basic tensors
	A^{s}_{ru}d^{I}_{sv}+A^{s}_{rv}d^{I}_{us}-\Gamma'^{I}_{rJ}d^{J}_{uv}			&=0\;,\label{lolbis1bis}\\
	\Gamma'^{J}_{rI}b_{Jst}+A^{u}_{rs}b_{Iut}+A^{u}_{rt}b_{Isu}					&=0\;,\nonumber\\
	A^{u}_{[rp|}f_{u|q]}{}^{s}+A^{u}_{[rq}f_{p]u}{}^{s}-A^{s}_{[r|u}f_{|pq]}{}^{u}	&=0\;,\nonumber\\
	A^{t}_{sr}h^{r}_{I}-\Gamma'^{J}_{sI}h^{t}_{J}									&=0\;,\nonumber\\
	A^{s}_{rt}g^{It}+\Gamma'^{I}_{rK}g^{Ks}										&=0\;,\nonumber
\end{align}
where $\Gamma'^{I}_{rK}\equiv2h^{s}_{K}d^{I}_{rs}-g^{Is}b_{Ksr}$. Recalling that $A^{s}_{rt}=(X_{r})_{t}{}^{s}$ 
and $\Gamma'^{I}_{rK}=(X_{r})_{K}{}^{J}$ encode the action (\ref{covA}), (\ref{covBC}), of the gauge generators 
$-X_{r}$ on the vector spaces $\C{V}_{-1}$ and $\C{V}_{-2}$, respectively, the five equations (\ref{lolbis1bis}) express nothing 
but the fact that the tensors $d^{I}_{rs}, b_{Jts}, f_{pq}{}^{s},h^{t}_{J}, g^{Is}$
define invariant subspaces in the respective tensor products. The same thus holds for any contraction of these
tensors, in particular for the objects defined in (\ref{translate}). Which in particular implies the relations
\begin{align}% gauge invariance of composite tensors
			A_{[qa]}^{u}\Gamma^{J}_{uK}+\Gamma'^{I}_{[q|K}\Gamma^{J}_{|a]I}-\Gamma'^{J}_{[q|I}\Gamma^{I}_{|a]K}&=0\;,\\
			3A_{[qa|}^{u}H_{u|bc]}^{J}-\Gamma'^{J}_{[q|I}H^{I}_{|abc]}&=0\;,\nonumber\\
			A_{[qa]}^{u}A_{us}^{t}+A_{[q|s}^{r}A_{|a]r}^{t}-A_{[q|r}^{t}A_{|a]s}^{r}&=0\;,\nonumber\\
			2\Gamma'^{I}_{q(J}B_{K)It}+A^{s}_{qt}B_{JKs}&=0\;,\nonumber\\
				\Gamma'^{I}_{[q|K}D_{|ab]It}+2A_{[qa|}^{u}D_{u|b]Kt}+A_{[q|t}^{s}D_{|ab]Ks}&=0\;,\nonumber\\
				4A^{u}_{[qa|}E_{u|bcd]t}+A^{s}_{[q|t}E_{|abcd]s}&=0
	\;.\nonumber
\end{align}
These relations will be useful to prove the last identities of (\ref{bobbis1}). Another useful equation is obtained by noting that the first equation of (\ref{lol}), when antisymmetrized in $r$ and $u$ indices, gives :
\begin{equation}
-3h^{s}_{K}d^{K}_{v[u}d^{I}_{r]s}\underbrace{-f_{ru}{}^{s}d^{I}_{vs}-f_{[r|v}{}^{s}d^{I}_{u]s}}_{=-\frac{3}{2}f_{ru}{}^{s}d^{I}_{vs}+\frac{3}{2}f_{[ur}{}^{s}d^{I}_{v]s}}=-g^{Is}b_{Ks[r}d^{K}_{u]v}
\;,
\end{equation}
so that we obtain 
\begin{equation}% useful equation
	\frac{1}{2}f_{[ru}{}^{s}d^{I}_{v]s}-\frac{1}{2}f_{ru}{}^{s}d^{I}_{vs}-h^{s}_{K}d^{I}_{s[r}d^{K}_{u]v}=-\frac{1}{3}g^{Is}b_{Ks[r}d^{K}_{u]v}\;.\label{trocoule}
\end{equation}

Let us now start showing that all equations (\ref{bobbis1}) are satisfied. The first five equations are direct consequences of equations (\ref{lol}).  Now, let us check the sixth equation 
\begin{equation}
\frac{1}{2}C^{c}_{ab}\Gamma^{B}_{cC}+\frac{1}{2}t^{c}_{C}H^{B}_{cab}+\Gamma^{A}_{[a|C}\Gamma^{B}_{|b]A}-D_{abCt}g^{Bt}	=0\;.
\label{bob6}
\end{equation}
First of all, Equation (\ref{trocoule}) gives:
\begin{equation}% bob6
-D_{abCt}g^{Bt}=-\frac{1}{2}C^{c}_{ab}h_{C}^{s}d^{B}_{cs}+\frac{1}{2}h^{c}_{C}H^{B}_{cab}+h_{C}^{s}h_{K}^{v}d^{B}_{v[a}d^{K}_{b]s}
\end{equation}
so that the left hand side of equation (\ref{bob6}) becomes
\begin{eqnarray}% bob6
\frac{1}{2}f_{ab}{}^{c}g^{Bt}b_{Ctc}-\Gamma^{A}_{[a|C}g^{Bt}b_{At|b]}
&=&\frac12\left(
-A^{c}_{[ab]}g^{Bt}b_{Ctc}-\Gamma'^{A}_{[a|C}g^{Bt}b_{At|b]}+(-g^{Bt}b_{At[a|})g^{As}b_{Cs|b]}\right)
\nonumber\\
&=&\frac12\left(-A^{c}_{[ab]}g^{Bt}b_{Ctc}-\Gamma'^{A}_{[a|C}g^{Bt}b_{At|b]}+\Gamma'^{B}_{[a|A}g^{As}b_{Cs|b]}\right)
\;,\nonumber
\end{eqnarray}
because $\Gamma'^{B}_{[a|A}g^{As}b_{Cs|b]}=(-g^{Bt}b_{At[a|})g^{As}b_{Cs|b]}$. But the last equation 
is just the expression of gauge invariance of the tensor $g^{Bt}b_{Ctb}$, so vanishes identically.
Next, let us check the seventh equation of (\ref{bobbis1})
\begin{equation}
\frac{1}{4}C^{s}_{[ab}H^{B}_{cd]s}+\frac{1}{6}H^{A}_{[abc}\Gamma^{B}_{d]A}+E_{abcdt}g^{Bt}	=0
\;,
\label{bob7}
\end{equation}
whose left-hand side reduces to
\begin{eqnarray}% bob7
&=& \frac14A^{s}_{[ab|}H^{B}_{s|cd]}-\frac1{12}\,\Gamma'^{B}_{[a|A}H^{A}_{|bcd]}+\frac1{12}\,g^{Bt}b_{At[a|}H^{A}_{|bcd]}+\frac{1}{12}b_{At[a}f_{bc}{}^{s}d^{A}_{d]s}g^{Bt}\nonumber\\
&=&
\frac14A^{s}_{[ab|}H^{B}_{s|cd]}-\frac1{12}\,\Gamma'^{B}_{[a|A}H^{A}_{|bcd]}
\;,\nonumber
\end{eqnarray}
which is nothing but the expression of the gauge invariance of the tensor $H^{B}_{bcd}$, thus identically zero, which proves equation (\ref{bob7}).

We remain with the last five equations of (\ref{bobbis1}), all of which need to vanish under projection with $g^{It}$. Here, we can make use of
the orthogonality of $g^{It}$ with $k_t^\lambda$, cf.~(\ref{extra}), such that it suffices to show that the terms under projection can be cast into the form
\begin{eqnarray}% equation Q^2=0 bare
	\frac{1}{2}C^{c}_{ab}A^{u}_{ct}-A^{s}_{[b|t}A^{u}_{|a]s}+D_{abJt}g^{Ju}					&=& k_{t}^{\lambda}\,\Xi^{(1)}_{\lambda}\;,\label{bob8}\\
	t^{a}_{B}A^{s}_{at}-2g^{As}B_{ABt}														&=& k_{t}^{\lambda}\,\Xi^{(2)}_{\lambda}\;,\label{bob9}\\
	2D_{ab(A|t}t^{b}_{|B)}+2\Gamma^{C}_{a(A}B_{B)Ct}+B_{ABs}A^{s}_{at}						&=& k_{t}^{\lambda}\,\Xi^{(3)}_{\lambda}\;,\label{bob10}\\
	2C^{s}_{[ab}E_{cde]st}+\frac{1}{6}H^{A}_{[abc}D_{de]At}-E_{[abcd|s}A^{s}_{e]t}				&=& k_{t}^{\lambda}\,\Xi^{(4)}_{\lambda}\;,\label{bob11}\\
	C^{s}_{[ab}D_{c]sBt}+4t^{s}_{B}E_{sabct}-\Gamma^{A}_{[a|B}D_{|bc]At}+\frac{1}{3}H^{A}_{abc}B_{ABt}-A^{s}_{[c|t}D_{|ab]Bs}																										&=& k_{t}^{\lambda}\,\Xi^{(5)}_{\lambda}\label{bob12}
	\;,
\end{eqnarray}
with suitable expressions $\Xi^{(i)}_{\lambda}$. In rewriting (\ref{bobbis1}) as (\ref{bob8})--(\ref{bob12}), 
we have used the identity $A_{rt}^{s}g^{It}=g^{Js}b_{Jtr}g^{It}$, which is the seventh equation of (\ref{lol}).

Let us start with equation (\ref{bob8}): first the last term $g^{Ju}D_{abJt}$ vanishes because $h^{r}_{A}g^{At}=0$. The remainder is $\frac{1}{2}C^{c}_{ab}A^{u}_{ct}-A^{u}_{[a|\sigma}A^{\sigma}_{|b]t}$ which is proportional to
\begin{equation} % bob8
A^{c}_{[ab]}A^{u}_{ct}-2A^{u}_{[a|s}A^{s}_{|b]t}=A^{c}_{[ab]}A^{u}_{ct}+A^{s}_{[a|t}A^{u}_{|b]s}-A^{u}_{[a|s}A^{s}_{|b]t}\;,
\end{equation}
which is nothing more than the expression of the gauge invariance of the tensor $A^{u}_{bt}$, which vanishes, such that equation (\ref{bob8}) is satisfied.
Equation (\ref{bob9}) is easy as well. Indeed 
\begin{align}% bob9
	t^{a}_{B}A^{s}_{at}-2g^{As}B_{ABt}
				&=-h^{a}_{B}(-f_{at}{}^{s}+h^{s}_{K}d^{K}_{at})+g^{As}h^{a}_{(A}b_{B)t a}\nonumber\\
				&=h^{a}_{B}(-f_{t a}{}^{s}-h^{s}_{K}d^{K}_{t a}+g^{As}b_{At a})-g^{As}h^{a}_{[B}b_{A]t a}\nonumber\\
				&=-h^{a}_{B}k^{\lambda}_{t}c^{s}_{\lambda a}-g^{As}k^{\lambda}_{t}c_{\lambda BA}\;,\nonumber
\end{align}
which is of the desired form.
Concerning equation (\ref{bob10}), the left hand side is nothing but
\begin{equation}% bob10
\frac{1}{6}h_{(A|}^{s}b_{Kt a}d^{K}_{bs}h^{b}_{|B)}-\frac{1}{6}h_{(A|}^{s}b_{Kt b}d^{K}_{as}h^{b}_{|B)}+\Gamma^{C}_{a(A}B_{B)Ct}+\frac{1}{2}B_{ABs}A^{s}_{at}
\;,
\end{equation}
but with the last equation of (\ref{lol}) the first term can be rewritten as
\begin{align}% bob10
	&-\frac{1}{6}h_{(A|}^{s}b_{Kt s}d^{K}_{ab}h^{b}_{|B)}-\frac{1}{3}h_{(A|}^{s}b_{Kt b}d^{K}_{as}h^{b}_{|B)}+\Gamma^{C}_{a(A}B_{B)Ct}+\frac{1}{2}B_{ABs}A^{s}_{at}\nonumber\\
	&=-\frac{1}{6}h_{(A|}^{s}b_{Kt b}d^{K}_{as}h^{b}_{|B)}-\frac{1}{3}h_{(A|}^{s}b_{Kt b}d^{K}_{as}h^{b}_{|B)}-d^{C}_{sa}h_{(A}^{s}B_{B)Ct}+\Gamma'^{C}_{a(A}B_{B)Ct}+\frac{1}{2}B_{ABs}A^{s}_{at}
\nonumber
\end{align}
where  $b$ and $s$ have been exchanged in the first term because of the symmetry between $h_{A}^{s}$ and $h_{B}^{b}$. We obtain
\begin{align}% bob10
	&\frac{1}{6}h_{(A|}^{s}b_{Kt b}d^{K}_{as}h^{b}_{|B)}-d^{C}_{sa}h_{(A}^{s}B_{B)Ct}+\Gamma'^{C}_{a(A}B_{B)Ct}+\frac{1}{2}B_{ABs}A^{s}_{at}\nonumber\\
	&=-\frac{1}{2}h_{(A|}^{s}b_{Kt b}d^{K}_{as}h^{b}_{|B)}+\frac{1}{4}d^{C}_{sa}h_{(A}^{s}h_{B)}^{b}b_{Ct b} +\frac{1}{4}d^{C}_{sa}h_{(A|}^{s}h_{C}^{b}b_{|B)t b}+\Gamma'^{C}_{a(A}B_{B)Ct}+\frac{1}{2}B_{ABs}A^{s}_{at}\nonumber\\
	&=\frac{1}{4}d^{C}_{sa}h_{(A|}^{s}h_{C}^{b}b_{|B)t b}-\frac{1}{4}h_{(A|}^{s}b_{Kt b}d^{K}_{as}h^{b}_{|B)}+\Gamma'^{C}_{a(A}B_{B)Ct}+\frac{1}{2}B_{ABs}A^{s}_{at}\nonumber\\
	&=\frac{1}{2}h_{(A|}^{s}d^{K}_{as}k^{\lambda}_{t}c_{\lambda K|B)}+\Gamma'^{C}_{a(A}B_{B)Ct}+\frac{1}{2}B_{ABs}A^{s}_{at}\;.\nonumber
\end{align}
Now it is enough to notice that the first term is proportional to $k_{t}^{\lambda}$, and the two last terms express the gauge invariance of the tensor $B_{ABt}$, which then automatically vanish. So we have proven that also equation (\ref{bob10}) is of the desired form.

The two last equations will be more tricky to show. First, lets turn to equation (\ref{bob11}), in which we can use the gauge invariance of $E_{abcd\rho}$ to mix the first and the third term. Since $4A^{s}_{[ea|}E_{s|bcd]t}+A^{s}_{[e|t}E_{|abcd]s}=0$, the left hand side of the equation becomes
\begin{align}% bob11
-2C^{s}_{[ab}E_{cde]st}+\frac{1}{6}H^{A}_{[abc}D_{de]At}
\;,\nonumber
\end{align}
which is proportional to 
\begin{align}% bob11
&f_{\bar{a}\bar{b}}{}^{s}b_{Kt [\bar{c}}f_{\bar{d}\bar{e}}{}^{v}d_{s]v}^{K}+\frac{1}{3}d^{A}_{s[\bar{a}}f_{\bar{b}\bar{c}}{}^{s}h_{A}^{v}b_{Kt[\bar{d}}d_{\bar{e}]v}^{K}
=f_{\bar{a}\bar{b}}{}^{s}b_{Kt [\bar{c}}f_{\bar{d}\bar{e}}{}^{v}d_{s]v}^{K}+f_{\bar{a}\bar{b}}{}^{s}f_{\bar{c}s}{}^{v}b_{Kt\bar{d}}d_{\bar{e}v}^{K}
\;,
\nonumber
\end{align}
by using the third  equation of (\ref{lol}) on the last term. 
Here, and in the following, we use the notation that all expressions are considered as projected onto the 
fully antisymmetric part in all barred indices.
When we expand the bracket in the first term we have:
\begin{align}% bob11
&\frac{1}{4}f_{\bar{a}\bar{b}}{}^{s}b_{Kt \bar{c}}f_{\bar{d}\bar{e}}{}^{v}d_{sv}^{K}-\frac{1}{4}f_{\bar{a}\bar{b}}{}^{s}b_{Kt s}f_{\bar{c}\bar{d}}{}^{v}d_{\bar{e}v}^{K}+\frac{1}{2}f_{\bar{a}\bar{b}}{}^{s}b_{Kt \bar{e}}f_{s\bar{c}}{}^{v}d_{\bar{d}v}^{K}+f_{\bar{a}\bar{b}}{}^{s}f_{\bar{c}s}{}^{v}b_{Kt\bar{d}}d_{\bar{e}v}^{K}\nonumber\\
&=\frac{1}{4}f_{\bar{a}\bar{b}}{}^{s}b_{Kt \bar{c}}f_{\bar{d}\bar{e}}{}^{v}d_{sv}^{K}-\frac{1}{4}f_{\bar{a}\bar{b}}{}^{s}b_{Kt s}f_{\bar{c}\bar{d}}{}^{v}d_{\bar{e}v}^{K}+\frac{3}{2}f_{\bar{a}\bar{b}}{}^{s}f_{\bar{c}s}{}^{v}b_{Kt\bar{d}}d_{\bar{e}v}^{K}\nonumber\\
&=-\frac{3}{4}f_{\bar{a}\bar{b}}{}^{s}b_{Kt s}f_{\bar{c}\bar{d}}{}^{v}d_{\bar{e}v}^{K}+\frac{3}{2}f_{\bar{a}\bar{b}}{}^{s}f_{\bar{c}s}{}^{v}b_{Kt\bar{d}}d_{\bar{e}v}^{K}\nonumber
\end{align}
where we made heavy use of the (anti)-symmetries (such as permuting $s$ and $v$). Now let us use the second equation of (\ref{lol}) on $\frac{4}{3}$ times the first term and the first one of (\ref{lol}) on $\frac{4}{3}$ times the second term to get:
\begin{align}% bob11
-f_{\bar{a}\bar{b}}{}^{s}b_{Kt s}f_{\bar{c}\bar{d}}{}^{v}d_{\bar{e}v}^{K}&=-A^{s}_{\bar{a}t}b_{Ks\bar{b}}f_{\bar{c}\bar{d}}{}^{v}d_{\bar{e}v}^{K} -\Gamma'^{J}_{\bar{a}K}b_{Jt \bar{b}}f_{\bar{c}\bar{d}}{}^{v}d_{\bar{e}v}^{K}\nonumber\\
2f_{\bar{a}\bar{b}}{}^{s}f_{\bar{c}s}{}^{v}b_{Kt\bar{d}}d_{\bar{e}v}^{K}&=2f_{\bar{a}\bar{b}}{}^{s}h^{v}_{J}d^{J}_{\bar{c}s}b_{Kt\bar{d}}d_{\bar{e}v}^{K}-2f_{\bar{a}\bar{b}}{}^{s}f_{\bar{c}\bar{e}}{}^{v}b_{Kt\bar{d}}d^{K}_{vs}-2f_{\bar{a}\bar{b}}{}^{s}\Gamma'^{K}_{\bar{c}J}b_{Kt\bar{d}}d^{J}_{\bar{e}s}\nonumber\\
&=-2f_{\bar{a}\bar{b}}{}^{s}f_{\bar{c}\bar{e}}{}^{v}b_{Kt\bar{d}}d^{K}_{vs}-6f_{\bar{a}\bar{b}}{}^{s}h^{v}_{J}d^{J}_{\bar{e}s}b_{Kt\bar{d}}d_{\bar{c}v}^{K}+2f_{\bar{a}\bar{b}}{}^{s}g^{Kt}b_{Jt\bar{c}}b_{Kt\bar{d}}d^{J}_{\bar{e}s}\;.\nonumber
\end{align}
When we add one line to the other, we get
\begin{align}% bob11
&(f_{\bar{a}t}{}^{s}-h^{s}_{J}d^{J}_{\bar{a}t}+g^{Js}b_{Jt\bar{a}})b_{Ks\bar{b}}f_{\bar{c}\bar{d}}{}^{v}d_{\bar{e}v}^{K}-2f_{\bar{a}\bar{b}}{}^{s}f_{\bar{c}\bar{e}}{}^{v}b_{Kt\bar{d}}d^{K}_{vs}-8f_{\bar{a}\bar{b}}{}^{s}h^{v}_{J}d^{J}_{\bar{e}s}b_{Kt\bar{d}}d_{\bar{c}v}^{K}+4f_{\bar{a}\bar{b}}{}^{s}g^{Kt}b_{Jt\bar{c}}b_{Kt\bar{d}}d^{J}_{\bar{e}s}\nonumber\\
&=-k^{\alpha}_{t}c_{\alpha\bar{a}}^{s}b_{Ks\bar{b}}f_{\bar{c}\bar{d}}{}^{v}d_{\bar{e}v}^{K}+4f_{\bar{a}\bar{b}}{}^{s}f_{\bar{c}\bar{e}}{}^{v}b_{Kt s}d^{K}_{v\bar{d}}-4f_{\bar{a}\bar{b}}{}^{s}\Gamma'^{K}_{\bar{c}J}b_{Kt\bar{d}}d^{J}_{\bar{e}s}\nonumber\\
&=-k^{\alpha}_{t}c_{\alpha\bar{a}}^{s}b_{Ks\bar{b}}f_{\bar{c}\bar{d}}{}^{v}d_{\bar{e}v}^{K}+4f_{\bar{a}\bar{b}}{}^{s}f_{\bar{c}\bar{e}}{}^{v}b_{Kt s}d^{K}_{v\bar{d}}-4f_{\bar{a}\bar{b}}{}^{s}b_{Kt\bar{d}}A^{v}_{\bar{c}\bar{e}}d_{vs}^{K}-4f_{\bar{a}\bar{b}}{}^{s}b_{Kt\bar{d}}A^{v}_{\bar{c}s}d_{v\bar{e}}^{K}\nonumber\\
&=-k^{\alpha}_{t}c_{\alpha\bar{a}}^{s}b_{Ks\bar{b}}f_{\bar{c}\bar{d}}{}^{v}d_{\bar{e}v}^{K}+4f_{\bar{a}\bar{b}}{}^{s}f_{\bar{c}\bar{e}}{}^{v}b_{Kt s}d^{K}_{v\bar{d}}+\underbrace{4f_{\bar{a}\bar{b}}{}^{s}b_{Kt\bar{d}}f^{v}_{\bar{c}\bar{e}}d_{vs}^{K}}_{-8f_{\bar{a}\bar{b}}{}^{s}b_{Kt s}f^{v}_{\bar{c}\bar{e}}d_{\bar{d}s}^{K}}+4f_{\bar{a}\bar{b}}{}^{s}b_{Kt\bar{d}}f^{v}_{\bar{c}s}d_{v\bar{e}}^{K}-\underbrace{4f_{\bar{a}\bar{b}}{}^{s}b_{Kt\bar{d}}h^{v}_{J}d^{J}_{\bar{c}s}d_{v\bar{e}}^{K}}_{12f_{\bar{a}\bar{b}}{}^{s}b_{Kt\bar{d}}f^{v}_{\bar{c}s}d_{v\bar{e}}^{K}}\nonumber\\
&=-k^{\alpha}_{t}c_{\alpha\bar{a}}^{s}b_{Ks\bar{b}}f_{\bar{c}\bar{d}}{}^{v}d_{\bar{e}v}^{K}-4f_{\bar{a}\bar{b}}{}^{s}f_{\bar{c}\bar{e}}{}^{v}b_{Kt s}d^{K}_{v\bar{d}}-8f_{\bar{a}\bar{b}}{}^{s}b_{Kt\bar{d}}f^{v}_{\bar{c}s}d_{v\bar{e}}^{K}\nonumber
\end{align}
where we used the eighth and the third equation of (\ref{lol}) at the end. Thus, to conclude we have the following equality
\begin{equation}% bob11
-f_{\bar{a}\bar{b}}{}^{s}b_{Kt s}f_{\bar{c}\bar{d}}{}^{v}d_{\bar{e}v}^{K}+2f_{\bar{a}\bar{b}}{}^{s}f_{\bar{c}s}{}^{v}b_{Kt\bar{d}}d_{\bar{e}v}^{K}=-k^{\alpha}_{t}c_{\alpha\bar{a}}^{s}b_{Ks\bar{b}}f_{\bar{c}\bar{d}}{}^{v}d_{\bar{e}v}^{K}+4f_{\bar{a}\bar{b}}{}^{s}f_{\bar{c}\bar{d}}{}^{v}b_{Kt s}d^{K}_{v\bar{e}}-8f_{\bar{a}\bar{b}}{}^{s}b_{Kt\bar{d}}f^{v}_{\bar{c}s}d_{v\bar{e}}^{K}
\;,\nonumber
\end{equation}
which is equivalent to the following:
\begin{equation}% bob11
-f_{\bar{a}\bar{b}}{}^{s}b_{Kt s}f_{\bar{c}\bar{d}}{}^{v}d_{\bar{e}v}^{K}+2f_{\bar{a}\bar{b}}{}^{s}f_{\bar{c}s}{}^{v}b_{Kt\bar{d}}d_{\bar{e}v}^{K}=-\frac{1}{5}k^{\lambda}_{t}c_{\lambda\bar{a}}^{s}b_{Ks\bar{b}}f_{\bar{c}\bar{d}}{}^{v}d_{\bar{e}v}^{K}
\;,\nonumber
\end{equation}
which is proportional to (\ref{bob11}).

Finally, let us turn to equation (\ref{bob12}). First note that using the following equation
\begin{equation}% bob12
\Gamma'^{I}_{[c|B}D_{|ab]It}+2A_{[ca|}^{u}D_{u|b]Bt}+A_{[c|t}^{s}D_{|ab]Bs}=0\;,
\nonumber
\end{equation}
and the fact that $h_{A}^{r}g^{At}=0$, we can rewrite the left hand side of (\ref{bob12}) into:
\begin{equation}% bob12
C^{s}_{[ab}D_{c]sBt}+4t^{s}_{B}E_{sabct}+\Gamma^{A}_{[a|B}D_{|bc]At}+\frac{1}{3}H^{A}_{abc}B_{ABt}-2A_{[ca}^{s}D_{b]sBt}
\;,\nonumber\\
\end{equation}
which is proportional to 
\begin{align}% bob12
&f^{s}_{\bar{a}\bar{b}}h^{v}_{B}b_{Jt[\bar{c}}d^{J}_{s]v}-h^{s}_{B}b_{Jt[s}f^{v}_{\bar{a}\bar{b}}d^{J}_{\bar{c}]v}-h_{B}^{v}d^{A}_{v\bar{a}}h_{A}^{w}b_{It[\bar{b}}d^{I}_{\bar{c}]w}+\frac{1}{2}h^{v}_{(A}b_{B)t v}d^{A}_{s[\bar{a}}f^{s}_{\bar{b}\bar{c}]}-2f^{s}_{\bar{a}\bar{b}}h_{B}^{v}b_{Jt[\bar{c}}d^{J}_{s]v}\nonumber\\
&=-f^{s}_{\bar{a}\bar{b}}h^{v}_{B}b_{Jt[\bar{c}}d^{J}_{s]v}-h^{s}_{B}b_{Jt[s}f^{v}_{\bar{a}\bar{b}}d^{J}_{\bar{c}]v}-h_{B}^{v}d^{A}_{v\bar{a}}h_{A}^{w}b_{It[\bar{b}}d^{I}_{\bar{c}]w}+\frac{1}{2}h^{v}_{(A}b_{B)t v}d^{A}_{s[\bar{a}}f^{s}_{\bar{b}\bar{c}]}\nonumber\\
&=-\frac{1}{2}f^{s}_{\bar{a}\bar{b}}h^{v}_{B}b_{Jt\bar{c}}d^{J}_{sv}+\hspace{-1cm}\underbrace{\frac{1}{2}f^{s}_{\bar{a}\bar{b}}h^{v}_{B}b_{Jt s}d^{J}_{\bar{c}v}}_{-\frac{1}{2}f^{s}_{\bar{a}\bar{b}}h^{v}_{B}b_{Jt\bar{c}}d^{J}_{vs}-\frac{1}{2}f^{s}_{\bar{a}\bar{b}}h^{v}_{B}b_{Jt v}d^{J}_{s\bar{c}}}\hspace{-1cm}-h^{s}_{B}b_{Jt[s}f^{v}_{\bar{a}\bar{b}}d^{J}_{\bar{c}]v}-h_{B}^{v}d^{A}_{v\bar{a}}h_{A}^{w}b_{It[\bar{b}}d^{I}_{\bar{c}]w}+\frac{1}{2}h^{v}_{(A}b_{B)t v}d^{A}_{s[\bar{a}}f^{s}_{\bar{b}\bar{c}]}\nonumber\\
&=-f^{s}_{\bar{a}\bar{b}}h^{v}_{B}b_{Jt\bar{c}}d^{J}_{sv}-h^{s}_{B}b_{Jt[s}f^{v}_{\bar{a}\bar{b}}d^{J}_{\bar{c}]v}-h_{B}^{v}d^{A}_{v\bar{a}}h_{A}^{w}b_{It[\bar{b}}d^{I}_{\bar{c}]w}-\frac{1}{2}h^{v}_{[B}b_{A]t v}d^{A}_{s[\bar{a}}f^{s}_{\bar{b}\bar{c}]}\;.
\nonumber
\end{align}
Again, all expressions are considered as totally antisymmetric in the barred indices.
Let us look at the first three terms (without considering $h_{B}^{v}$). There are equal to:
\begin{align}% bob12
&-f^{s}_{\bar{a}\bar{b}}b_{Jt\bar{c}}d^{J}_{sv}-\frac{1}{4}b_{Jt v}f^{s}_{\bar{a}\bar{b}}d^{J}_{\bar{c}s}+\frac{1}{4}b_{Jt \bar{c}}f^{s}_{\bar{a}\bar{b}}d^{J}_{vs}-\frac{1}{2}b_{Jt \bar{b}}f^{s}_{v\bar{a}}d^{J}_{\bar{c}s}-d^{A}_{v\bar{a}}h_{A}^{w}b_{It\bar{b}}d^{I}_{\bar{c}w}\nonumber\\
&=-\frac{1}{4}f_{\bar{a}\bar{b}}{}^{s}b_{Jt v}d^{J}_{\bar{c}s}-\frac{5}{12}f_{\bar{a}\bar{b}}{}^{s}b_{Jt\bar{c}}d^{J}_{sv}+\frac{1}{6}f_{\bar{a}v}{}^{s}b_{It \bar{b}}d^{I}_{\bar{c}s}-\frac{1}{3}g^{Js}b_{Jt\bar{a}}b_{Is\bar{b}}d^{I}_{\bar{c}v}
\nonumber
\end{align}
where we used the first equation of (\ref{lol}) on the last term of the first line. If we multiply everything by $-4$, we obtain:
\begin{align}% bob12
&f_{\bar{a}\bar{b}}{}^{s}b_{Jt v}d^{J}_{\bar{c}s}+\frac{5}{3}f_{\bar{a}\bar{b}}{}^{s}b_{Jt\bar{c}}d^{J}_{sv}-\frac{2}{3}f_{\bar{a}v}{}^{s}b_{It \bar{b}}d^{I}_{\bar{c}s}+\frac{4}{3}g^{Js}b_{Jt\bar{a}}b_{Is\bar{b}}d^{I}_{\bar{c}v}\nonumber\\
&=2g^{Js}b_{Jt\bar{a}}b_{Is\bar{b}}d^{I}_{\bar{c}v}+f_{\bar{a}\bar{c}}{}^{s}b_{It s}d^{I}_{v\bar{b}}-2d^{J}_{\bar{a}s}b_{Jt\bar{c}}h_{I}^{s}d^{I}_{v\bar{b}}\nonumber\\
&=f_{t a}{}^{s}d^{I}_{v\bar{b}}b_{Is\bar{c}}-g^{Js}b_{Jt \bar{a}}d^{I}_{v\bar{b}}b_{Is\bar{c}}+d^{J}_{t\bar{a}}h_{J}^{s}b_{Is\bar{c}}d^{I}_{v\bar{b}}\nonumber\\
&=k_{t}^{\alpha}c_{\alpha \bar{a}}^{s}d_{v\bar{b}}^{I}b_{Is\bar{c}}
\;,\nonumber
\end{align}
where we used the first and eigth equation of (\ref{lol}) to pass from the first line to the second line, and the second equation of (\ref{lol}) to pass from the second line to the third line. Then we have proven that the left hand side of (\ref{bob12}) is proportional to
\begin{equation}% bob12
-\frac{1}{4}k_{t}^{\lambda}c_{\lambda \bar{a}}^{s}d_{v\bar{b}}^{I}b_{Is\bar{c}}h_{B}^{v}-\frac{1}{2}k_{t}^{\lambda}c_{\lambda BA}d^{A}_{s[\bar{a}}f^{s}_{\bar{b}\bar{c}]}\;,
\end{equation}
which finishes the proof.
\end{proof}

\section{Truncation of a Lie 3-algebra to a Lie 2-algebra} 
\label{AppendixB}
Here we study the truncation of a Lie 3-algebra $(\C{U}_\bullet\equiv \C{U}_{-2} \oplus \C{U}_{-1} \oplus \C{U}_{0}, l_1, l_2, l_3, l_4)$  to a Lie 2-algebra  $(\widetilde{\C{U}}_\bullet\equiv \widetilde{\C{U}}_{-1} \oplus \C{U}_{0}, \widetilde{l}_1, \widetilde{l}_2, \widetilde{l}_3)$, where $\widetilde{\C{U}}_{-1}=\bigslant{\C{U}_{-1}}{\R{Ker}(t_{(1)})}$, and we will apply the result to the explicit example given in section $\ref{structural}$. Doing so, we will show that the naive truncation of $\C{U}$ down to $\widehat{U}=\C{U}_{-1}\oplus\C{U}_{0}$ does not satisfy the axioms of a Lie 2-algebra. First, observe that the Jacobi identities for $m=1,2,3$ involving only degree zero elements are unchanged for degree reasons (because they are of degree zero). Second, the Jacobi identities of degree strictly less than $-1$ automatically vanish on either truncations. Similarly, any identity involving a degree $-2$ element should not be taken into account when performing the truncation because there are no degree $-2$ elements in the truncated algebras. Thus let us turn our attention to the remaining Jacobi identities which are of degree $-1$ (here $u_{i}\in\C{U}_{0}$ and $v_{j}\in\C{U}_{-1}$):
\begin{align}% Jacobi identities
	&[[v_{1}]_{1},v_{2}]_{2}+[[v_{2}]_{1},v_{1}]_{2}=[[v_{1},v_{2}]_{2}]_{1}\;,\\
	&[[u_{1},u_{2}]_{2},v]_{2}+[[u_{2},v]_{2},u_{1}]_{2}-[[u_{1},v]_{2},u_{2}]_{2}+[[v]_{1},u_{1},u_{2}]_{3}=-[[u_{1},u_{2},v]_{3}]_{1}\;,\nonumber\\[1ex]
	%&[[u,v_{1}]_{2},v_{2}]_{2}+[[v_{1},v_{2}]_{2},v]_{2}+[[v,v_{2}]_{2},v_{1}]_{2}-[[v_{1}]_{1},u,v_{2}]_{3}-[[v_{2}]_{1},u,v_{1}]_{3}=0\\
	\begin{split}
&\sum_{\sigma\in Un(4,2)}\chi(\sigma)\big[[u_{\sigma(1)},u_{\sigma(2)}]_{2},u_{\sigma(3)},u_{\sigma(4)}\big]_{3}\nonumber\\
&\hspace{1cm}-\sum_{\sigma\in Un(4,3)}\chi(\sigma)\big[[u_{\sigma(1)},u_{\sigma(2)},u_{\sigma(3)}]_{3},u_{\sigma(4)}
\big]_{2}
=-\big[[u_{\sigma(1)},u_{\sigma(2)}u_{\sigma(3)},u_{\sigma(4)}]_{4}
\big]_{1}\;,
	\end{split}
\end{align}
where we did not write terms involving $[u_{i}]_{1}$ because they vanish by definition. In each of these equations, we observe that the left hand side is the expected Jacobi identity for a Lie 2-algebra. On the right hand side however, we find some terms (all of degree $-1$) which originally appear in the corresponding Jacobi identities for the Lie 3-algebra. All of these terms lie in the image of $[\cdot]_{1}=t_{(2)}$ and as such, in the kernel of $t_{(1)}$, so they all vanish if the Lie 3-algebra $\C{U}$ is truncated down to $\widetilde{\C{U}}$, which renders it a Lie 2-algebra since all Jacobi identities are satisfied. However, if $\C{U}$ is naively truncated to $\widehat{U}$, the right hand sides project as such without modification, which implies that the Jacobi identities on $\widehat{U}$ are not fully satisfied, and thus $\widehat{U}$ cannot be a Lie 2-algebra. Note that we could have truncated the Lie 3-algebra down to the following $\bigslant{\C{U}_{-1}}{\R{Im}(t_{(2)})}\oplus\C{U}_{0}$ and obtain a Lie 2-algebra structure on it, given that all the right hand sides in the above equations vanish when projected on the quotient.

Let us now explain how to truncate the Lie 3-algebra $\C{U}\equiv\widetilde{\C{V}}$ given in section $\ref{structural}$ down to the following Lie 2-algebra:
\begin{equation}\label{Vker2}
\widetilde{\C{V}}' : =\bigslant{\C{V}_{-1}}{\R{Ker}(t)}\stackrel{\widetilde{t}}{\to} \C{V}_0 \,.
\end{equation}
To give the explicit form of the brackets on $\widetilde{\C{V}}'$, we need to follow the same recipe presented after Equation $(\ref{basischoice})$. Let $\widehat{\C{W}}'$ be the following graded N-vector space:
\begin{equation}\label{Whatbis}
\widehat{\C{W}}'_\bullet : = \C{W}_{-1}[1]\stackrel{}{\to}\C{W}_{-1} \, ,
\end{equation}

As in the discussion in subsection $\ref{structural}$, we will choose a set of coordinates on $\widehat{\C{W}}'$ adapted to the truncation. Using the fact that $\C{W}_{-1}$ is isomorphic to $\R{Im}(\llbracket.\rrbracket_{1}) \oplus\bigslant{\C{W}_{-1}}{\R{Im}(\llbracket.\rrbracket_{1})}$, let us first take a set of coordinates $\{\widehat{q}'^{a}\}$ for $\R{Im}(\llbracket.\rrbracket_{1})\subset\C{W}_{-1}$, which we complete by some homogeneous degree 1 elements $\{\widehat{q}'^{r}\}$ into a set of independent coordinates $\{\widehat{q}'^{k}\}=\{\widehat{q}'^{a},\widehat{q}'^{r}\}$ on $\C{W}_{-1}$. As in section $\ref{structural}$, the letters from the beginning of the alphabet (resp. the end) are used to emphasize the fact that the set of elements labelled by these letters is a basis of $\R{Im}(\llbracket.\rrbracket_{1})$ (resp. a supplementary subspace isomorphic to the quotient $\bigslant{\C{W}_{-1}}{\R{Im}(\llbracket.\rrbracket_{1})}$). Their suspended counterpart $\{\widehat{\widehat{q}}'^{k}\}=\{\widehat{\widehat{q}}'^{a},\widehat{\widehat{q}}'^{r}\}$ give a set of coordinates for $\C{W}_{-1}[1]$ adapted to the truncation. Let $\phi':\widehat{\C{W}}\longrightarrow\widehat{\C{W}}'$ be the degree preserving map implicitely defined by:
\begin{align}
	\phi'^{\ast}(\widehat{q}'^{k})				&=\widehat{q}^{k}\,,\nonumber\\
	\phi'^{\ast}(\widehat{\widehat{q}}'^{a})	&=-h^{a}_{I}\widehat{q}^{I}\,,\\
	\phi'^{\ast}(\widehat{\widehat{q}}'^{r})	&=0\nonumber \;.
\end{align}
For the same arguments presented after Equation $(\ref{tildetilde})$ and in subsection \ref{structural}, the map $\phi'$ acts as an isomorphism between $\R{Im}(\llbracket.\rrbracket_{1})[1]^{\ast}$ and $\R{Im}(Q_{(1)}\big|_{\C{W}_{-1}^{\ast}})$. Thus it is fit for the truncation of the NQ-manifold $\widetilde{\C{W}}$ associated to the Lie 3-algebra $\widetilde{\C{V}}$ down to the graded manifold $\widetilde{\C{W}}'=\R{Im}(\phi')$
\begin{equation}\label{Wker2}
\widetilde{\C{W}}'_\bullet  \cong \bigslant{\C{W}_{-2}}{\R{Ker}(\llbracket.\rrbracket_{1})}\stackrel{\llbracket.\rrbracket_{1}}{\to}\C{W}_{-1} \, , 
\end{equation}
associated with the truncated Lie 2-algebra $\widetilde{\C{V}}'$ (see Equation $(\ref{Vker2})$).

	Following Theorem \ref{propositionhenning}, the graded vector space $\R{Im}(\phi')\subset\widehat{\C{W}}'$ is a NQ-manifold when equipped with:
\begin{equation}
	\widetilde{Q}' = (\frac{1}{2}f_{kl}{}^{a}\widehat{q}'^{k}\widehat{q}'^{l}	 +\widehat{\widehat{q}}'^{a})\frac{\partial}{\partial \widehat{q}'^{a}}+\frac{1}{2}f_{kl}{}^{s}\widehat{q}'^{k}\widehat{q}'^{l}\frac{\partial}{\partial \widehat{q}'^{s}}
		+(f_{ka}{}^{b}\widehat{q}'^{k}\widehat{\widehat{q}}'^{a}+\frac{1}{6}h^{b}_{I}d^{I}_{s[k}f_{lm]}{}^{s}\widehat{q}'^{k}\widehat{q}'^{l}\widehat{q}'^{m})\frac{\partial}{\partial \widehat{\widehat{q}}'^{b}}
		\;.
\end{equation}
The Lie 2-algebra structure associated to $\widetilde{\C{V}}'$ can be best described using explicit formulas. To do this, we note by $v'^{k}$ and $w'^{a}$ the coordinates on $\widetilde{\C{V}}'$ associated to $\widehat{q}'^{k}$ and $\widehat{\widehat{q}}'^{a}$ respectively. Then the Lie 2-algebra structure is given by the following brackets (obtained from equations (\ref{crochetsbrackets}, \ref{derivedbracket}) applied to the homological vector field $\widetilde{Q}'$:
\begin{align}
	[w'_{a}]_{1}							&= v'_{a}\;,\nonumber\\
	[v'_{k},v'_{l}]_{2}				&= f_{kl}{}^{m}v'_{m}\;,\nonumber\\
	[v'_{k},w'_{a}]_{2}				&= f_{ka}{}^{b}w'_{b}\;,\nonumber\\
	[v'_{k},v'_{l},v'_{m}]_{3}		&= h^{a}_{I}d^{I}_{s[k}f^{\phantom{}}_{lm]}{}^{s}w'_{a}\;.
\end{align}

\section{Lift of a Lie \texorpdfstring{$n$}{n}-algebra to a Lie \texorpdfstring{$(n+1)$}{(n+1)}-algebra} 
\label{AppendixC}
There is always a trivial way of viewing a Lie $n$-algebra as a Lie $m$-algebra for an $m>n$, just by extending the original complex of length $n$ by zero-dimensional vector spaces to a complex of length $m$ and by adding higher brackets that all vanish. This is not what we will discuss in the present appendix. Instead, there is a canonical lift of any NQ-manifold of degree $n$ to an NQ-manifold of degree $n+1$ by means of the construction presented first in \cite{Kotov:2007nr} and used in the present paper for describing the closed gauge symmetries as inner vertical automorphisms. It is the purpose of the present appendix to translate this into the original language of $L_
\infty$-algebras. Afterwards we illustrate the procedure for an ordinary Lie algebra, lifting it canonically to a Lie 2- and Lie 3-algebra.

Given an NQ-manifold $(\C{W},Q)$ of degree $n$ with coordinates $q^{\alpha}$ (corresponding thus to a Lie $n$-algebra following the discussion in section \ref{section4}), the shifted tangent bundle $(T[1]\C{W},\bar{Q}=\C{L}_{Q}+\bar{\dd})$ is an NQ-manifold of degree $n+1$, with canonical coordinates $q^{\alpha}$ and $\bar{\dd}q^{\alpha}$. This is isomorphic to the direct sum of $\C{W}$ and of $\C{W}[1]$ with coordinates $q^{\alpha}$ and $\bar{q}^{\alpha}\equiv [1]q^{\alpha}$, respectively. Since $\C{L}_{Q}=Q^{\alpha}\frac{\partial}{\partial q^{\alpha}}-\bar{\dd}Q^{\alpha}\frac{\partial}{\partial \bar{\dd}q^{\alpha}}$, this direct sum $\bar{\C{W}}=\C{W}\oplus\C{W}[1]$ becomes a $L_{n+1}[1]$ algebra, when equipped with the following brackets (following Equation \eqref{derivedbracket}):
\begin{align}
\llbracket q_{\alpha_{1}},\ldots,q_{\alpha_{j}}\rrbracket_{j}&=C^{\beta}_{\alpha_{1}\ldots \alpha_{j}}q_{\beta}\nonumber\\
\llbracket \bar{q}_{\alpha}\rrbracket_{1} &= -C_{\alpha}^{\beta}\bar{q}_{\beta}+q_{\alpha}\\
\forall \ j\geq 2\hspace{1cm}\llbracket q_{\alpha_{1}},\ldots,q_{\alpha_{k-1}},\bar{q}_{\alpha_{k}},q_{\alpha_{k+1}},\ldots,q_{\alpha_{j}}\rrbracket_{j}&=-(-1)^{\sum_{i=1}^{k-1}|\alpha_{i}|}C^{\beta}_{\alpha_{1}\ldots \alpha_{j}}\bar{q}_{\beta}\nonumber
\end{align}
where $C^{\beta}_{\alpha_{1}\ldots \alpha_{j}}$ has been defined in Equation \eqref{Qsum}. All other brackets vanish. The unbarred contribution $q_{\alpha}$ in the second line comes from the presence of $\bar{\dd}$ in $\bar{Q}$, whereas the last line comes from the term $-\bar{\dd}Q^{\alpha}\frac{\partial}{\partial \bar{\dd}q^{\alpha}}$ in $\bar{Q}$. One can reformulate these equation in terms of the $j$-ary brackets $\llbracket\ldots\rrbracket^{\C{W}}_{j}$ on $\C{W}$ and the shift functors $[1]:\C{W}[1]\longrightarrow\C{W}$ and $[-1]:\C{W}\longrightarrow\C{W}[1]$:
\begin{align}\label{pouf}
\llbracket q_{\alpha_{1}},\ldots,q_{\alpha_{j}}\rrbracket_{j}&=\llbracket q_{\alpha_{1}},\ldots,q_{\alpha_{j}}\rrbracket^{\C{W}}_{j}\nonumber\\
\llbracket y\rrbracket_{1} &= -[-1]\llbracket [1]y\rrbracket^{\C{W}}_{1}+[1]y\nonumber\\
\forall \ j\geq 2 \,:\hspace{5cm}& \nonumber\\[-1ex]
\llbracket q_{\alpha_{1}},\ldots,q_{\alpha_{k-1}},y,q_{\alpha_{k+1}},\ldots,q_{\alpha_{j}}\rrbracket_{j}&=-(-1)^{\sum_{i=1}^{k-1}|\alpha_{i}|}[-1]\llbracket q_{\alpha_{1}},\ldots,q_{\alpha_{k-1}},[1]y,q_{\alpha_{k+1}},\ldots,q_{\alpha_{j}}\rrbracket_{j}\nonumber\\
\end{align}
for all $q_{\alpha_{i}}\in\C{W}$ and $y\in\C{W}[1]$. If we define $\C{V}$ as the desuspended vector space associated to $\C{W}$ as in section \ref{section4}, then one could equip $\widetilde{\C{V}}\equiv\C{V}\oplus\C{V}[1]$ with an Lie $(n+1)$-algebra structure, using the obvious generalization of \eqref{crochetsbrackets} to the present case.
	
Let us now apply this discussion to the case of a Lie algebra $\mathfrak{g}$, with generators $t_{a}$. The associated NQ-manifold is $(\F{g}[1],Q_{\F{g}}=\frac{1}{2}f_{ab}{}^{c}\xi^{a}\xi^{b}\frac{\partial}{\partial\xi^{c}})$ where $f_{ab}{}^{c}$ is a set of structure constant for $\F{g}$ and $\xi^{a}$ are local coordinates on $\F{g}[1]$. It endows $\F{g}[1]$ with a degree $+1$ bracket $\llbracket \cdot,\cdot\rrbracket_{\F{g}[1]}$, which is nothing but the shifted version of the traditional bracket $[\cdot,\cdot]_{\F{g}}$ on $\F{g}$. Now, let us follow the above construction: $(T[1](\F{g}[1])$ equipped with
\begin{equation}\label{liealgQ}
\bar{Q}=\C{L}_{Q_{\F{g}}}+\bar{\dd}=\frac{1}{2}f_{ab}{}^{c}\xi^{a}\xi^{b}\frac{\partial}{\partial\xi^{c}}-f_{ab}{}^{c}\bar{\dd}\xi^{a}\xi^{b}\frac{\partial}{\partial\bar{\dd}\xi^{c}}+\bar{\dd}\xi^{a}\frac{\partial}{\partial \xi^{a}}
\;,
\end{equation}
is a NQ-manifold of degree 2 isomorphic to $\bar{\F{g}}\equiv\F{g}[1]\oplus\F{g}[2]$, with canonical shift functors $[1]:\F{g}[2]\longrightarrow\F{g}[1]$ and $[-1]:\F{g}[1]\longrightarrow\F{g}[2]$. Thus, using equations \eqref{pouf}, \eqref{liealgQ} we obtain the following brackets on $\bar{\F{g}}$:
\begin{align}\label{pouf2}
\llbracket y\rrbracket_{1} &= [1]y\\
\llbracket x,x'\rrbracket_{2}&=\llbracket x,x'\rrbracket_{\F{g}[1]}\nonumber\\
\llbracket x,y\rrbracket_{2}&=[-1]\llbracket x,[1]y\rrbracket_{\F{g}[1]}\nonumber\\
\llbracket y,y' \rrbracket_{2}&=0\nonumber
\end{align}
for all $x,x'\in\F{g}[1]$ and $y,y'\in\F{g}[2]$. 
%Using Equation \eqref{crochetsbrackets}, we equip the desuspended graded vector space $\widetilde{\F{g}}\equiv\F{g}\oplus\F{g}[1]$ with the following brackets:
%\begin{align}\label{pouf3}
%[\bar{t}_{a}]_{1} &= t_{a}\\
%[t_{a},t_{b}]_{2}&=f_{ab}{}^{c}t_{c}\nonumber\\
%[t_{a},\bar{t}_{b}]_{2}&=f_{ab}{}^{c}\bar{t}_{c}\nonumber
%\end{align}
%where $t_{a}$ are generators for $\F{g}$, and $\bar{t}_{a}$ are their counterpart on $\F{g}[1]$. 
Using Equation \eqref{crochetsbrackets} one can equip the desuspended graded vector space $\widetilde{\F{g}}\equiv\F{g}\oplus\F{g}[1]$ with the following brackets:
\begin{align}\label{pouf3}
[ v]_{1} &= [1]v\\
[ u,u']_{2}&=[ u,u']_{\F{g}}\nonumber\\
[ u,v]_{2}&=[-1][ u,[1]v]_{\F{g}}\nonumber\\
[ v,v' ]_{2}&=0\nonumber
\end{align}
for all $u,u'\in\F{g}$ and $v,v'\in\F{g}[1]$. Since we know that $\bar{Q}$ squares to zero, the generalized Jacobi identities are satisfied, turning $\widetilde{\F{g}}$ into a Lie 2-algebra, which is equivalent to a differential crossed module (see for example \cite{Baez:2003fs}). 

We can further apply the process to $T[1](\F{g}[1])$ itself to obtain a Lie 3-algebra from $\widetilde{g}$: Let $\C{M}=T[1]\left(T[1](\F{g}[1])\right)$ be the shifted tangent bundle to $T[1](\F{g}[1])$ with local coordinates $\xi^{a}, \bar{\dd}\xi^{\alpha}$ and $D\xi^{\alpha},D\bar{\dd}\xi^{\alpha}$ of degree $1,2$ and $2,3$, respectively -- where $D$ is a notation for the de Rham differential on $T[1](\F{g}[1])$. $\C{M}$ becomes a NQ-manifold when equipped with the following homological vector field:
\begin{align}\label{liealgQ2}
\bar{\bar{Q}}&=\C{L}'_{\C{L}_{Q_{\F{g}}}+\bar{\dd}}+D\\
&=\frac{1}{2}f_{ab}{}^{c}\xi^{a}\xi^{b}\frac{\partial}{\partial\xi^{c}}-f_{ab}{}^{c}\bar{\dd}\xi^{a}\xi^{b}\frac{\partial}{\partial\bar{\dd}\xi^{c}}-f_{ab}{}^{c}D\xi^{a}\xi^{b}\frac{\partial}{\partial D\xi^{c}}+f_{ab}{}^{c}D\bar{\dd}\xi^{a}\xi^{b}\frac{\partial}{\partial D\bar{\dd}\xi^{c}}\nonumber\\
&\hspace{1cm}+f_{ab}{}^{c}\bar{\dd}\xi^{a}D\xi^{b}\frac{\partial}{\partial D\bar{\dd}\xi^{c}}+\bar{\dd}\xi^{a}\frac{\partial}{\partial \xi^{a}}-D\bar{\dd}\xi^{a}\frac{\partial}{\partial D\xi^{a}}+D\xi^{a}\frac{\partial}{\partial \xi^{a}}+D\bar{\dd}\xi^{a}\frac{\partial}{\partial \bar{\dd}\xi^{a}}\;,\nonumber
\end{align}
where $\C{L}'$ is the Lie derivative on $\C{M}$. As a graded manifold, $\C{M}$ is isomorphic to the direct sum $\bar{\bar{\F{g}}}=\F{g}[1]\oplus\F{g}[2]_{l}\oplus\F{g}[2]_{r}\oplus\F{g}[3]$, where $l$ and $r$ stand for \emph{left} and \emph{right}, respectively. $\F{g}[2]_{r}$ is associated to the degree 2 subspace of $\C{M}$ with coordinates $\bar{\dd}\xi^{a}$, whereas $\F{g}[2]_{l}$ is associated to the degree 2 subspace of $\C{M}$ with coordinates $D\xi^{a}$. Thus there exists two types of shift functors relating the left and right subalgebras $\F{g}[2]_{l}$ and $\F{g}[2]_{r}$ to $\F{g}[1]$ and $\F{g}[3]$ in the following commutative diagram:
\begin{align}
&\F{g}[1]\quad\overset{[-1]_{r}}{\xrightarrow{\hspace*{2cm}}}\quad\F{g}[2]_{r}\nonumber\\
[-1]_{l}&\Big\downarrow\hspace{3.6cm}\Big\downarrow [-1]_{r}\\
&\F{g}[2]_{l}\quad\overset{[-1]_{l}}{\xrightarrow{\hspace*{2cm}}}\quad\F{g}[3]\nonumber
\end{align}
and their respective inverse:
\begin{align}
&\F{g}[1]\quad\overset{[1]_{r}}{\xleftarrow{\hspace*{2cm}}}\quad\F{g}[2]_{r}\nonumber\\
[1]_{l}&\Big\uparrow\hspace{3.6cm}\Big\uparrow [1]_{r}\\
&\F{g}[2]_{l}\quad\overset{[1]_{l}}{\xleftarrow{\hspace*{2cm}}}\quad\F{g}[3]\nonumber
\end{align}
Using equations \eqref{pouf}, and \eqref{liealgQ2}, one obtains the following brackets on $\bar{\bar{\F{g}}}$:
\begin{align}
\llbracket y\rrbracket_{1} &= [1]_{r}y\\
\llbracket w\rrbracket_{1} &= [1]_{l}w\nonumber\\
\llbracket z\rrbracket_{1} &= [1]_{r}z-[1]_{l}z\nonumber\\
\llbracket x,x'\rrbracket_{2}&=\llbracket x,x'\rrbracket_{\F{g}[1]}\nonumber\\
\llbracket x,y\rrbracket_{2}&=[-1]_{r}\llbracket x,[1]_{r}y\rrbracket_{\F{g}[1]}\nonumber\\
\llbracket x,w\rrbracket_{2}&=[-1]_{l}\llbracket x,[1]_{l}w\rrbracket_{\F{g}[1]}\nonumber\\
\llbracket x,z\rrbracket_{2}&=[-2]\llbracket x,[2]z\rrbracket_{\F{g}[1]}\nonumber\\
\llbracket y,w \rrbracket_{2}&=[-2]\llbracket [1]_{r}y,[1]_{l}w\rrbracket_{\F{g}[1]}\nonumber
\end{align}
for all $x,x'\in\F{g}[1]$, $y\in\F{g}[2]_{r}$, $w\in\F{g}[2]_{l}$ and $z\in\F{g}[3]$, all other brackets being zero. From this set of equations and Equation\eqref{crochetsbrackets}, one deduces the brackets on the desuspended version of $\bar{\bar{g}}$, denoted by $\widetilde{\widetilde{g}}\equiv\F{g}\oplus\F{g}[1]_{l}\oplus\F{g}[1]_{r}\oplus\F{g}[2]$ (with canonical shift functors as above):
\begin{align}\label{pouf3b}
[ v]_{1} &= [1]_{r}v\\
[ r]_{1} &= [1]_{l}r\nonumber\\
[ s]_{1} &= [1]_{r}s-[1]_{l}s\nonumber\\
[ u,u']_{2}&=[ u,u']_{\F{g}}\nonumber\\
[ u,v]_{2}&=[-1]_{r}[ u,[1]_{r}v]_{\F{g}}\nonumber\\
[ u,r]_{2}&=[-1]_{l}[ u,[1]_{l}r]_{\F{g}}\nonumber\\
[ u,s]_{2}&=[-2][ u,[2]s]_{\F{g}}\nonumber\\
[ v,r ]_{2}&=-[-2][ [1]_{r}v,[1]_{l}r]_{\F{g}}\nonumber
\end{align}
for all $u,u'\in\F{g}$, $v\in\F{g}[1]_{r}$, $r\in\F{g}[1]_{l}$ and $s\in\F{g}[2]$, all other brackets being zero. Knowing that $\bar{\bar{Q}}$ squares to zero, it ensures that the Jacobi identities are satisfied, thus endowing $\widetilde{\widetilde{\F{g}}}$ with a Lie 3-algebra structure.

%--------------------------------------------------------Bibliography-------------------------------------------------------------------%

%\bibliography{refs}
%\bibliographystyle{alphasorteprint}
%\end{document}

\newcommand{\etalchar}[1]{$^{#1}$}

\end{document}